\documentclass[11pt]{article}

\usepackage[letterpaper,margin=1in]{geometry}
\usepackage{amsthm,amsmath,amsfonts,amssymb}
\usepackage[authoryear]{natbib}
\usepackage{graphicx}
\usepackage{bm}
\usepackage{booktabs}
\usepackage{array}
\usepackage{multirow}
\usepackage{adjustbox}
\usepackage{xcolor}
\usepackage{siunitx}
\usepackage{microtype}
\usepackage{hyperref}
\usepackage{url}
\hypersetup{hidelinks}

\graphicspath{{./}{images/}{GMM_PLOTS/}}

\theoremstyle{plain}
\newtheorem{theorem}{Theorem}
\newtheorem{lemma}{Lemma}
\newtheorem{proposition}{Proposition}

\theoremstyle{definition}
\newtheorem{assumption}{Assumption}

\newcolumntype{P}[1]{>{\centering\arraybackslash}p{#1}}

\newcommand{\E}{\mathbb{E}}
\newcommand{\Var}{\mathrm{Var}}

\newcommand{\red}[1]{\textcolor{black}{#1}}

\providecommand{\BigTblFont}{\scriptsize}
\providecommand{\BigTblScale}{1.00}
\providecommand{\BigTblWidth}{1.00\linewidth}

\providecommand{\DistW}{18mm}

\title{Semiparametric Regression Based on Zero Altered Models}
\author{%
Sujit K. Ghosh \quad Timothy Boakye \quad Kimberly F. Sellers\\[0.6em]
\small Department of Statistics, NC State University\\
\small \texttt{sujit\_ghosh@ncsu.edu} \quad
\texttt{tboakye@ncsu.edu} \quad
\texttt{kimberly\_sellers@ncsu.edu}}
\date{}

\begin{document}

\maketitle

\begin{abstract}
We develop a class of semiparametric zero-altered models (ZAMs) for count data, estimated through the generalized method of moments (GMM). The proposed framework extends the utility of ZAMs by allowing for both under- and over-dispersion while preserving support on the entire set of natural numbers. This offers an alternative to other models that can experience computational challenges (e.g. Conway-Maxwell-Poisson) or whose supports can be truncated as functions of the parameter space (e.g. generalized Poisson). By formulating estimation through moment restrictions rather than full likelihood specification, the GMM approach provides substantial robustness against model misspecification. In particular, inference based on likelihood methods for parametric generalized linear models (GLMs)—such as those built on Poisson, negative binomial, or related count distributions—can become unreliable when the assumed distributional form is violated, often leading to biased estimates and invalid inference. In contrast, our methodology requires only the first two moments of the distribution, ensuring consistent estimation even when higher-order distributional features are misspecified. We establish the asymptotic properties of the estimators, including consistency and asymptotic normality, under general regularity conditions. The flexibility of the approach is illustrated through both simulated experiments and real-data case studies, demonstrating superior performance relative to popular parametric GLMs, particularly in settings characterized by distributional misspecification. To facilitate practical adoption, we provide accompanying R functions for easy implementation, enabling practitioners to apply the proposed methods in applied settings without requiring bespoke programming.
\end{abstract}

\noindent\textbf{Keywords:} Count data; Generalized method of moments; Semiparametric regression; Under-dispersion; Over-dispersion

\section{Introduction}
\label{intro}

Count data arise naturally in a wide range of scientific disciplines, including epidemiology, ecology, biostatistics, insurance, and econometrics, where the response of interest represents a nonnegative integer outcome such as the number of disease occurrences, insurance claims, species counts, or equipment failures. The regression modeling of such data has traditionally been carried out within the generalized linear model (GLM) framework \citep{McCullaghNelder1989}. The classical GLM for counts is the Poisson regression model, where the response variable $Y_i$ is assumed to follow a Poisson distribution with mean $\mu_i$, linked to covariates $x_i$ through a log-link function, i.e., $\log(\mu_i) = x_i^\top \beta$. This formulation provides a parsimonious and computationally tractable framework, and has become the starting point for most count regression analyses.

However, the Poisson model has a critical limitation: it assumes that the variance of $Y_i$ equals its mean, \red{i.e.} $\text{Var}(Y_i) = \mu_i$. Empirical data frequently depart from this assumption, exhibiting either over-dispersion (variance larger than the mean) or under-dispersion (variance smaller than the mean). Over-dispersion is particularly common, arising from unobserved heterogeneity, excess zeros, or latent clustering \citep{CameronTrivedi2013, Hilbe2011}. Under-dispersion, while less frequently encountered, occurs in structured processes such as count outcomes constrained by biological or physical mechanisms  \citep{Puig2024}.

To address over-dispersion, a variety of extensions of the Poisson based GLM have been proposed. The most widely used is the negative binomial \red{(e.g., NB2)} regression model, which introduces a gamma-distributed random effect into the Poisson mean, yielding $\text{Var}(Y_i) = \mu_i + \alpha \mu_i^2$ for a dispersion parameter $\alpha > 0$ \citep{Lawless1987, CameronTrivedi1998}. Other flexible alternatives include the Poisson–inverse Gaussian \red{(PIG)} model \citep{Willmot1987}, which similarly accommodates quadratic over-dispersion, and finite mixture Poisson models \citep{Titterington1985}, which capture heterogeneity via latent mixture components.

For under-dispersed counts, several parametric distributions have been explored; \red{see \cite{Sellers2017} for details}. The \red{most notable model that allows for under-dispersion is} the Conway–Maxwell–Poisson (COM–Poisson) distribution \red{\citep{Sellers2023}}, which generalizes the Poisson by introducing a parameter $\nu$ that controls dispersion; $\nu>1$ corresponds to under-dispersion while $\nu<1$ induces over-dispersion \citep{Shmueli2005}. The COM–Poisson regression model has attracted attention due to its ability to interpolate between a wide range of dispersion structures \red{\citep{Sellers2010}}. However, underlying distribution  requires the evaluation of complex normalization constants, complicating practical use. \red{Other} models \red{that allow for under-dispersion such as} the generalized Poisson \citep{ConsulJain1973} and Poisson–binomial families \citep{HongBalakrishnan2013} \red{often entail truncation of support which likewise induces computing and analytic challenges}.


Despite their usefulness, these parametric GLM extensions suffer from several vulnerabilities. First, the likelihood-based estimation procedures inherent in GLMs require correct specification of the distributional form. When the chosen distribution fails to adequately represent the data—whether due to misspecified dispersion, incorrect zero-probability mass, or unsupported tail behavior—parameter estimates may be biased and inferential procedures can become invalid \citep{Efron1986, White1982}. Second, many under-dispersed families impose implicit truncation of the support depending on parameter values, which can complicate interpretation and practical application. Third, likelihood-based methods often entail substantial computational complexity, particularly for models requiring normalization constants (e.g. the COM–Poisson), limiting their scalability in large-sample or high-dimensional applications.

The recognition of these limitations has motivated a growing interest in semiparametric and moment-based approaches that relax stringent distributional assumptions. Generalized method of moments (GMM) estimation \citep{Hansen1982} provides a natural framework, as it relies on low-order moment conditions—typically the mean and variance—rather than full specification of the likelihood. By focusing only on the first two moments, GMM-based estimators remain consistent even when higher-order distributional features are misspecified, thereby offering robustness to model misspecification. This robustness is particularly appealing in the context of count regression, where the true generating mechanism is often complex, heterogeneous, and not easily captured by any single parametric distribution.

A particularly compelling semiparametric framework is the class of zero-altered models (ZAMs) by \citet{Ghosh2007}
. They propose a flexible semiparametric family that accommodates both under- and over‐dispersion; maintains support on the entire set of natural numbers; restricts the dispersion parameter $\delta$ to $(-1,1)$, without truncation complications; and ensures consistency and asymptotic normality of the estimator under general regularity conditions. Building on this foundation, we introduce a ZAM–GMM regression framework for count data. 
Specifically, we embed the zero-altered class into a regression setting where both the conditional mean and the dispersion parameter are linked to covariates. Estimation proceeds via GMM based solely on moment restrictions—most notably the mean and variance—offering robustness to likelihood misspecification and computational simplicity relative to parametric GLMs. We establish the asymptotic properties of the proposed estimators, including consistency and asymptotic normality, under broad regularity conditions. Through both simulation experiments and real-data applications, we demonstrate the practical advantages of our approach, highlighting its superior performance relative to popular parametric GLMs, especially under distributional misspecification. 

Recent years have seen growing interest in semiparametric and moment-based methods for discrete outcomes\red{, e.g.} semiparametric zero-inflated or augmented count models that reduce reliance on parametric assumptions \citep{He2015,Tang2015,Bertoli2021}. \red{Few models, however,} accommodate both under‐ and over‐dispersion \red{with computational ease} while preserving full support. The ZAM framework provides an extremely flexible class of models that can accommodate both under- and over-dispersion while maintaining support on the full set of natural numbers, thereby avoiding the truncation issues present in many under-dispersed parametric models. Coupling ZAMs with the GMM estimation strategy enables robust and computationally efficient inference, alleviating many of the vulnerabilities of likelihood-based GLMs. Accordingly, our approach fills \red{these} gap by combining the flexibility of ZAMs with the robustness of GMM, providing a unified, practically tractable framework.

The remainder of the paper is organized as follows. Section 2 introduces the class of \red{ZAMs} and the associated regression framework. Section 3 develops the GMM estimation procedure and theoretical properties. Section 4 presents simulation studies, while Section 5 illustrates applications to real data. Section 6 concludes with a discussion of practical implications and potential future extensions.

Technical derivations, sufficient conditions for the moment and envelope assumptions, proofs of the theoretical results, notation, the two step GMM algorithm, and simulation data generating distributions are collected in Appendices~\ref{app:quadratic_variance_derivation}--\ref{app:simulation_dgps}.

\section{Regression Framework for Zero--Altered Models}
\label{zamreg}

The zero--altered models (ZAMs) introduced by \citet{Ghosh2007} provide a semiparametric framework for inference with count data that extends beyond the limitations of standard parametric families. Unlike Poisson, negative binomial, or COM--Poisson regression models, which require explicit parametric specification of the probability mass function, ZAMs are defined through their first two moments and maintain support on the entire set of natural numbers $\mathbb{N}=\{0,1,2,\dots\}$. This feature is particularly valuable for modeling under--dispersed data, where many parametric models implicitly truncate the support depending on parameter values.

\subsection{Zero--altered model framework}

Let $f_0$ be an equi-dispersed baseline probability mass function (pmf) on $\mathbb{N}$ with
\[
\mu_0 := \mathbb{E}_{f_0}[U],
\qquad
\mathrm{Var}_{f_0}(U)=\mu_0,
\qquad
\pi_0 := f_0(0).
\]

For a dispersion index $\delta\in(-1,1)$, define the zero--altered pmf (ZA transform) \citep[Eq.~(1)]{Ghosh2007}
\begin{equation}\label{eq:ZApmf}
f_\delta(y)=
\begin{cases}
\delta_+^{\,2} + (1-\delta^{2})\,\pi_0, & y=0,\\[4pt]
\omega(\delta,\pi_0)\, f_0(y), & y\ge 1,
\end{cases}
\qquad
\delta_+=\max\{\delta,0\},\quad
\delta_-=\max\{-\delta,0\},
\end{equation}
where the mass preserving weight on the positive counts is
\begin{equation}\label{eq:omega}
\omega(\delta,\pi_0)
=
1-\delta_+^{\,2}+\delta_-^{\,2}\frac{\pi_0}{1-\pi_0},
\qquad
\omega(\delta,\pi_0)>0.
\end{equation}

Under \eqref{eq:ZApmf}, the mean and variance satisfy
\begin{equation}\label{eq:ZA_moments}
\mu=\mathbb{E}_{f_\delta}[Y]=\omega\mu_0,
\qquad
\mathrm{Var}_{f_\delta}(Y)=\omega\mu_0+\frac{1-\omega}{\omega}\mu^2.
\end{equation}
Since $f_0$ is assumed equi-dispersed, that is, $\mathrm{Var}_{f_0}(U)=\mu_0$, \eqref{eq:ZA_moments} can be written in the quadratic variance form
\begin{equation}\label{eq:quadvar}
\mathrm{Var}_{f_\delta}(Y)=\mu+\phi\mu^2,
\qquad
\phi=\frac{1-\omega}{\omega}\in(-1,\infty).
\end{equation}
Here, $\phi<0$ corresponds to under--dispersion, $\phi=0$ to equi-dispersion , and $\phi>0$ to over--dispersion. Equivalently, under the equi-dispersed baseline assumption, $\delta<0$ implies under--dispersion, $\delta=0$ implies equi-dispersion , and $\delta>0$ implies over--dispersion. When $\delta=0$, we have $\omega=1$ and $f_\delta=f_0$, so if the baseline $f_0$ is Poisson, the ZAM reduces to the Poisson model. The ZA transform preserves the full support of $f_0$ on $\mathbb{N}$ provided $f_0(y)>0$ for all $y\in\mathbb{N}$ and $\omega(\delta,\pi_0)>0$.

As shown by \citet{Ghosh2007}, any discrete distribution with a finite second moment can be represented in the form \eqref{eq:ZApmf} for suitable choices of the baseline distribution \(f_0\), its zero probability \(\pi_0\), and the dispersion index \(\delta\). Moreover, a GMM approach allows \(\delta\) to be estimated consistently without explicitly estimating \(f_0\). Thus, \(f_0\) can be treated as an infinite-dimensional nuisance parameter, which gives the ZAM framework its semiparametric nature.

\subsection{Regression Models Framework}

To extend ZAMs to a regression framework, we model both the conditional mean and the dispersion index as functions of covariates. Specifically, let $Y_i$ denote the response for subject $i=1,\ldots,n$, and let $X_i\in\mathbb{R}^p$ be a $p\times 1$ vector of covariates. Let
\[
\theta=(\beta^\top,\gamma^\top)^\top,
\qquad
\beta,\gamma\in\mathbb{R}^p.
\]

Define the inverse link functions for the mean and dispersion parameters,
\begin{equation}\label{eq:links_reg}
\mu_i(\beta)=g(X_i^\top\beta),
\qquad
1+\phi_i(\gamma)=h(X_i^\top\gamma),
\end{equation}
where $g,h:(-\infty,\infty)\to(0,\infty)$ are strictly increasing. Hence,
\[
\phi_i(\gamma)>-1
\qquad
\text{for all } i \text{ and any } \gamma\in\mathbb{R}^p.
\]

A convenient choice is $g(u)=\exp(u)$ and $h(u)=\exp(u)$, which gives
\[
\mu_i=\exp(X_i^\top\beta),
\qquad
\phi_i=\exp(X_i^\top\gamma)-1\in(-1,\infty).
\]

Combining \eqref{eq:links_reg} with \eqref{eq:quadvar} gives the conditional variance
\begin{equation}\label{eq:conditional_variance}
\mathrm{Var}(Y_i\mid X_i)
=
\mu_i+\phi_i\mu_i^2,
\qquad
\phi_i=h(X_i^\top\gamma)-1.
\end{equation}
This formulation allows covariates to affect dispersion, which is not available in the standard Poisson GLM or various extension to over-dispersed models.

\subsection{Relation to GLMs}

The proposed ZAM regression can be viewed as a semiparametric extension of GLMs:
\begin{enumerate}
    \item The mean structure follows the usual GLM specification, linking $\mu_i$ to covariates through the inverse mean function $g(\cdot)$.

    \item The variance is not restricted to equal the mean, nor to follow a fully specified parametric family. Instead, it arises from the ZA moment structure and is modeled as
    \begin{equation*}
    \mathrm{Var}(Y_i\mid X_i)=\mu_i+\phi_i\mu_i^2,
    \end{equation*}
    which accommodates both under-dispersion and over-dispersion depending on $\phi_i$, while leaving the baseline pmf otherwise unspecified, provided the baseline $f_0$ is equi-dispersed.
\end{enumerate}

In contrast to likelihood based GLMs, estimation is carried out using GMM. This approach uses the ZAM moment conditions for $\mathbb{E}(Y_i)$ and $\mathrm{Var}(Y_i)$, or equivalently the factorial moment $\mathbb{E}[Y_i(Y_i-1)]$. Thus, the estimating equations rely only on the first two conditional moments, although standard errors and asymptotic normality require additional regularity conditions, such as suitable finite higher order moments. The result is a robust semiparametric regression framework that is less sensitive to full likelihood misspecification than parametric GLMs, provided the stated moment conditions are correctly specified.

\subsection{Advantages of the ZAM regression approach}

The ZAM regression framework offers several advantages:
\begin{enumerate}
    \item[(i)] \textbf{Flexibility in dispersion:} Both over-dispersion and under-dispersion are naturally accommodated, with dispersion linked to covariates via \eqref{eq:links_reg}.

    \item[(ii)] \textbf{Full support:} Provided the baseline $f_0$ has full support on $\mathbb{N}$ and $\omega(\delta,\pi_0)>0$, the ZA pmf \eqref{eq:ZApmf} preserves that support for all admissible $\delta$. At the moment level, the variance model remains well defined for all $\phi>-1$. This avoids the truncation issues of many under-dispersed parametric families.

    \item[(iii)] \textbf{Semiparametric robustness:} Inference relies only on low order moment conditions, namely the mean and second factorial moment, which helps protect against likelihood misspecification.

    \item[(iv)] \textbf{Generality:} Classical GLMs, such as Poisson regression, are nested as special cases by taking $\phi_i\equiv 0$, so the framework unifies parametric and semiparametric count regression.

    \item[(v)] \textbf{Relation to quasi likelihood methods:}
Like quasi Poisson and other quasi likelihood approaches, the proposed method does not require a fully specified probability distribution. However, the ZAM framework models dispersion directly as a function of covariates, allowing the amount and direction of dispersion to vary across observations. This gives a more flexible interpretation of under-dispersion, equi-dispersion , and over-dispersion within one framework \citep{Wedderburn1974,Smyth1989}.
\end{enumerate}

Thus, regression models based on ZAMs provide a natural extension of GLMs, with greater robustness and flexibility for count data analysis.

\section{Estimation via the Generalized Method of Moments}
We now develop the estimation procedure for regression models based on zero altered models (ZAMs) using the generalized method of moments (GMM). The GMM approach provides a semiparametric estimation strategy that relies only on the low order moment conditions implied by the ZAM framework, and so avoids the need for full distributional specification. This yields estimators that are both flexible and robust to full likelihood misspecification when the moment conditions are correctly specified. In addition, the ZAM framework gives the dispersion regression a direct interpretation through the underlying ZA parameters $\omega$ and $\delta$, which describe how probability mass is altered at zero and across the positive counts.

\subsection{Moment conditions}

Recall that, for observation $i$ with covariate vector $X_i$, the conditional mean and dispersion factor are defined in \eqref{eq:links_reg}, and the conditional variance is given in \eqref{eq:conditional_variance}.

Using the factorial moment identity,
\[
\mathbb{E}[Y_i(Y_i-1)\mid X_i]
=
\mathrm{Var}(Y_i\mid X_i)+\mu_i^2-\mu_i,
\]
we have
\begin{equation*}
\mathbb{E}\!\big[Y_i(Y_i-1)\mid X_i\big]
=
\mu_i(\beta)^2\{1+\phi_i(\gamma)\},
\end{equation*}
where $\mu_i(\beta)$ and $\phi_i(\gamma)$ are defined in \eqref{eq:links_reg}. This identity provides the observable second factorial moment used below.

The ZAM framework implies the following conditional moment restrictions:
\begin{align}
\mathbb{E}\!\left[Y_i-\mu_i(\beta)\mid X_i\right] &= 0,
\label{eq:mom1}\\
\mathbb{E}\!\left[
Y_i(Y_i-1)-\mu_i(\beta)^2\{1+\phi_i(\gamma)\}
\mid X_i
\right] &= 0.
\label{eq:mom2}
\end{align}
where $\mu_i(\beta)$ and $\phi_i(\gamma)$ are defined in \eqref{eq:links_reg}.

For the basic implementation, we use $X_i$ itself as the instrument vector. Multiplying the two residuals by $X_i$ gives
\[
\mathbb{E}\!\big[X_i\{Y_i-\mu_i(\beta)\}\big]=0,
\]
and
\[
\mathbb{E}\!\big[
X_i\{Y_i(Y_i-1)-\mu_i(\beta)^2(1+\phi_i(\gamma))\}
\big]=0.
\]

Stacking the two scalar residuals yields, for observation $i$, the raw residual moment vector
\[
m_i(\theta)=
\begin{pmatrix}
Y_i-\mu_i(\beta)\\[4pt]
Y_i(Y_i-1)-\mu_i(\beta)^2\{1+\phi_i(\gamma)\}
\end{pmatrix},
\qquad
\mathbb{E}[m_i(\theta_0)\mid X_i]=0,
\]
where $\theta_0$ denotes the true parameter value. The corresponding instrumented GMM moment vector is
\[
\psi_i(\theta)=
\begin{pmatrix}
X_i\{Y_i-\mu_i(\beta)\}\\[4pt]
X_i\{Y_i(Y_i-1)-\mu_i(\beta)^2(1+\phi_i(\gamma))\}
\end{pmatrix}
\in\mathbb{R}^{2p},
\qquad
\mathbb{E}\{\psi_i(\theta_0)\}=0.
\]

This choice uses the same covariates in both the mean and dispersion moment equations, in line with the ZAM regression setup above. When $X_i\in\mathbb{R}^p$ and $\theta\in\mathbb{R}^{2p}$, this gives $2p$ moment conditions for $2p$ parameters, so the basic implementation is just identified \citep{Hansen1982}.

More generally, other valid functions of $X_i$ may be used as instruments. Additional valid instruments can produce a GMM system with at least as many moment conditions as parameters, so that $s\ge 2p$, with $s>2p$ corresponding to an overidentified model \citep{Chamberlain1987,Newey1990}, where $s$ denotes the dimension of the resulting GMM moment vector, or equivalently, the total number of moment conditions generated by the chosen instruments.

\subsection{Asymptotic properties and two step GMM estimation}

We now study the consistency and asymptotic normality of the GMM estimator for the ZAM regression model. Following the established large sample theory of GMM \citep{Newey1990}, the argument relies on the first two conditional factorial moments, smooth monotone link functions, finite moment conditions, identification, and full rank design assumptions. Thus, the theory applies to count regression models whose first two conditional factorial moments satisfy the ZAM moment restrictions and whose moment functions satisfy the regularity conditions below.

Throughout this section, we use $\psi(\mathcal U_i,\theta)$ for the GMM moment vector, $S_0$ for the covariance matrix of the moment vector at the true parameter value, and $W_n$ for the GMM weighting matrix.

\subsubsection*{Model, link functions, and moment restrictions}

Let
\(
\mathcal U_i := \big(Y_i,\;X_i^\top\big)^\top
\in\mathbb R^{1+p},
\)
where $X_i\in\mathbb R^p$ is the regression covariate used in both the mean and dispersion links.

Although separate design vectors could be used for the mean and dispersion parts, in the varying dispersion model we are interested in determining the effects of the available covariates on both the conditional mean and the dispersion. We therefore use the same available covariates in both $\mu_i(\beta)$ and $\phi_i(\gamma)$. The Wald tests developed later can then be used to determine which covariates significantly influence the mean and dispersion.

At the true value
\(
\theta_0=(\beta_0^\top,\gamma_0^\top)^\top,
\)
the first two conditional factorial moments satisfy
\begin{equation}\label{eq:M}
\mathbb E(Y_i\mid X_i)=\mu_i(\beta_0),
\qquad
\mathbb E\!\big[Y_i(Y_i-1)\mid X_i\big]
=
\mu_i(\beta_0)^2\{1+\phi_i(\gamma_0)\},
\end{equation}
where $\mu_i(\beta)$ and $\phi_i(\gamma)$ are defined in \eqref{eq:links_reg}.

Define the residuals
\begin{equation*}
r_{1i}(\beta)=Y_i-\mu_i(\beta),
\qquad
r_{2i}(\beta,\gamma)
=
Y_i(Y_i-1)-\mu_i(\beta)^2\{1+\phi_i(\gamma)\}.
\end{equation*}

Using $X_i$ as the instrument vector in both moment blocks, define the GMM moment function
\begin{equation}\label{eq:momentvector}
\psi(\mathcal U_i,\theta)=
\begin{bmatrix}
X_i\,r_{1i}(\beta)\\[2mm]
X_i\,r_{2i}(\beta,\gamma)
\end{bmatrix}
\in\mathbb R^{2p},
\qquad
\bar{\psi}_n(\theta)
=
\frac{1}{n}\sum_{i=1}^n
\psi(\mathcal U_i,\theta).
\end{equation}

At the true parameter value,
\[
\mathbb E\{\psi(\mathcal U_i,\theta_0)\}=0.
\]

Let $W_n$ be a positive definite weighting matrix. The GMM estimator is obtained as
\begin{equation}\label{eq:estimator}
\hat\theta_n
=
\arg\min_{\theta\in\Theta} Q_n(\theta),
\qquad
Q_n(\theta)
=
\bar\psi_n(\theta)^\top
W_n
\bar\psi_n(\theta).
\end{equation}

For later use, write
\[
\Psi(\theta)
=
\mathbb E[\psi(\mathcal U,\theta)],
\qquad
S(\theta)
=
\mathbb E[
\psi(\mathcal U,\theta)
\psi(\mathcal U,\theta)^\top
],
\qquad
S_0=S(\theta_0).
\]

For the instrument choice in \eqref{eq:momentvector}, the number of moment conditions is $s=2p$ and the number of parameters is $d=2p$. Thus, the basic implementation has $s=d$. More generally, standard GMM theory allows additional valid instruments, giving $s\ge d$, with $s>d$ corresponding to overidentification \citep{Hansen1982}.

\subsubsection*{Regularity assumptions}

For the asymptotic results below, we impose the additional smoothness condition that the inverse link functions $g$ and $h$ defined in \eqref{eq:links_reg} are twice continuously differentiable.

Throughout this subsection, $\|\cdot\|_2$ denotes the Euclidean norm. In particular, for any vector
$v=(v_1,\ldots,v_d)^\top\in\mathbb R^d$,
\[
\|v\|_2
=
\left(
\sum_{j=1}^d v_j^2
\right)^{1/2}.
\]

\begin{assumption}[Sampling]\label{ass:A1}
$\{\mathcal U_i\}_{i=1}^n$ are i.i.d.\ observations from a distribution satisfying the moment restrictions in \eqref{eq:M}.
\end{assumption}

\begin{assumption}[Moments and envelopes]\label{ass:M}
The following conditions hold.

\begin{enumerate}
    \item[(i)] There exists $\delta_X>0$ such that
    \[
    \mathbb E\|X\|_2^{2+\delta_X}<\infty.
    \]

    \item[(ii)] The response satisfies
    \[
    \mathbb E(Y^4)<\infty.
    \]

    \item[(iii)] There exist envelope functions $M_0(\mathcal U)$ and $M_1(\mathcal U)$ such that
    \[
    \sup_{\theta\in\Theta}
    \|\psi(\mathcal U,\theta)\|_2
    \le M_0(\mathcal U),
    \]
    and
    \[
    \sup_{\theta\in\Theta}
    \left\|
    \frac{\partial\psi(\mathcal U,\theta)}
    {\partial\theta^\top}
    \right\|_2
    \le M_1(\mathcal U),
    \]
    with
    \[
    \mathbb E[M_0(\mathcal U)^2]<\infty,
    \qquad
    \mathbb E[M_1(\mathcal U)]<\infty.
    \]
\end{enumerate}
\end{assumption}

\begin{assumption}[Parameter space and positivity]\label{ass:A3}
The parameter space $\Theta$ is compact and
$\theta_0\in\mathrm{int}(\Theta)$.

For all admissible $\beta$ and $\gamma$,
\[
g(X_i^\top\beta)>0,
\qquad
h(X_i^\top\gamma)>0
\]
almost surely. When uniform lower bounds are needed, we assume that there exist constants $c_g>0$ and $c_h>0$ such that
\[
\inf_{\beta\in\Theta_\beta}
g(X_i^\top\beta)\ge c_g,
\qquad
\inf_{\gamma\in\Theta_\gamma}
h(X_i^\top\gamma)\ge c_h
\]
almost surely, where \(\Theta_\beta=\pi_\beta(\Theta)\) and
\(\Theta_\gamma=\pi_\gamma(\Theta)\) denote the projections of \(\Theta\)
onto the mean and dispersion parameter spaces, respectively.
\end{assumption}

\begin{assumption}[Design]\label{ass:A4}
For all \(i=1,\ldots,n\),
\(
\mathbb E[X_iX_i^\top]\succ 0,
\)
that is, \(\mathbb E[X_iX_i^\top]\) is positive definite.
\end{assumption}

\begin{assumption}[Identification and weighting matrix]\label{ass:A5}
The population moment condition is uniquely solved at the true parameter:
\[
\Psi(\theta)=0
\quad\Longrightarrow\quad
\theta=\theta_0.
\]

Also,
\[
W_n\xrightarrow{p}W_0\succ0.
\]

For the optimal two step GMM estimator,
\[
W_0=S_0^{-1}.
\]
\end{assumption}

\paragraph*{Remark.}
Assumption~\ref{ass:M} collects the main moment conditions. The fourth moment of $Y$ is used because the second block of the moment vector contains $Y_i(Y_i-1)$. Therefore, the covariance matrix of the moment vector involves fourth order moments of $Y_i$.

\paragraph*{What the moment and envelope assumptions guarantee.}
\label{Ass2_guarantee}

The moment and envelope assumptions ensure suitable integrability of the moment functions and their derivatives. These conditions support a uniform law of large numbers for $\bar\psi_n(\theta)$, convergence of its derivative, and a multivariate central limit theorem for
$\sqrt n\,\bar\psi_n(\theta_0)$. Conditions of this type are standard in the large sample theory of GMM and conditional moment estimation \citep{Chamberlain1987,Newey1990}.

We now provide a series of technical results establishing the large sample
properties and inferential procedures for the proposed GMM estimator.

\subsubsection*{Derivatives and block Jacobian}

Throughout this section, derivatives of scalar residuals with respect to $\beta^\top$ and $\gamma^\top$ are written as row vectors.

\subsubsection*{Population Jacobian}

\begin{lemma}[Jacobian]\label{lem:jacobian}
Under Assumptions~\ref{ass:M}--\ref{ass:A4}, the population Jacobian
\[
G
=
\left.
\mathbb E\!\left[
\frac{\partial\psi(\mathcal U,\theta)}
{\partial\theta^\top}
\right]
\right|_{\theta=\theta_0}
\]
has the block form
\begin{equation}\label{eq:Gblock}
G
=
-\begin{bmatrix}
\mathbb E[
Xg'(X^\top\beta_0)X^\top
]
&
0
\\[2mm]
\mathbb E[
2Xg(X^\top\beta_0)
g'(X^\top\beta_0)
h(X^\top\gamma_0)X^\top
]
&
\mathbb E[
Xg(X^\top\beta_0)^2
h'(X^\top\gamma_0)X^\top
]
\end{bmatrix}.
\end{equation}
\end{lemma}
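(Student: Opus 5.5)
The plan is a direct pointwise differentiation of the moment function \eqref{eq:momentvector}, followed by an exchange of expectation and differentiation justified by the envelope condition in Assumption~\ref{ass:M}(iii). Write $\eta_\beta=X^\top\beta$ and $\eta_\gamma=X^\top\gamma$. By \eqref{eq:links_reg}, $\mu(\beta)=g(\eta_\beta)$ and $1+\phi(\gamma)=h(\eta_\gamma)$. Hence the second residual can be written as
\[
r_2(\beta,\gamma)=Y(Y-1)-g(\eta_\beta)^2h(\eta_\gamma).
\]
This rewriting is what makes the computation clean: the factor $1+\phi$ is exactly $h$, so no derivative of $\phi$ beyond $h'$ appears. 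Since $g$ and $h$ are assumed twice continuously differentiable, the map $\theta\mapsto\psi(\mathcal U,\theta)$ is $C^1$ (indeed $C^2$) for every realization of $\mathcal U$.

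Next I compute the four blocks of $\partial\psi/\partial\theta^\top$, using the row-vector convention and the chain rule $\partial\eta_\beta/\partial\beta^\top=X^\top$.
\begin{itemize}
\item For the first block, $\partial r_1/\partial\beta^\top=-g'(\eta_\beta)X^\top$ and $\partial r_1/\partial\gamma^\top=0$. Multiplying by the instrument $X$ gives $-Xg'(\eta_\beta)X^\top$ and a zero block.
\item For the second block, the product rule gives $\partial r_2/\partial\beta^\top=-2g(\eta_\beta)g'(\eta_\beta)h(\eta_\gamma)X^\top$ and $\partial r_2/\partial\gamma^\top=-g(\eta_\beta)^2h'(\eta_\gamma)X^\top$.
\end{itemize}
Multiplying each row block by $X$ yields the random matrix whose entries are exactly the integrands in \eqref{eq:Gblock}, with the common minus sign factored out. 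Notably, the $Y$-dependence disappears from the derivative, because $Y$ enters the residuals only additively.

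The remaining, and only non-routine, step is to pass to the expectation and evaluate at $\theta_0$, and to confirm that each block expectation exists. The definition of $G$ as $\mathbb E[\partial\psi/\partial\theta^\top]$ at $\theta_0$ needs only integrability at $\theta_0$. This follows from Assumption~\ref{ass:M}(iii): $\|\partial\psi/\partial\theta^\top\|_2\le M_1(\mathcal U)$ with $\mathbb E M_1<\infty$. Every block is a submatrix of this Jacobian, so each is bounded in norm by $M_1$ and is therefore integrable. Linearity of expectation then lets us take expectations blockwise, which gives \eqref{eq:Gblock}.

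If one also wants $G$ to equal $\partial\Psi(\theta)/\partial\theta^\top|_{\theta_0}$, as is needed later for the mean value expansion, I would invoke dominated convergence with the same envelope $M_1$. Combined with the mean value theorem applied along segments inside $\Theta$, which is legitimate since $\theta_0\in\mathrm{int}(\Theta)$ by Assumption~\ref{ass:A3}, this justifies differentiating under the integral sign. Assumption~\ref{ass:A4} and the positivity of $g$ and $h$ are not needed for the formula itself; they are relevant only later, for the rank of $G$.
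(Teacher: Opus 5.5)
Your proposal is correct and follows essentially the same route as the paper: rewrite $r_2$ as $Y(Y-1)-g(X^\top\beta)^2h(X^\top\gamma)$, differentiate each residual blockwise by the chain and product rules, multiply by the instrument $X$, and take expectations at $\theta_0$. Your extra step, using the envelope $M_1$ from Assumption~\ref{ass:M}(iii) to show each block is integrable (with the dominated-convergence remark for $\partial\Psi/\partial\theta^\top$), makes explicit a detail the paper leaves implicit.
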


\begin{lemma}[Identification]\label{lem:ident}
Under Assumptions~\ref{ass:M} and \ref{ass:A4}, together with the strict monotonicity of $g$ and $h$, the population moment condition satisfies
\[
\Psi(\theta)=0
\quad\text{if and only if}\quad
\theta=\theta_0.
\]
\end{lemma}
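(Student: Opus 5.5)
The plan exploits the triangular structure of the moment system. The mean block depends only on $\beta$, so we first identify $\beta_0$ from it. With $\beta_0$ fixed, the second block is then used to identify $\gamma_0$.

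The "if" direction is immediate. At $\theta_0$, iterated expectations with \eqref{eq:M} give $\mathbb E[X r_1(\beta_0)]=\mathbb E[X\,\mathbb E(r_1\mid X)]=0$, and likewise for $r_2$. Integrability of these expressions is guaranteed by Assumption~\ref{ass:M}.

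For the converse, suppose $\Psi(\theta)=0$. Iterating expectations in the first block gives
\[
\mathbb E\big[X\{g(X^\top\beta_0)-g(X^\top\beta)\}\big]=0 .
\]
Take the inner product with $\beta_0-\beta$, i.e.\ multiply on the left by $(\beta_0-\beta)^\top$. This yields
\[
\mathbb E\big[(X^\top\beta_0-X^\top\beta)\{g(X^\top\beta_0)-g(X^\top\beta)\}\big]=0 .
\]
Since $g$ is strictly increasing, the integrand is nonnegative and vanishes only where $X^\top(\beta_0-\beta)=0$. Hence $X^\top(\beta_0-\beta)=0$ almost surely, so
\[
(\beta_0-\beta)^\top\mathbb E[XX^\top](\beta_0-\beta)=0 .
\]
Assumption~\ref{ass:A4} then forces $\beta=\beta_0$.

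Substituting $\beta=\beta_0$ into the second block gives
\[
\mathbb E\big[X g(X^\top\beta_0)^2\{h(X^\top\gamma_0)-h(X^\top\gamma)\}\big]=0 .
\]
Multiplying by $(\gamma_0-\gamma)^\top$ produces an integrand equal to $g^2\,(X^\top\gamma_0-X^\top\gamma)\,(h(X^\top\gamma_0)-h(X^\top\gamma))$. This is nonnegative because $g>0$ and $h$ is strictly increasing. Therefore $X^\top(\gamma_0-\gamma)=0$ almost surely, since the factor $g(X^\top\beta_0)^2$ is strictly positive by the positivity of $g$. Assumption~\ref{ass:A4} again gives $\gamma=\gamma_0$.

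The main obstacle is a technical one: the expectations and the inner-product manipulation must be justified. Finiteness of $\mathbb E\|X\|\,|g(X^\top\beta)|$ and of the corresponding second-block term comes from the envelope bound $M_0$ in Assumption~\ref{ass:M}(iii), which controls $\|\psi\|$ uniformly over $\Theta$, together with the integrability of $Y$ and $Y^2$. Beyond that, the only subtle point is the "vanishes a.s." step. A nonnegative random variable with zero mean is zero a.s., and strict monotonicity turns that into $X^\top\Delta=0$ a.s. The quadratic form in $\mathbb E[XX^\top]$ then closes the argument.
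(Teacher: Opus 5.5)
Your proof is correct. It follows the paper's triangular order: identify $\beta_0$ from the mean block alone, then identify $\gamma_0$ from the second block with $\beta=\beta_0$ substituted. The uniqueness step, however, is argued differently.

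The paper builds antiderivative potentials. It sets $L_1(\beta)=\mathbb E[A(X^\top\beta)-\mu(\beta_0)X^\top\beta]$ with $A'=g$, and an analogous $L_2(\gamma)$ weighted by $\mu(\beta_0)^2$ with $B'=h$. It notes that $\nabla L_1=-\Psi_1$, and similarly for the second block. It then shows that the Hessians $\mathbb E[g'(X^\top\beta)XX^\top]$ and $\mathbb E[\mu(\beta_0)^2h'(X^\top\gamma)XX^\top]$ are positive definite, so each strictly convex potential has a unique stationary point.

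You instead pair the moment equation with $\beta_0-\beta$ (resp.\ $\gamma_0-\gamma$) and use $(a-b)\{g(a)-g(b)\}\ge 0$ directly. This is the first-order (strictly monotone gradient) characterization of the same strict convexity, so the two proofs are close relatives.

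Your route has three advantages:
\begin{itemize}
\item It uses exactly the hypothesis the lemma states, strict monotonicity of $g$ and $h$. The Hessian argument instead needs $g'>0$ and $h'>0$ on enough of the support of $X$, which strict monotonicity does not guarantee. For example, with $g(t)=\exp(t^3)$ one gets $\nabla^2L_1(0)=\mathbb E[g'(0)XX^\top]=0$, even though $L_1$ is still strictly convex.
\item It avoids checking integrability of the antiderivatives and differentiating under the expectation.
\item It does not need a convex domain containing both $\beta$ and $\beta_0$, because you compare the two points directly.
\end{itemize}

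In return, the paper's potential formulation shows the moment equations as first-order conditions of a strictly convex, quasi-likelihood-type criterion. That interpretation can help with computation, but for the identification claim itself your argument is the more economical one.
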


\begin{lemma}[Uniform law]\label{lem:ULLN}
Under Assumptions~\ref{ass:A1}--\ref{ass:A3},
\[
\sup_{\theta\in\Theta}
\|\bar{\psi}_n(\theta)-\Psi(\theta)\|
\xrightarrow{p}
0.
\]
\end{lemma}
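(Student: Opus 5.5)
The plan is to apply a standard uniform law of large numbers for i.i.d.\ data over a compact parameter space, in the form used by \citet{Newey1990}. That result needs four ingredients: (a) $\Theta$ is compact; (b) $\theta\mapsto\psi(\mathcal U,\theta)$ is continuous for almost every $\mathcal U$; (c) there is an integrable envelope, $\sup_{\theta\in\Theta}\|\psi(\mathcal U,\theta)\|_2\le M_0(\mathcal U)$ with $\mathbb E[M_0(\mathcal U)]<\infty$; and (d) the data are i.i.d. Ingredients (a) and (d) are exactly Assumptions~\ref{ass:A3} and \ref{ass:A1}. Ingredient (c) follows from Assumption~\ref{ass:M}(iii), since $\mathbb E[M_0]\le(\mathbb E[M_0^2])^{1/2}<\infty$ by Cauchy--Schwarz. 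Continuity (b) is immediate: $\psi$ is built from $X$, $Y$, and the continuous (indeed $C^2$) functions $g(X^\top\beta)$ and $h(X^\top\gamma)$, so for fixed $\mathcal U$ it is continuous in $\theta$.

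I would still give the short self-contained argument rather than only cite it. Because the paper also supplies a derivative envelope $M_1$, there is a cleaner Lipschitz route.

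\textbf{Lipschitz route.} For $\theta,\theta'\in\Theta$ the mean value theorem, applied componentwise, gives
\[
\|\psi(\mathcal U,\theta)-\psi(\mathcal U,\theta')\|_2\le C\,M_1(\mathcal U)\,\|\theta-\theta'\|_2 .
\]
Strictly, this needs $\Theta$ convex, or else the argument is applied locally on small balls, which suffices for compact $\Theta$. Here $C$ is a dimension constant coming from the componentwise application. Consequently $\bar\psi_n$ is stochastically Lipschitz with random constant $C n^{-1}\sum_i M_1(\mathcal U_i)$. By the weak law of large numbers this constant is $O_p(1)$, because $\mathbb E[M_1]<\infty$. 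Likewise $\Psi$ is Lipschitz with constant $C\,\mathbb E[M_1]$.

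\textbf{Finite-net argument.} Fix $\varepsilon>0$ and cover the compact set $\Theta$ by finitely many balls of radius $\eta$ with centers $\theta_1,\dots,\theta_K$. Each $\|\psi(\mathcal U,\theta_k)\|_2\le M_0$ is integrable, so the pointwise WLLN gives $\max_k\|\bar\psi_n(\theta_k)-\Psi(\theta_k)\|\xrightarrow{p}0$ over this finite set. For any $\theta$ in the ball around $\theta_k$,
\[
\|\bar\psi_n(\theta)-\Psi(\theta)\|\le\|\bar\psi_n(\theta_k)-\Psi(\theta_k)\|+C\eta\Big(n^{-1}\textstyle\sum_i M_1(\mathcal U_i)+\mathbb E[M_1]\Big).
\]
Choose $\eta$ small relative to $\varepsilon$. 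Then the second term is below $\varepsilon/2$ with probability tending to one, and the supremum is eventually below $\varepsilon$ with probability tending to one.

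\textbf{Avoiding the Lipschitz bound.} If one prefers not to use $M_1$, the alternative is the bracketing argument. Dominated convergence, justified by the envelope $2M_0$, gives
\[
\mathbb E\Big[\sup_{\|\theta'-\theta\|<\eta}\|\psi(\mathcal U,\theta')-\psi(\mathcal U,\theta)\|\Big]\to0\quad\text{as }\eta\to0,
\]
and the same finite-cover argument then goes through.

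The only genuine subtlety is measurability of the suprema. It is handled by continuity in $\theta$, which lets each supremum be taken over a countable dense subset of $\Theta$. One point of care is that the lemma cites Assumptions~\ref{ass:A1}--\ref{ass:A3}, so the envelope conditions of Assumption~\ref{ass:M}(iii) are legitimately available. Assumption~\ref{ass:M}(i)--(ii) and the positivity bounds in Assumption~\ref{ass:A3} serve only as sufficient conditions for those envelopes to exist. I therefore expect no real obstacle beyond stating the Lipschitz/local-cover step carefully.
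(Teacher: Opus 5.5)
Your proposal is correct and follows the paper's route. Like you, the paper checks continuity of $\theta\mapsto\psi(\mathcal U,\theta)$ and domination by an integrable envelope on the compact $\Theta$, then cites the standard i.i.d.\ uniform law of Newey--McFadden; the only difference is that you take the envelope directly as $M_0$ from Assumption~\ref{ass:M}(iii), whereas the paper writes it through the $G_1,G_2$ bounds of the appendix. One small caveat concerns your optional Lipschitz variant: working on small balls does not handle a non-convex $\Theta$ in general, because a ball intersected with $\Theta$ need not be convex, so the segment can leave $\Theta$ where $M_1$ gives no control. Your dominated-convergence (bracketing) alternative uses only $M_0$ and continuity, so it avoids this issue entirely.
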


\begin{theorem}[Consistency]\label{thm:consistency}
Under Assumptions~\ref{ass:A1}--\ref{ass:A5},
\[
\hat\theta_n\xrightarrow{p}\theta_0.
\]
\end{theorem}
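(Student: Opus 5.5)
We follow the standard extremum-estimator consistency argument (Newey and McFadden type). Define the population criterion $Q_0(\theta)=\Psi(\theta)^\top W_0\Psi(\theta)$. We verify four conditions and then conclude: (a) $\Theta$ is compact; (b) $Q_0$ is continuous on $\Theta$; (c) $Q_0$ is uniquely minimized at $\theta_0$; and (d) $\sup_{\theta\in\Theta}|Q_n(\theta)-Q_0(\theta)|\xrightarrow{p}0$.

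\textbf{Conditions (a)--(c).} Condition (a) is Assumption~\ref{ass:A3}. For (b), we argue as follows.
\begin{itemize}
\item The map $\theta\mapsto\psi(\mathcal U,\theta)$ is continuous for each $\mathcal U$, because $g,h$ are $C^2$.
\item It is dominated by $M_0(\mathcal U)$, which is integrable since $\mathbb E[M_0^2]<\infty$ (Assumption~\ref{ass:M}(iii)).
\item Dominated convergence therefore makes $\Psi$ continuous, and hence $Q_0$ is continuous.
\end{itemize}
For (c), note that $Q_0(\theta_0)=0$ because $\Psi(\theta_0)=0$. Since $W_0\succ0$, $Q_0(\theta)>0$ whenever $\Psi(\theta)\neq0$. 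The identification part of Assumption~\ref{ass:A5} (equivalently Lemma~\ref{lem:ident}) gives $\Psi(\theta)\neq0$ for every $\theta\neq\theta_0$, so $\theta_0$ is the unique minimizer.

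\textbf{Condition (d).} Write
\[
Q_n-Q_0=(\bar\psi_n-\Psi)^\top W_n(\bar\psi_n-\Psi)+2(\bar\psi_n-\Psi)^\top W_n\Psi+\Psi^\top(W_n-W_0)\Psi .
\]
Bound each term using three facts:
\begin{itemize}
\item Lemma~\ref{lem:ULLN} gives $\sup_{\theta}\|\bar\psi_n(\theta)-\Psi(\theta)\|=o_p(1)$.
\item $\sup_\theta\|\Psi(\theta)\|\le\mathbb E M_0<\infty$.
\item $\|W_n\|=O_p(1)$ and $\|W_n-W_0\|=o_p(1)$, both from $W_n\xrightarrow{p}W_0$.
\end{itemize}
Cauchy--Schwarz then makes every term $o_p(1)$ uniformly in $\theta$.

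\textbf{Conclusion.} Fix $\varepsilon>0$ and let $N_\varepsilon$ be the open ball of radius $\varepsilon$ around $\theta_0$. The set $\Theta\setminus N_\varepsilon$ is compact and $Q_0$ is continuous with a unique zero at $\theta_0$, so
\[
\eta:=\inf_{\Theta\setminus N_\varepsilon}Q_0>0 .
\]
On the event $\sup_\theta|Q_n-Q_0|<\eta/3$, whose probability tends to one, we have $Q_n(\hat\theta_n)\le Q_n(\theta_0)<\eta/3$. This gives $Q_0(\hat\theta_n)<2\eta/3<\eta$, which forces $\hat\theta_n\in N_\varepsilon$. Hence $\hat\theta_n\xrightarrow{p}\theta_0$.

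\textbf{Potential obstacles.} The argument assumes $\hat\theta_n$ is measurable and attains the minimum. Attainment holds because $Q_n$ is continuous on a compact set; otherwise one can use a near-minimizer with an $o_p(1)$ slack. The step needing the most care is the uniform convergence (d), but it reduces entirely to Lemma~\ref{lem:ULLN} together with the integrable envelope. One should also check that the identification condition in Assumption~\ref{ass:A5} holds on all of $\Theta$, not merely locally; we take it directly from that assumption.
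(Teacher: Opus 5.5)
Your proposal is correct and follows essentially the same route as the paper. Both define $Q_0(\theta)=\Psi(\theta)^\top W_0\Psi(\theta)$, use identification to get a unique minimizer at $\theta_0$, obtain $\sup_{\theta\in\Theta}|Q_n(\theta)-Q_0(\theta)|\xrightarrow{p}0$ from the uniform law lemma and $W_n\xrightarrow{p}W_0$, and conclude with an argmin argument. The difference is that you spell out what the paper leaves implicit: continuity of $Q_0$ by dominated convergence, the explicit decomposition of $Q_n-Q_0$, and the well-separated-minimum step from compactness, which the paper covers by citing van der Vaart's Theorem 5.7.
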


\subsubsection*{Asymptotic normality}

Let
\(
S_0
=
\mathbb E[
\psi(\mathcal U,\theta_0)
\psi(\mathcal U,\theta_0)^\top
].
\)

\begin{lemma}[Linearization]\label{lem:linearization}
Under Assumptions~\ref{ass:A1}--\ref{ass:A5},
\[
\sqrt n(\hat\theta_n-\theta_0)
=
-\big(G^\top W_0G\big)^{-1}
G^\top W_0
\sqrt n\,\bar{\psi}_n(\theta_0)
+o_p(1),
\]
where $G$ is defined in \eqref{eq:Gblock}.
\end{lemma}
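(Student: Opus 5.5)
The argument is the standard mean value expansion of the GMM first order condition, built on Theorem~\ref{thm:consistency}, Lemma~\ref{lem:jacobian} and Lemma~\ref{lem:ULLN}.

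\textbf{First order condition.} By Theorem~\ref{thm:consistency}, $\hat\theta_n\xrightarrow{p}\theta_0$. Since $\theta_0\in\mathrm{int}(\Theta)$ by Assumption~\ref{ass:A3}, with probability tending to one $\hat\theta_n$ is an interior minimizer of $Q_n$. Because $g,h$ are $C^2$, $Q_n$ is differentiable, so on that event
\[
G_n(\hat\theta_n)^\top W_n\,\bar\psi_n(\hat\theta_n)=0,
\qquad
G_n(\theta):=\frac1n\sum_{i=1}^n\frac{\partial\psi(\mathcal U_i,\theta)}{\partial\theta^\top}.
\]

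\textbf{Mean value expansion.} Next I expand $\bar\psi_n(\hat\theta_n)$ around $\theta_0$ row by row:
\[
\bar\psi_n(\hat\theta_n)=\bar\psi_n(\theta_0)+\bar G_n(\hat\theta_n-\theta_0),
\]
where each row of $\bar G_n$ is evaluated at its own intermediate point $\bar\theta_{n,j}$ on the segment between $\hat\theta_n$ and $\theta_0$. Substituting into the first order condition gives
\[
\big[G_n(\hat\theta_n)^\top W_n\bar G_n\big]\sqrt n(\hat\theta_n-\theta_0)=-G_n(\hat\theta_n)^\top W_n\sqrt n\,\bar\psi_n(\theta_0).
\]

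\textbf{Key step: convergence of the Jacobian matrices.} I need $G_n(\tilde\theta_n)\xrightarrow{p}G$ for any $\tilde\theta_n\xrightarrow{p}\theta_0$; this covers both $\hat\theta_n$ and every $\bar\theta_{n,j}$.
\begin{itemize}
\item[(a)] First I establish a uniform law $\sup_\Theta\|G_n(\theta)-\mathbb E[\partial\psi/\partial\theta^\top]\|\xrightarrow{p}0$. This follows from the argument of Lemma~\ref{lem:ULLN}, applied to $\partial\psi/\partial\theta^\top$: the map is continuous in $\theta$ because $g,h\in C^2$, $\Theta$ is compact, and the envelope satisfies $\mathbb E M_1<\infty$ (Assumption~\ref{ass:M}(iii)).
\item[(b)] Dominated convergence with the same envelope shows that $\theta\mapsto\mathbb E[\partial\psi/\partial\theta^\top]$ is continuous, with value $G$ at $\theta_0$ by Lemma~\ref{lem:jacobian}.
\end{itemize}
Combining (a) and (b) gives the claim. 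Together with $W_n\xrightarrow{p}W_0$, the bracketed matrix converges in probability to $G^\top W_0G$.

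\textbf{Invertibility of $G^\top W_0G$.} This is the main obstacle. $G$ is block lower triangular. Its diagonal blocks are $\mathbb E[g'(X^\top\beta_0)XX^\top]$ and $\mathbb E[g(X^\top\beta_0)^2h'(X^\top\gamma_0)XX^\top]$. I would argue these are positive definite from Assumption~\ref{ass:A4}, using positivity of $g'$, $h'$ and $g$: strictly increasing $C^1$ links with $g'>0$, $h'>0$ almost everywhere on the support. Then $\det G\ne0$, so $G$ has full rank and $G^\top W_0G\succ0$ because $W_0\succ0$. Strict monotonicity alone permits isolated zeros of the derivative, so I would state $g',h'>0$ explicitly if needed, as holds for the exponential links. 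I also need the expectations in $G$ to be finite, which the envelope gives.

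\textbf{Conclusion.} By continuity of the inverse, the bracketed matrix is invertible with probability tending to one. Assumptions~\ref{ass:A1} and \ref{ass:M}(ii)--(iii) give $\mathbb E\|\psi(\mathcal U,\theta_0)\|^2<\infty$ together with $\mathbb E\psi=0$. The Lindeberg--L\'evy central limit theorem then yields $\sqrt n\,\bar\psi_n(\theta_0)=O_p(1)$, converging in distribution to $N(0,S_0)$. Solving the expanded first order condition for $\sqrt n(\hat\theta_n-\theta_0)$ and using Slutsky's theorem, the difference between the random coefficient matrices and their limits, multiplied by an $O_p(1)$ term, is $o_p(1)$. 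This gives the stated representation.
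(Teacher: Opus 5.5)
Your proposal is correct and follows the same route as the paper's proof: the first order condition of $Q_n$, a mean value expansion of $\bar\psi_n(\hat\theta_n)$ about $\theta_0$, convergence of both Jacobian evaluations to $G$ together with $W_n\xrightarrow{p}W_0$, and invertibility of $G^\top W_0G$ via the full rank result. You also supply details the paper leaves implicit: the interior first order condition holding with probability tending to one, row-wise intermediate points, a uniform law for the derivative via the envelope $M_1$, and the CLT giving $\sqrt n\,\bar\psi_n(\theta_0)=O_p(1)$ for the Slutsky step. Your caveat that strict monotonicity alone does not give $g',h'>0$ is well taken, since Lemma~\ref{lem:fullrank} assumes this without stating it.
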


\begin{lemma}[Full rank]\label{lem:fullrank}
Under Assumptions~\ref{ass:A3}--\ref{ass:A4}, the matrix $G$ has full column rank $2p$.
\end{lemma}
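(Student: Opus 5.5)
The plan is to use the lower block-triangular structure of $G$ from Lemma~\ref{lem:jacobian}. Writing
\[
G=-\begin{bmatrix} A & 0\\ B & C\end{bmatrix},\qquad
A=\mathbb E[Xg'(X^\top\beta_0)X^\top],\quad
C=\mathbb E[Xg(X^\top\beta_0)^2h'(X^\top\gamma_0)X^\top],
\]
with $B$ the lower-left block, we have $\det G=\det A\,\det C$ (the sign factor $(-1)^{2p}=1$ is harmless). Since $G$ is a $2p\times 2p$ square matrix, full column rank $2p$ is equivalent to $A$ and $C$ both being nonsingular. No property of $B$ is needed, which is the advantage of the block form: a vector $v=(v_1^\top,v_2^\top)^\top$ with $Gv=0$ must satisfy $Av_1=0$, and then $Bv_1+Cv_2=Cv_2=0$.

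Next I would show that $A$ and $C$ are positive definite. For $a\in\mathbb R^p\setminus\{0\}$,
\[
a^\top A a=\mathbb E\big[g'(X^\top\beta_0)(X^\top a)^2\big],\qquad
a^\top C a=\mathbb E\big[g(X^\top\beta_0)^2h'(X^\top\gamma_0)(X^\top a)^2\big].
\]
Assumption~\ref{ass:A4} gives $\mathbb E[(X^\top a)^2]=a^\top\mathbb E[XX^\top]a>0$, so $\Pr(X^\top a\neq0)>0$. The weights $g^2$ are strictly positive by Assumption~\ref{ass:A3}. It therefore suffices that $g'$ and $h'$ are strictly positive on the relevant values $X^\top\beta_0$ and $X^\top\gamma_0$. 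In that case each integrand is nonnegative and strictly positive on an event of positive probability, so both quadratic forms are strictly positive. Finiteness of the expectations, which is needed for $G$ to be well defined, comes from the envelope $M_1$ in Assumption~\ref{ass:M}(iii), as already used in Lemma~\ref{lem:jacobian}.

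The main obstacle is that strict monotonicity of $g$ and $h$ does not by itself give $g'>0$ everywhere, since a strictly increasing $C^2$ function can have isolated zeros of its derivative. I would handle this in two ways. First, for the canonical choice $g=h=\exp$ we have $g'=g>0$ and $h'=h>0$, so the argument is immediate. Second, in general, the set $Z=\{u: g'(u)=0\}$ has empty interior, so $(X^\top a)^2g'(X^\top\beta_0)$ is positive unless $X^\top\beta_0\in Z$ or $X^\top a=0$. To conclude, I would need $\Pr(X^\top\beta_0\in Z,\;X^\top a\neq0)<\Pr(X^\top a\neq 0)$. The cleanest route is to state explicitly, as part of the smoothness condition on the links, that $g'>0$ and $h'>0$. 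Under that condition positive definiteness of $A$ and $C$ follows, hence $\det G\neq0$ and $\operatorname{rank}G=2p$.
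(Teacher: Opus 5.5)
Your proposal is correct and follows essentially the same route as the paper. The paper also uses the block lower-triangular form of $G$ and shows that $\mathbb E[Xg'(X^\top\beta_0)X^\top]$ and $\mathbb E[X\mu(\beta_0)^2h'(X^\top\gamma_0)X^\top]$ are positive definite via their quadratic forms and $\mathbb E[XX^\top]\succ0$; it simply asserts $g'(X^\top\beta_0)>0$ (and implicitly $h'>0$) almost surely. So your observation that strict monotonicity alone does not give this, and your fix of requiring $g',h'>0$ (automatic for log links), makes explicit a condition the paper uses tacitly.
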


\begin{theorem}[Asymptotic normality]\label{thm:AN}
Under Assumptions~\ref{ass:A1}--\ref{ass:A5},
\[
\sqrt n(\hat\theta_n-\theta_0)
\Rightarrow
\mathcal N(0,V),
\]
where
\[
V
=
(G^\top W_0G)^{-1}
G^\top W_0S_0W_0G
(G^\top W_0G)^{-1}.
\]

If $W_0=S_0^{-1}$, corresponding to the optimal two step GMM weighting matrix, then
\[
V
=
(G^\top S_0^{-1}G)^{-1}.
\]
\end{theorem}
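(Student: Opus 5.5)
The plan is to combine the Linearization Lemma~\ref{lem:linearization} with a multivariate central limit theorem for $\sqrt n\,\bar\psi_n(\theta_0)$, and then apply Slutsky's theorem. First, by Assumption~\ref{ass:A1} the vectors $\psi(\mathcal U_i,\theta_0)$ are i.i.d. From \eqref{eq:M}, iterated expectations give $\mathbb E\{\psi(\mathcal U_i,\theta_0)\}=0$, since each block is $X_i$ times a residual with zero conditional mean. Their covariance is $S_0$, which is finite because $\|\psi(\mathcal U,\theta_0)\|_2\le M_0(\mathcal U)$ and $\mathbb E[M_0^2]<\infty$ by Assumption~\ref{ass:M}(iii). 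Assumption~\ref{ass:M}(ii) on $\mathbb E(Y^4)$ is exactly what makes the second block square integrable. The Lindeberg--L\'evy CLT then yields $\sqrt n\,\bar\psi_n(\theta_0)\Rightarrow\mathcal N(0,S_0)$.

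Next I will make sure the sandwich matrix is well defined. By Lemma~\ref{lem:fullrank}, $G$ has full column rank $2p$, and by Assumption~\ref{ass:A5} $W_0\succ0$. Hence $G^\top W_0G$ is positive definite, because $v^\top G^\top W_0 Gv=(Gv)^\top W_0(Gv)>0$ whenever $v\neq0$. So $A:=-(G^\top W_0G)^{-1}G^\top W_0$ is a fixed deterministic matrix. Lemma~\ref{lem:linearization} gives $\sqrt n(\hat\theta_n-\theta_0)=A\sqrt n\,\bar\psi_n(\theta_0)+o_p(1)$. The continuous mapping theorem and Slutsky's theorem then give the limit $\mathcal N(0,AS_0A^\top)$. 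Expanding, and using the symmetry of $W_0$ and of $G^\top W_0G$, shows $AS_0A^\top=V$ as stated.

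For the optimal case I will substitute $W_0=S_0^{-1}$, which requires $S_0$ to be nonsingular; this is implicit in Assumption~\ref{ass:A5}. The middle factor becomes $G^\top S_0^{-1}S_0S_0^{-1}G=G^\top S_0^{-1}G$, so $V=(G^\top S_0^{-1}G)^{-1}(G^\top S_0^{-1}G)(G^\top S_0^{-1}G)^{-1}=(G^\top S_0^{-1}G)^{-1}$. It is worth remarking that in the just identified case $s=d=2p$, $G$ is square and invertible, so $V=G^{-1}S_0G^{-\top}$ for any $W_0$, and the weighting matrix is asymptotically irrelevant.

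The main obstacle is largely absorbed by the earlier lemmas. It lies in Lemma~\ref{lem:linearization}, which needs consistency (Theorem~\ref{thm:consistency}) and a uniform convergence of the sample Jacobian to $G$ via the $M_1$ envelope. Within this proof, the only delicate point is verifying finiteness and, in the optimal case, invertibility of $S_0$. If invertibility is not guaranteed by the assumptions, I would state it explicitly as the condition under which the second display holds.
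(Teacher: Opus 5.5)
Your proposal is correct and follows the paper's proof essentially step for step. Both invoke the linearization lemma, apply the multivariate CLT to the i.i.d.\ mean-zero vectors $\psi(\mathcal U_i,\theta_0)$ with $\mathbb E\|\psi(\mathcal U,\theta_0)\|_2^2<\infty$, conclude by Slutsky, and obtain the optimal case by substituting $W_0=S_0^{-1}$. Your added checks are correct refinements that the paper leaves implicit: positive definiteness of $G^\top W_0G$ via the full rank lemma, the explicit requirement that $S_0$ be nonsingular, and the just-identified simplification $V=G^{-1}S_0G^{-\top}$. One small correction: $\mathbb E(Y^4)<\infty$ does not by itself make the second block square integrable, since that needs a joint condition on $\|X\|_2^2Y^4$; it is the envelope $M_0$ that guarantees it.
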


\subsection{Two step efficient GMM estimation}

The optimal weighting matrix $S_0^{-1}$ is unknown and must be estimated. A standard approach is the two step GMM procedure \citep{Hansen1982,Hayashi2000}.

\begin{enumerate}
    \item \textbf{Step 1: Initial estimation.}

    Compute the preliminary estimator $\hat{\theta}_n^{(1)}$ using
    $W_n=I_{2p}$, or another fixed positive definite matrix:
    \[
    \hat{\theta}_n^{(1)}
    =
    \arg\min_{\theta\in\Theta}
    \bar{\psi}_n(\theta)^\top
    \bar{\psi}_n(\theta).
    \]

    \item \textbf{Step 2: Efficient re-estimation.}

    Using $\hat{\theta}_n^{(1)}$, compute the sample covariance matrix of the moments:
    \[
    \widehat S_n
    =
    \frac{1}{n}
    \sum_{i=1}^n
    \psi\!\big(\mathcal U_i,\hat{\theta}_n^{(1)}\big)
    \psi\!\big(\mathcal U_i,\hat{\theta}_n^{(1)}\big)^\top.
    \]

    Set $W_n=\widehat S_n^{-1}$ and reestimate:
    \[
    \hat{\theta}_n^{(2)}
    =
    \arg\min_{\theta\in\Theta}
    \bar{\psi}_n(\theta)^\top
    \widehat S_n^{-1}
    \bar{\psi}_n(\theta).
    \]
\end{enumerate}

When the model is overidentified, the resulting two step estimator
$\hat{\theta}_n^{(2)}$ is asymptotically efficient within the class of GMM estimators based on the chosen moment conditions. In the just identified case considered in the present applications, the point estimator does not depend on the positive definite weighting matrix when the sample moment equations have an exact and unique solution \citep{Hansen1982}.

\subsubsection*{Computational implementation and choice of weighting matrix}
\label{computational_implementation}

The proposed ZAM regression models were estimated in R using the
\texttt{gmm} package \citep{Chausse2010}. The implementation uses the
two step GMM procedure with an identity weighting matrix in the first
step and an estimated optimal weighting matrix in the second step.

\paragraph{First step estimation.}

In the first step, the weighting matrix in \eqref{eq:estimator} is set equal to the identity matrix: \(
W_n^{(1)}=I_s.
\) This produces the preliminary estimator
\[
\hat\theta_n^{(1)}
=
\arg\min_{\theta\in\Theta}
\bar\psi_n(\theta)^\top
I_s
\bar\psi_n(\theta).
\]

The identity matrix is fixed and positive definite, and it does not require a prior estimate of the covariance matrix of the moment conditions. It therefore provides a simple starting point for the two step procedure \citep{Hansen1982,Hayashi2000}. The identity matrix does not imply that all moment conditions have the same variability. Its purpose is only to obtain a consistent preliminary estimate that can be used to estimate the covariance structure of the moments.

\paragraph{Second step estimation.}

The centered sample covariance matrix of the moment conditions is evaluated at
$\hat\theta_n^{(1)}$ as follows:
\[
\widehat S_n^{(1)}
=
\frac{1}{n}
\sum_{i=1}^n
\left\{
\psi(\mathcal U_i,\hat\theta_n^{(1)})
-
\bar\psi_n(\hat\theta_n^{(1)})
\right\}
\left\{
\psi(\mathcal U_i,\hat\theta_n^{(1)})
-
\bar\psi_n(\hat\theta_n^{(1)})
\right\}^{\top}.
\]

The second step weighting matrix is then set equal to
\[
W_n^{(2)}
=
\left(\widehat S_n^{(1)}\right)^{-1}.
\]

The GMM objective is minimized again using $W_n^{(2)}$:
\[
\hat\theta_n^{(2)}
=
\arg\min_{\theta\in\Theta}
\bar\psi_n(\theta)^\top
W_n^{(2)}
\bar\psi_n(\theta).
\]

The inverse covariance weighting matrix gives less weight to moment conditions with greater variability and accounts for correlations among the moment conditions. When the model is overidentified, so that $s>d$, this choice produces the asymptotically efficient GMM estimator within the class of estimators based on the selected moment conditions \citep{Hansen1982,Hayashi2000}. This result also follows from Theorem~\ref{thm:AN} by taking
\(
W_0=S_0^{-1}.
\)

\paragraph{Just identified case.}

In the applications considered in this paper, the number of moment conditions equals the number of parameters. This is the just identified case \citep{Hansen1982}. The constant dispersion model has $p+1$ moment conditions for $p+1$ parameters, while the varying dispersion model has $2p$ moment conditions for $2p$ parameters.

When the sample moment equations have an exact solution, the GMM estimator satisfies
\[
\bar\psi_n(\hat\theta_n)=0.
\]

It follows that
\(
Q_n(\hat\theta_n)
=
\bar\psi_n(\hat\theta_n)^\top
W_n
\bar\psi_n(\hat\theta_n)
=
0
\)
for every positive definite weighting matrix $W_n$. Thus, provided the solution is unique, the resulting point estimator does not depend on the choice of the weighting matrix. In numerical computation, the sample moments and the minimized objective may be close to zero rather than exactly zero because of the optimizer tolerance.

\paragraph{Starting values.}

Starting values for the mean parameters are obtained from a Poisson regression with the same mean design matrix. For the dispersion parameters, an initial overall dispersion estimate is used for the intercept, and the dispersion slopes are initialized at zero. The ZA parameter $\delta$ is not estimated directly in the regression model, since dispersion is modeled through $\phi$ and the regression parameter $\gamma$.

\paragraph{Optimization and predictor scaling.}

The GMM objective is minimized using the BFGS optimization method. Selected continuous predictors are centered and scaled before estimation to improve numerical stability \citep{NocedalWright2006}.

\subsubsection*{Plug-in covariance matrix and tests}

Let
\(
\textcolor{black}{\widehat S}
=
\frac{1}{n}\sum_{i=1}^n
\textcolor{black}{\psi}(\mathcal U_i,\hat\theta_n)\,
\textcolor{black}{\psi}(\mathcal U_i,\hat\theta_n)^\top
\)
and
\[
\hat G
=
-\begin{bmatrix}
n^{-1}\sum_{i=1}^n X_i\,g'(X_i^\top\hat\beta)\,X_i^\top & 0\\[2mm]
n^{-1}\sum_{i=1}^n X_i\,2\,g(X_i^\top\hat\beta)g'(X_i^\top\hat\beta)h(X_i^\top\hat\gamma)\,X_i^\top
&
n^{-1}\sum_{i=1}^n X_i\,g(X_i^\top\hat\beta)^2h'(X_i^\top\hat\gamma)\,X_i^\top
\end{bmatrix}.
\]
For a general weighting matrix \(\textcolor{black}{\hat W}\), define
\begin{equation*}
\textcolor{black}{\widehat V}
=
(\hat G^\top\textcolor{black}{\hat W}\hat G)^{-1}
\hat G^\top\textcolor{black}{\hat W}\,\textcolor{black}{\widehat S}\,\textcolor{black}{\hat W}\hat G
(\hat G^\top\textcolor{black}{\hat W}\hat G)^{-1}.
\end{equation*}
Then \(\textcolor{black}{\widehat V}\) estimates the asymptotic covariance matrix of \(\sqrt n(\hat\theta_n-\theta_0)\), and therefore
\begin{equation*}
\widehat{\mathrm{Var}}(\hat\theta_n)
=
\textcolor{black}{\frac{1}{n}\widehat V}.
\end{equation*}
For the optimal two-step GMM estimator, \(\textcolor{black}{\hat W=\widehat S^{-1}}\), and
\(
\textcolor{black}{\widehat V}
=
(\hat G^\top\textcolor{black}{\widehat S^{-1}}\hat G)^{-1}.
\)

\paragraph{Consistency of the plug-in covariance estimator.}
\begin{lemma}\label{lem:Shat_cons}
Under Assumptions~\ref{ass:A1}--\ref{ass:A5},
\(
\widehat S \xrightarrow{p} S_0.
\)
\end{lemma}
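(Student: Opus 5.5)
We prove $\widehat S\xrightarrow{p}S_0$ by the standard two-part argument: a uniform law of large numbers for the outer-product function $\theta\mapsto\psi(\mathcal U,\theta)\psi(\mathcal U,\theta)^\top$, followed by continuity of its population counterpart $S(\theta)$ at $\theta_0$, with the consistency of $\hat\theta_n$ from Theorem~\ref{thm:consistency} linking the two. Concretely, we split
\[
\widehat S-S_0
=\Big\{\tfrac1n\textstyle\sum_{i=1}^n\psi(\mathcal U_i,\hat\theta_n)\psi(\mathcal U_i,\hat\theta_n)^\top-S(\hat\theta_n)\Big\}
+\big\{S(\hat\theta_n)-S(\theta_0)\big\},
\]
and show that each bracket is $o_p(1)$.

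\textbf{Step 1 (uniform law).} The first bracket is bounded in norm by $\sup_{\theta\in\Theta}\|n^{-1}\sum_i\psi_i(\theta)\psi_i(\theta)^\top-S(\theta)\|$. We control this supremum with the same uniform-law argument used for Lemma~\ref{lem:ULLN}, applied now to the matrix-valued function $\theta\mapsto\psi(\mathcal U,\theta)\psi(\mathcal U,\theta)^\top$. This argument needs three ingredients.
\begin{itemize}
\item[(a)] Compactness of $\Theta$ (Assumption~\ref{ass:A3}).
\item[(b)] Continuity of the function in $\theta$ for almost every $\mathcal U$. This follows because $g$ and $h$ are $C^2$, so $\mu_i(\beta)$ and $\phi_i(\gamma)$ are continuous.
\item[(c)] An integrable envelope. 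Assumption~\ref{ass:M}(iii) gives $\sup_\theta\|\psi\psi^\top\|\le\sup_\theta\|\psi\|_2^2\le M_0(\mathcal U)^2$, and $\mathbb E[M_0^2]<\infty$.
\end{itemize}
With these in hand, the Tauchen/Newey--McFadden uniform law of large numbers applies under i.i.d.\ sampling (Assumption~\ref{ass:A1}).

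\textbf{Step 2 (continuity).} The same domination $M_0^2$ and dominated convergence show that $S(\theta)$ is continuous on $\Theta$. Theorem~\ref{thm:consistency} gives $\hat\theta_n\xrightarrow{p}\theta_0$, so the continuous mapping theorem yields $S(\hat\theta_n)\xrightarrow{p}S_0$. Combining the two steps proves the lemma.

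\textbf{Main obstacle.} Verifying the envelope is where the substance lies, but it is already assumed in Assumption~\ref{ass:M}(iii). The square of the second block, $\|X\|^2\{Y(Y-1)-\mu^2(1+\phi)\}^2$, involves $\|X\|^2Y^4$. Assumption~\ref{ass:M}(ii) bounds $\mathbb E Y^4$ alone, so it does not by itself bound this cross moment. We therefore rely explicitly on $\mathbb E[M_0^2]<\infty$ rather than attempt to derive it from (i)--(ii).

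If the centered version $\widehat S^{(1)}_n$ is intended instead, the extra term is $\bar\psi_n(\hat\theta_n)\bar\psi_n(\hat\theta_n)^\top$. Lemma~\ref{lem:ULLN} and the continuity of $\Psi$ give $\bar\psi_n(\hat\theta_n)\xrightarrow{p}\Psi(\theta_0)=0$, so this term is $o_p(1)$ and the conclusion is unchanged.
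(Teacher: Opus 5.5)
Your proof is correct. It differs from the paper's mainly in where the decomposition pivots.

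The paper pivots on the sample average at the true value. It writes $\widehat S-S_0$ as $\{\widehat S-n^{-1}\sum_i\psi(\mathcal U_i,\theta_0)\psi(\mathcal U_i,\theta_0)^\top\}+\{n^{-1}\sum_i\psi(\mathcal U_i,\theta_0)\psi(\mathcal U_i,\theta_0)^\top-S_0\}$. The second piece needs only the ordinary pointwise law of large numbers. The first is dismissed by appeal to continuity of $\psi$, the envelope, and $\hat\theta_n\xrightarrow{p}\theta_0$.

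You pivot instead on the population quantity $S(\hat\theta_n)$. The random-argument issue is then absorbed into a global uniform law for $\psi\psi^\top$ over the compact $\Theta$, with envelope $M_0^2$. What remains is deterministic continuity of $S$ plus the continuous mapping theorem.

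The paper's first term is really a stochastic-equicontinuity claim that it does not spell out. One could make it rigorous in two ways:
\begin{itemize}
\item a local law on shrinking neighbourhoods of $\theta_0$, via dominated convergence with the bound $2M_0^2$;
\item a mean-value bound through $M_1$, which would need an extra moment such as $\mathbb E[M_0M_1]<\infty$ that is not assumed.
\end{itemize}
Your route uses only ingredients stated explicitly in Assumptions~\ref{ass:A1}--\ref{ass:A3}, together with Theorem~\ref{thm:consistency}. Its only cost is that it needs domination over all of $\Theta$ rather than just near $\theta_0$. That is immaterial here, since Assumption~\ref{ass:M}(iii) is stated globally.

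Your caution that parts (i)--(ii) of Assumption~\ref{ass:M} do not control $\mathbb E[\|X\|_2^2Y^4]$ is well placed. The sufficient conditions in Appendix~\ref{app:moment_envelopes} establish only first-moment integrability of the envelope, so citing $\mathbb E[M_0^2]<\infty$ directly is the right move. Your treatment of the centred $\widehat S_n^{(1)}$ also covers the version actually used in the two-step implementation.
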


\begin{lemma}\label{lem:What_cons}
If \(S_0\succ0\), then
\(
\widehat S^{-1}\xrightarrow{p}S_0^{-1}.
\)
\end{lemma}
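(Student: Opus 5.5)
The plan is to treat the claim as a continuous mapping statement. Matrix inversion is continuous on the open set of nonsingular $2p\times 2p$ matrices, and $S_0\succ0$ means $S_0$ lies in that set. Lemma~\ref{lem:Shat_cons} already gives $\widehat S\xrightarrow{p}S_0$. The continuous mapping theorem for convergence in probability then yields $\widehat S^{-1}\xrightarrow{p}S_0^{-1}$, once we handle the event on which $\widehat S$ is singular.

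The steps, in order, are as follows.

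\textbf{Step 1: invertibility with probability tending to one.} Let $\lambda_0=\lambda_{\min}(S_0)>0$. By Weyl's inequality, $|\lambda_{\min}(\widehat S)-\lambda_0|\le\|\widehat S-S_0\|_2$. Hence on the event $A_n=\{\|\widehat S-S_0\|_2<\lambda_0/2\}$ the matrix $\widehat S$ is positive definite, and $\|\widehat S^{-1}\|_2\le 2/\lambda_0$. By Lemma~\ref{lem:Shat_cons}, $\Pr(A_n)\to1$. Off $A_n$, define $\widehat S^{-1}$ arbitrarily, for example as a generalized inverse; this does not affect convergence in probability.

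\textbf{Step 2: explicit bound.} On $A_n$, use the identity $\widehat S^{-1}-S_0^{-1}=\widehat S^{-1}(S_0-\widehat S)S_0^{-1}$. This gives
\[
\|\widehat S^{-1}-S_0^{-1}\|_2\le \frac{2}{\lambda_0^{2}}\,\|\widehat S-S_0\|_2 .
\]

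\textbf{Step 3: conclude.} For any $\varepsilon>0$,
\[
\Pr\bigl(\|\widehat S^{-1}-S_0^{-1}\|_2>\varepsilon\bigr)\le \Pr(A_n^c)+\Pr\bigl(\|\widehat S-S_0\|_2>\lambda_0^2\varepsilon/2\bigr)\to0 .
\]

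There is no real obstacle here. The only point requiring care is Step 1: $\widehat S^{-1}$ need not exist for every $n$, so the statement is to be read as holding on events of probability tending to one. All the substantive work sits in Lemma~\ref{lem:Shat_cons}, which we take as given. The argument uses only $S_0\succ0$ and that lemma.
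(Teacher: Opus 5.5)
Your proposal is correct and is essentially the paper's argument. The paper simply invokes Lemma~\ref{lem:Shat_cons} and the continuity of matrix inversion at $S_0\succ0$ (the continuous mapping theorem); your Weyl-inequality step and the identity $\widehat S^{-1}-S_0^{-1}=\widehat S^{-1}(S_0-\widehat S)S_0^{-1}$ make that continuity step quantitative and explicitly handle the event where $\widehat S$ may be singular, which the paper leaves implicit.
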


\begin{proposition}\label{prop:Vhat_cons}
Let
\(
\widehat V
=
(\hat G^\top \hat W \hat G)^{-1}
\hat G^\top \hat W\,\widehat S\,\hat W \hat G
(\hat G^\top \hat W \hat G)^{-1},
\)
where \(\hat G\) is the sample Jacobian defined above. If \(\hat W\xrightarrow{p}W_0\), then
\[
\widehat V\xrightarrow{p}
(G^\top W_0G)^{-1}G^\top W_0S_0W_0G(G^\top W_0G)^{-1}.
\]
For the optimal two-step choice \(\hat W=\widehat S^{-1}\),
\[
\widehat V\xrightarrow{p}(G^\top S_0^{-1}G)^{-1}.
\]
Therefore,
\[
\widehat{\mathrm{Var}}(\hat\theta_n)=\frac{1}{n}\widehat V.
\]
\end{proposition}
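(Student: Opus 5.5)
The proof rests on the continuous mapping theorem. The map
\[
(G,W,S)\;\longmapsto\;(G^\top WG)^{-1}G^\top WSWG(G^\top WG)^{-1}
\]
is continuous at every point where $G^\top WG$ is nonsingular. It is therefore enough to show three things: $\hat G\xrightarrow{p}G$, $\widehat S\xrightarrow{p}S_0$, and $G^\top W_0G\succ0$. The second is Lemma~\ref{lem:Shat_cons}. For the third, Lemma~\ref{lem:fullrank} gives that $G$ has full column rank $2p$, and with $W_0\succ0$ this yields $v^\top G^\top W_0Gv=\|W_0^{1/2}Gv\|_2^2>0$ for every $v\neq0$. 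Matrix inversion is continuous on the open set of nonsingular matrices, so $(\hat G^\top\hat W\hat G)^{-1}\xrightarrow{p}(G^\top W_0G)^{-1}$. The general formula then follows by Slutsky's lemma, since products of consistent factors are consistent.

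The main step is $\hat G\xrightarrow{p}G$. Note that $\hat G=n^{-1}\sum_i \partial\psi(\mathcal U_i,\theta)/\partial\theta^\top$ evaluated at $\hat\theta_n$; this holds because the entries displayed for $\hat G$ are exactly the $\theta$-derivatives of the moment function in \eqref{eq:momentvector}. I would decompose
\[
\hat G-G=\Big\{\hat G(\hat\theta_n)-\bar G(\hat\theta_n)\Big\}+\Big\{\bar G(\hat\theta_n)-\bar G(\theta_0)\Big\},
\]
where $\hat G(\theta)$ is the sample average of the derivative and $\bar G(\theta)=\mathbb E[\partial\psi(\mathcal U,\theta)/\partial\theta^\top]$, so that $G=\bar G(\theta_0)$.

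For the first term I need a uniform law of large numbers for the derivative. I would prove it as in Lemma~\ref{lem:ULLN}, which applies the standard ULLN (e.g.\ Newey--McFadden, Lemma 2.4). Its conditions hold here: the data are i.i.d.\ (Assumption~\ref{ass:A1}), $\Theta$ is compact (Assumption~\ref{ass:A3}), the derivative is continuous in $\theta$ because $g,h\in C^2$, and it is dominated by the integrable envelope $M_1$ (Assumption~\ref{ass:M}(iii)). Hence
\[
\sup_{\theta\in\Theta}\|\hat G(\theta)-\bar G(\theta)\|\xrightarrow{p}0.
\]
For the second term, dominated convergence with the same envelope $M_1$ shows that $\bar G$ is continuous. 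Combined with $\hat\theta_n\xrightarrow{p}\theta_0$ from Theorem~\ref{thm:consistency}, the continuous mapping theorem gives $\bar G(\hat\theta_n)\xrightarrow{p}G$. I expect this step to be the main obstacle, since it is the one place where integrability must be checked. If Lemma~\ref{lem:ULLN} is stated only for $\psi$ and not for its derivative, I would rely on the envelope $M_1$ to transfer the argument directly.

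For the optimal choice $\hat W=\widehat S^{-1}$, we need $S_0\succ0$, which is implicit in the optimal weighting of Assumption~\ref{ass:A5}. Then Lemma~\ref{lem:What_cons} gives $\hat W\xrightarrow{p}S_0^{-1}$, and substituting $W_0=S_0^{-1}$ simplifies
\[
G^\top S_0^{-1}S_0S_0^{-1}G=G^\top S_0^{-1}G,
\]
so the limit reduces to $(G^\top S_0^{-1}G)^{-1}$.

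The final claim about the variance is a restatement rather than a limit result. Theorem~\ref{thm:AN} gives $\sqrt n(\hat\theta_n-\theta_0)\Rightarrow\mathcal N(0,V)$, and $\widehat V$ is consistent for $V$. Therefore $\widehat V/n$ is the natural finite-sample approximation to $\mathrm{Var}(\hat\theta_n)$, which justifies $\widehat{\mathrm{Var}}(\hat\theta_n)=n^{-1}\widehat V$.
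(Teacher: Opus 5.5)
Your proposal is correct and follows the paper's own argument: establish $\hat G\xrightarrow{p}G$, $\widehat S\xrightarrow{p}S_0$ and $\hat W\xrightarrow{p}W_0$, then apply the continuous mapping theorem, using $\hat W=\widehat S^{-1}\xrightarrow{p}S_0^{-1}$ in the optimal case. You also spell out two steps the paper leaves implicit: the uniform law for the derivative (via the envelope $M_1$, continuity of $\bar G$, and consistency of $\hat\theta_n$), and the invertibility of $G^\top W_0G$ from Lemma~\ref{lem:fullrank} together with $W_0\succ0$.
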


\begin{proposition}[J test]\label{prop:J}
With \(X_i\) used as the instrument vector in both moment blocks, the model is just identified. The number of moment conditions is \(2p\), which matches the number of parameters in \(\theta\). Hence,
\[
J_n
=
n\,\bar{\textcolor{black}{\psi}}_n(\hat\theta_n)^\top\textcolor{black}{\hat W}\,\bar{\textcolor{black}{\psi}}_n(\hat\theta_n)
\]
is asymptotically degenerate at zero and does not provide a nontrivial overidentification test.
\end{proposition}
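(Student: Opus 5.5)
The plan is to show that in the just identified case $J_n = o_p(1)$; with probability tending to one it is exactly zero. Three facts drive the argument: Lemma~\ref{lem:fullrank} gives a square, nonsingular $G$; Theorem~\ref{thm:consistency} gives consistency; and the linearization in Lemma~\ref{lem:linearization} holds. The dimension count is immediate from \eqref{eq:momentvector}: $X_i$ multiplies each of the two scalar residuals, so $\psi\in\mathbb R^{2p}$, while $\theta=(\beta^\top,\gamma^\top)^\top\in\mathbb R^{2p}$. Hence $s=d=2p$, and by Lemma~\ref{lem:fullrank} the $2p\times 2p$ matrix $G$ is invertible.

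\textbf{Step 1: expand the sample moment.} Take a mean value expansion of $\bar\psi_n(\hat\theta_n)$ around $\theta_0$:
\[
\sqrt n\,\bar\psi_n(\hat\theta_n)=\sqrt n\,\bar\psi_n(\theta_0)+\bar G_n\sqrt n(\hat\theta_n-\theta_0).
\]
Here $\bar G_n\xrightarrow{p}G$. This follows from consistency, the envelope $M_1$ in Assumption~\ref{ass:M}(iii), and a uniform law for the derivative proved exactly as in Lemma~\ref{lem:ULLN}.

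\textbf{Step 2: use the square structure.} Substitute Lemma~\ref{lem:linearization}. Because $G$ is square and invertible,
\[
(G^\top W_0G)^{-1}G^\top W_0=G^{-1}W_0^{-1}G^{-\top}G^\top W_0=G^{-1}.
\]
Therefore $\sqrt n(\hat\theta_n-\theta_0)=-G^{-1}\sqrt n\,\bar\psi_n(\theta_0)+o_p(1)$. Plugging this into the expansion gives
\[
\sqrt n\,\bar\psi_n(\hat\theta_n)=(I_{2p}-\bar G_nG^{-1})\sqrt n\,\bar\psi_n(\theta_0)+o_p(1).
\]
The CLT gives $\sqrt n\,\bar\psi_n(\theta_0)=O_p(1)$, using $E[M_0^2]<\infty$ and $E\psi(\mathcal U,\theta_0)=0$. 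Since $I_{2p}-\bar G_nG^{-1}=o_p(1)$, we conclude $\sqrt n\,\bar\psi_n(\hat\theta_n)=o_p(1)$. Finally, $\hat W\xrightarrow{p}W_0$ implies $\hat W=O_p(1)$, so
\[
J_n=\{\sqrt n\bar\psi_n(\hat\theta_n)\}^\top\hat W\{\sqrt n\bar\psi_n(\hat\theta_n)\}\xrightarrow{p}0.
\]
This means $J_n$ converges to a point mass at zero, not to a $\chi^2_{s-d}$ law with positive degrees of freedom. It therefore has no power against violations of \eqref{eq:M}.

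\textbf{Main obstacle.} The delicate point is whether the linearization is valid without assuming the minimizer solves $\bar\psi_n(\theta)=0$ exactly. The stronger statement ($J_n=0$ with probability approaching one) requires showing that an exact root exists near $\theta_0$. For that I would apply the inverse function theorem to $\theta\mapsto\bar\psi_n(\theta)$ on a shrinking neighborhood of $\theta_0$, which is interior by Assumption~\ref{ass:A3}. Uniform convergence of the derivative to the nonsingular $G$ makes the map a local diffeomorphism whose image contains a ball of fixed radius around $\bar\psi_n(\theta_0)=o_p(1)$. That ball then contains $0$ with probability tending to one, so $Q_n(\hat\theta_n)=0$. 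Together with identification (Lemma~\ref{lem:ident}) and consistency, this pins the minimizer to that root. Even if this step is only sketched, the $o_p(1)$ argument above already establishes the asymptotic degeneracy the proposition claims.
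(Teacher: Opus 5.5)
Your argument is correct, but it is a genuinely different and more complete route than the paper's. The paper's proof is a one-line degrees-of-freedom count. It appeals to the standard overidentification result $J_n\Rightarrow\chi^2_{s-d}$ and reads $\chi^2_0$ as a point mass at zero.

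You instead derive the degeneracy from the paper's own machinery. You combine the mean value expansion with Lemma~\ref{lem:linearization} and the identity $(G^\top W_0G)^{-1}G^\top W_0=G^{-1}$, which holds because $G$ is square and nonsingular by Lemma~\ref{lem:fullrank}. This gives $\sqrt n\,\bar\psi_n(\hat\theta_n)=(I_{2p}-\bar G_nG^{-1})\sqrt n\,\bar\psi_n(\theta_0)+o_p(1)=o_p(1)$. That is exactly the statement that the limiting projection $I-G(G^\top W_0G)^{-1}G^\top W_0$, whose rank $s-d$ supplies the degrees of freedom in the usual $J$ theory, is the zero matrix here.

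Your route buys generality in the weight. The $\chi^2_{s-d}$ limit the paper leans on requires efficient weighting $\hat W\xrightarrow{p}S_0^{-1}$, whereas the proposition is stated for a generic $\hat W$. Your argument covers any $\hat W\xrightarrow{p}W_0\succ0$, even one different from the $W_n$ used in estimation. The paper's route buys only brevity.

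Two small remarks.

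First, the step you flag as the main obstacle is easier than you suggest. Lemma~\ref{lem:linearization} already uses the first-order condition $D_n(\hat\theta_n)^\top W_n\bar\psi_n(\hat\theta_n)=0$, which holds with probability tending to one since $\hat\theta_n$ is consistent for an interior $\theta_0$. Together with $D_n(\hat\theta_n)\xrightarrow{p}G$ nonsingular and $W_n\succ0$, it forces $\bar\psi_n(\hat\theta_n)=0$, hence $J_n=0$, with probability tending to one. No inverse function theorem and no appeal to identification is needed. Even in your inverse-function route, once some root exists every minimizer of $Q_n$ is a root, so the ``pinning'' step is superfluous.

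Second, your sentence that $J_n$ has no power against violations of \eqref{eq:M} does not follow from a null-distribution argument. What supports it is the exact-root property. That property holds whenever the sample equations have a root near a (pseudo-)true value with nonsingular Jacobian, whether or not \eqref{eq:M} is true.
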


\begin{proposition}[Wald tests]\label{prop:Wald}
For testing the linear restrictions
\[
H_0:R\theta_0=r,
\]
where \(R\) has full row rank, the Wald statistic is
\[
\mathcal W_n
=
(R\hat\theta_n-r)^\top
\left(
\frac{1}{n}R\widehat V R^\top
\right)^{-1}
(R\hat\theta_n-r).
\]
Under \(H_0\),
\[
\mathcal W_n
\xrightarrow{d}
\chi^2_{\mathrm{rank}(R)}.
\]
\end{proposition}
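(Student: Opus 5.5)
The Wald statistic is a continuous functional of two ingredients established earlier: the asymptotic normality of $\sqrt n(\hat\theta_n-\theta_0)$ from Theorem~\ref{thm:AN}, and the consistency of the plug-in covariance $\widehat V$ from Proposition~\ref{prop:Vhat_cons}. The plan is to combine them with the continuous mapping theorem and Slutsky's lemma. Throughout, let $q=\mathrm{rank}(R)$, the number of rows of $R$ since $R$ has full row rank.

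\textbf{Step 1: distribution of the restricted estimate.} Under $H_0$ we have $R\hat\theta_n-r=R(\hat\theta_n-\theta_0)$. Theorem~\ref{thm:AN} and the continuous mapping theorem then give
\[
\sqrt n(R\hat\theta_n-r)\Rightarrow\mathcal N(0,RVR^\top).
\]

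\textbf{Step 2: nonsingularity of $RVR^\top$.} This step needs some care. By Lemma~\ref{lem:fullrank}, $G$ has full column rank, and $W_0\succ0$ by Assumption~\ref{ass:A5}, so $G^\top W_0G$ is invertible. We also need $S_0\succ0$, which is implicit in the optimal-weighting setup of Lemma~\ref{lem:What_cons}. Write $A=(G^\top W_0G)^{-1}G^\top W_0$, so that $V=AS_0A^\top$. Since $AG=I$, $A$ has full row rank, and hence $V\succ0$. Because $R$ has full row rank, $RVR^\top$ is then positive definite. In the optimal case this is immediate, since $V=(G^\top S_0^{-1}G)^{-1}\succ0$.

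\textbf{Step 3: rewrite the statistic.} Since
\[
\mathcal W_n=\{\sqrt n(R\hat\theta_n-r)\}^\top(R\widehat VR^\top)^{-1}\{\sqrt n(R\hat\theta_n-r)\},
\]
Proposition~\ref{prop:Vhat_cons} gives $R\widehat VR^\top\xrightarrow{p}RVR^\top$. Matrix inversion is continuous at the nonsingular limit from Step~2, so $(R\widehat VR^\top)^{-1}\xrightarrow{p}(RVR^\top)^{-1}$; the inverse exists with probability tending to one.

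\textbf{Step 4: conclude.} Set $Z_n=(RVR^\top)^{-1/2}\sqrt n(R\hat\theta_n-r)$. By Step~1, $Z_n\Rightarrow\mathcal N(0,I_q)$. By Slutsky's lemma, $\mathcal W_n=Z_n^\top Z_n+o_p(1)$, so $\mathcal W_n\Rightarrow\chi^2_q$.

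The only genuine obstacle is the positive definiteness in Step~2. It is what justifies inverting the estimated matrix and what makes the limit a chi-square with the full $\mathrm{rank}(R)$ degrees of freedom. Everything else is standard asymptotics.
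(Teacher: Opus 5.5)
Your proof is correct and follows the same route as the paper: asymptotic normality of $\sqrt n(R\hat\theta_n-r)$ from Theorem~\ref{thm:AN}, consistency of $\widehat V$ from Proposition~\ref{prop:Vhat_cons}, then Slutsky and continuous mapping. Your Step~2 goes beyond the paper. You show $V=AS_0A^\top\succ0$ from $AG=I$ and $S_0\succ0$, hence $RVR^\top\succ0$, which is a nonsingularity condition the paper's proof uses without stating; you are right that it rests on $S_0\succ0$, which the paper assumes only implicitly through the optimal-weighting setup.
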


\section{Numerical Illustrations based on Simulated Data}
\label{sec:sim}

We assess the finite sample performance of the proposed GMM estimator under controlled data generating processes that cover both underdispersion and overdispersion. Each simulated data set contains $n=1000$ observations, and the random seed is fixed at $2025$ for reproducibility. An analytical gradient is used within the gradient based optimization procedure to improve computational efficiency \citep{NocedalWright2006}. Across all designs, we focus on the accuracy of point estimation and the reliability of inference. We report Monte Carlo (MC) means, standard deviations, average GMM standard errors, bias, and empirical coverage of nominal $95\%$ Wald intervals. We also summarize the distribution of estimation bias across replications using bias boxplots for the mean block $(\beta_0,\beta_1)$ and the dispersion block $(\gamma_0,\gamma_1)$.

\subsection{Under-dispersion}
\label{subsec:sim_under}

Table~\ref{tab:mc_under_only} indicates that the estimator recovers the true parameters for both under-dispersed baselines. The MC means are close to the truth and the biases are small for all parameters. The empirical coverage probabilities range from $0.92$ to $0.95$. Coverage is close to $0.95$ for several parameters, although it is somewhat below the nominal level for the ZABIN slopes and for some ZACMP dispersion parameters.


\begingroup
\renewcommand{\BigTblFont}{\footnotesize}
\renewcommand{\BigTblScale}{0.95}
\renewcommand{\BigTblWidth}{1.00\linewidth}

\begin{table}
\centering
{\BigTblFont
\renewcommand{\arraystretch}{1.15}
\setlength{\tabcolsep}{8.5pt}
\caption{Monte Carlo (MC) summary for under dispersed zero altered (ZA) models. We report results for the ZA Binomial model with $N$ trials (ZABIN($N$)) and the ZA COM Poisson model (ZACMP). For each parameter, we show the true value, the MC mean estimate (Mean Est.), the Monte Carlo standard deviation of the estimates (MC SD), the mean of the estimated GMM standard errors (Avg. GMM SE), the MC bias (Bias = Mean Est. minus True Value), and the empirical coverage probability of nominal 95 percent Wald intervals (Coverage probability). The last row reports elapsed runtime in seconds and the total number of objective function evaluations.}\label{tab:mc_under_only}

\begin{adjustbox}{max width=\BigTblWidth,center}
\scalebox{\BigTblScale}{%
\begin{tabular}{
|l|
c|
S[table-format=-1.4]|
S[table-format=-1.4]|
S[table-format=-1.4]|
S[table-format=-1.4]|
S[table-format=-1.4]|
S[table-format=-1.4]|
}
\hline
\textbf{Distribution}
& \textbf{Parameter}
& \multicolumn{1}{c|}{\shortstack{\textbf{True}\\\textbf{Value}}}
& \multicolumn{1}{c|}{\shortstack{\textbf{Mean}\\\textbf{Est.}}}
& \multicolumn{1}{c|}{\shortstack{\textbf{MC}\\\textbf{SD}}}
& \multicolumn{1}{c|}{\shortstack{\textbf{Avg. GMM}\\\textbf{SE}}}
& \multicolumn{1}{c|}{\textbf{Bias}}
& \multicolumn{1}{c|}{\shortstack{\textbf{Coverage}\\\textbf{probability}}}
\\
\hline

\multirow{4}{*}{\textbf{ZABIN($N$)}}
& $\beta_0$  &  0.4040 &  0.4039 & 0.0205 & 0.0221 & -0.0016 & 0.9500 \\
& $\beta_1$  &  0.1010 &  0.1014 & 0.0409 & 0.0381 &  0.0014 & 0.9167 \\
& $\gamma_0$ & -0.2020 & -0.2024 & 0.0275 & 0.0294 & -0.0024 & 0.9533 \\
& $\gamma_1$ & -0.1020 & -0.1017 & 0.0520 & 0.0505 & -0.0017 & 0.9200 \\
\hline

\multirow{4}{*}{\textbf{ZACMP}}
& $\beta_0$  & 1.2000 &  1.1993 & 0.0180 & 0.0177 & -0.0007 & 0.9433 \\
& $\beta_1$  & 0.5000 &  0.5020 & 0.0304 & 0.0301 &  0.0020 & 0.9467 \\
& $\gamma_0$ & 0.0000 & -0.0016 & 0.0154 & 0.0150 & -0.0016 & 0.9300 \\
& $\gamma_1$ & 0.0000 & -0.0009 & 0.0295 & 0.0283 & -0.0009 & 0.9233 \\
\hline

\multicolumn{8}{|c|}{
\begin{minipage}[c]{0.96\linewidth}
\vspace{1.5mm}
\footnotesize
\textbf{Bias tests:}
None of the estimated biases was statistically significant at the
5\% level; p-values ranged from 0.0671 to 0.5875.
\vspace{1.5mm}
\end{minipage}
}
\\
\hline

\multicolumn{8}{|c|}{
\begin{minipage}[c]{0.96\linewidth}
\vspace{1.5mm}
\footnotesize
\textbf{Runtime (elapsed s; function evals):}
ZABIN($N$): ---; 27.
\qquad
ZACMP: 58.0400; 48.
\vspace{1.5mm}
\end{minipage}
}
\\
\hline

\end{tabular}%
}
\end{adjustbox}

} 
\end{table}
\endgroup


\begin{figure}
  \centering

  \begin{minipage}[b]{0.48\linewidth}
    \centering
    \includegraphics[width=\linewidth]{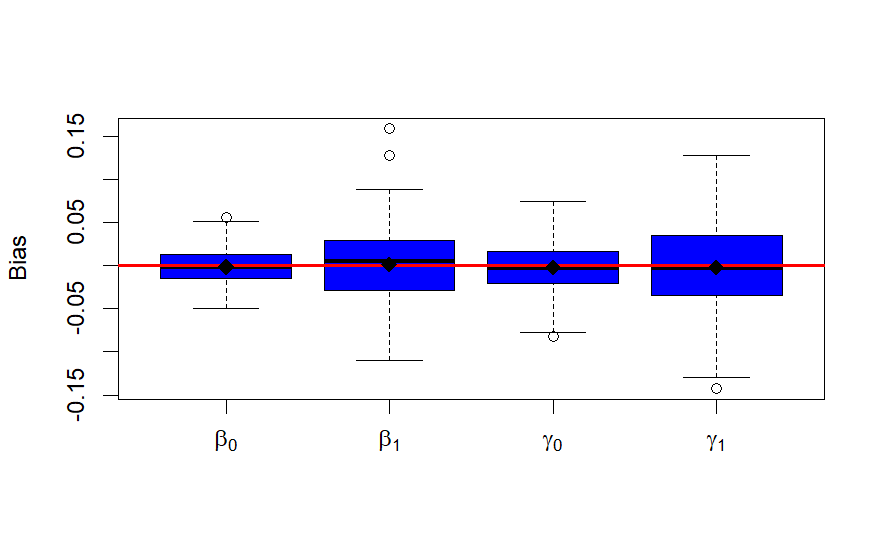}
    \par\smallskip
    (a) ZA Binomial
  \end{minipage}
  \hfill
  \begin{minipage}[b]{0.48\linewidth}
    \centering
    \includegraphics[width=\linewidth]{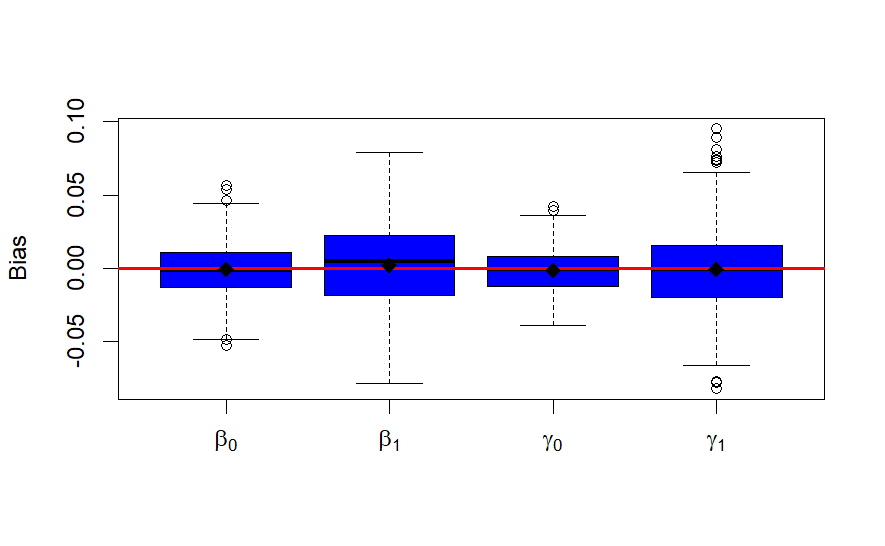}
    \par\smallskip
    (b) ZA CMP
  \end{minipage}

  \caption{Bias boxplots for zero-altered (ZA) models, under-dispersed cases: (a) ZA-Binomial with $N$ trials, and (b) ZA-COM-Poisson (ZACMP).}
  \label{fig:bias_under}
\end{figure}

We validated our findings in Table~\ref{tab:mc_under_only} with Figure~\ref{fig:bias_under} which indicate that the bias boxplots are centered near zero for all parameters in both models. The interquartile ranges are small and the medians lie close to the horizontal zero line, which supports the conclusion that the estimator is asymptotically unbiased in the under-dispersed cases.

\subsection{Over-dispersion}
\label{subsec:sim_over}

Table~\ref{tab:mc_over_only} shows strong finite sample performance across the over-dispersed designs. For the ZIP, ZA-NB2, and ZA-PIG models, the Monte Carlo means are close to the true parameters and the biases are small for both the mean block $(\beta_0,\beta_1)$ and the dispersion block $(\gamma_0,\gamma_1)$. The standard deviations are generally close to the average GMM standard errors, which supports the accuracy of the reported standard errors. The main exception is $\gamma_1$ for ZAPIG, where the MC SD is $0.0650$ and the average GMM SE is $0.0517$. Coverage is near the nominal $95\%$ level for most parameters, ranging from $0.8950$ to $0.9750$, with the lowest coverage occurring for ZA-PIG $\gamma_1$.


\begingroup
\renewcommand{\BigTblFont}{\footnotesize}
\renewcommand{\BigTblScale}{0.92}
\renewcommand{\BigTblWidth}{1.00\linewidth}

\begin{table}
\centering

{\BigTblFont
\renewcommand{\arraystretch}{1.15}
\setlength{\tabcolsep}{8.2pt}
\caption{Monte Carlo (MC) summary for over dispersed zero altered (ZA) models. We report results for the zero inflated Poisson model (ZIP), the ZA negative binomial model with NB2 variance function (ZANB2), and the ZA Poisson inverse Gaussian model (ZAPIG). For each parameter in each model, we report the true value, the MC mean estimate (Mean Est.), the Monte Carlo standard deviation of the estimates (MC SD), the mean of the estimated GMM standard errors (Avg. GMM SE), the MC bias (Bias = Mean Est. minus True Value), and the empirical coverage probability of nominal 95 percent Wald intervals (Coverage probability).}

\label{tab:mc_over_only}

\begin{adjustbox}{max width=\BigTblWidth,center}
\scalebox{\BigTblScale}{%
\begin{tabular}{
|l|
c|
S[table-format=-1.4]|
S[table-format=-1.4]|
S[table-format=-1.4]|
S[table-format=-1.4]|
S[table-format=-1.4]|
S[table-format=-1.4]|
}
\hline
\textbf{Distribution}
& \textbf{Parameter}
& \multicolumn{1}{c|}{\shortstack{\textbf{True}\\\textbf{Value}}}
& \multicolumn{1}{c|}{\shortstack{\textbf{Mean}\\\textbf{Est.}}}
& \multicolumn{1}{c|}{\shortstack{\textbf{MC}\\\textbf{SD}}}
& \multicolumn{1}{c|}{\shortstack{\textbf{Avg. GMM}\\\textbf{SE}}}
& \multicolumn{1}{c|}{\textbf{Bias}}
& \multicolumn{1}{c|}{\shortstack{\textbf{Coverage}\\\textbf{probability}}}
\\
\hline

\multirow{4}{*}{\textbf{ZIP}}
& $\beta_0$  & 1.0980 & 1.0983 & 0.0232 & 0.0239 &  0.0003 & 0.9580 \\
& $\beta_1$  & 0.3980 & 0.3977 & 0.0462 & 0.0435 & -0.0003 & 0.9400 \\
& $\gamma_0$ & 0.2000 & 0.1984 & 0.0219 & 0.0215 & -0.0016 & 0.9420 \\
& $\gamma_1$ & 0.2000 & 0.1994 & 0.0426 & 0.0399 & -0.0006 & 0.9340 \\
\hline

\multirow{4}{*}{\textbf{ZANB2}}
& $\beta_0$  & 1.2000 & 1.2005 & 0.0187 & 0.0177 &  0.0005 & 0.9240 \\
& $\beta_1$  & 0.5000 & 0.4998 & 0.0294 & 0.0302 & -0.0002 & 0.9600 \\
& $\gamma_0$ & 0.0000 & -0.0007 & 0.0151 & 0.0149 & -0.0007 & 0.9480 \\
& $\gamma_1$ & 0.0000 & -0.0005 & 0.0287 & 0.0283 & -0.0005 & 0.9420 \\
\hline

\multirow{4}{*}{\textbf{ZAPIG}}
& $\beta_0$  & 0.6000 & 0.6006 & 0.0255 & 0.0274 &  0.0006 & 0.9750 \\
& $\beta_1$  & 0.4000 & 0.3979 & 0.0298 & 0.0290 & -0.0021 & 0.9400 \\
& $\gamma_0$ & 0.1500 & 0.1445 & 0.0470 & 0.0445 & -0.0055 & 0.9500 \\
& $\gamma_1$ & 0.0500 & 0.0479 & 0.0650 & 0.0517 & -0.0021 & 0.8950 \\
\hline

\multicolumn{8}{|c|}{
\begin{minipage}[c]{0.96\linewidth}
\vspace{1.5mm}
\footnotesize
\textbf{Bias tests:}
None of the estimated biases was statistically significant at the
5\% level; p-values ranged from 0.0978 to 0.8536.
\vspace{1.5mm}
\end{minipage}
}
\\
\hline

\multicolumn{8}{|c|}{
\begin{minipage}[c]{0.96\linewidth}
\vspace{1.5mm}
\footnotesize
\textbf{Runtime (elapsed s; function evals):}
ZIP: 7.7100; 48.
\qquad
ZANB2: 7.9200; 42.
\qquad
ZAPIG: 4.8900; 42.
\vspace{1.5mm}
\end{minipage}
}
\\
\hline

\end{tabular}%
}
\end{adjustbox}

} 
\end{table}
\endgroup


\begin{figure}
  \centering

  \begin{minipage}[b]{0.32\linewidth}
    \centering
    \includegraphics[width=\linewidth]{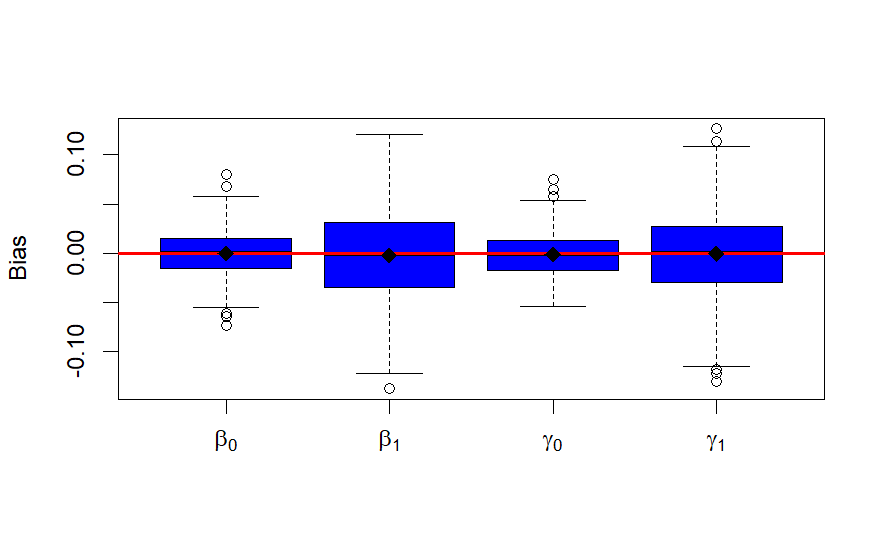}
    \par\smallskip
    (a) ZIP
  \end{minipage}
  \hfill
  \begin{minipage}[b]{0.32\linewidth}
    \centering
    \includegraphics[width=\linewidth]{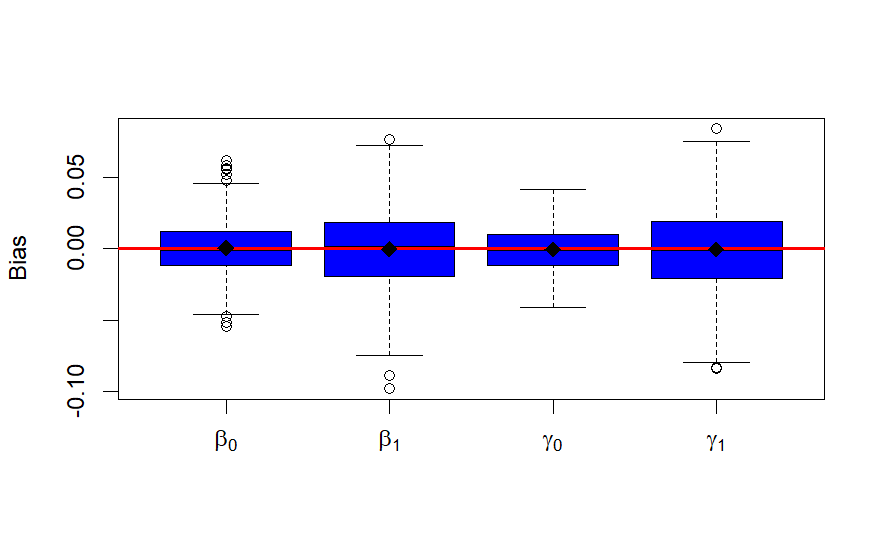}
    \par\smallskip
    (b) ZANB2
  \end{minipage}
  \hfill
  \begin{minipage}[b]{0.32\linewidth}
    \centering
    \includegraphics[width=\linewidth]{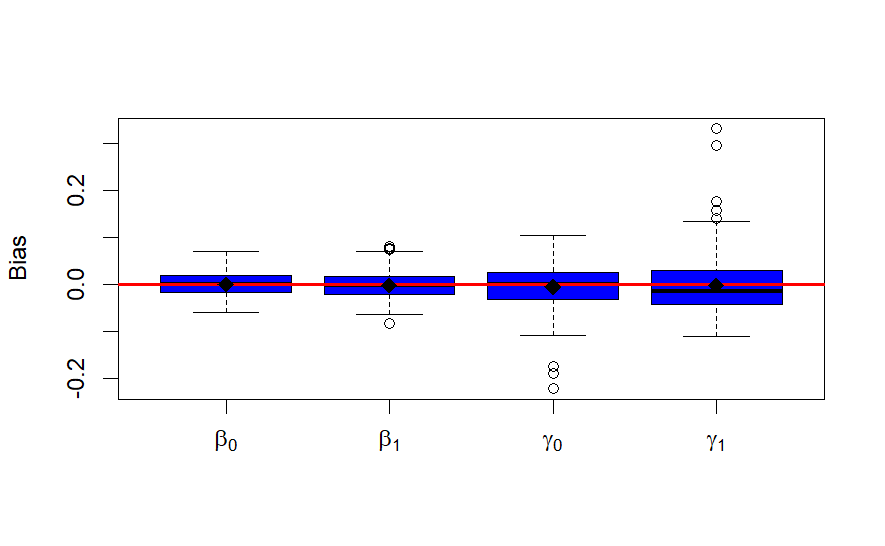}
    \par\smallskip
    (c) ZAPIG
  \end{minipage}

  \caption{Boxplot of biases for zero-altered (ZA) models, over-dispersed cases: zero-inflated Poisson (ZIP), ZA-negative binomial (ZANB2), and ZA-Poisson-inverse Gaussian (ZAPIG).}
  \label{fig:bias_over}
\end{figure}

We can also observe from Figure~\ref{fig:bias_over} that the bias boxplots are centered near zero in all three panels, with medians close to the horizontal zero line. The spread is small for ZIP and ZANB2, which indicates stable estimation. Based on these simulations, the proposed moments provide good finite sample performance under the designs considered, although the lower coverage for ZAPIG $\gamma_1$ suggests some caution for this parameter.

\section{Applications to Real Data Sets}
\label{RealData}
We fit our moment-based count model to three real datasets. The first example is the Attendance Records dataset (contained in the {\tt mixpoissonreg} package) --  an attendance dataset that reports the number of absences of 314 urban high school juniors. For each student, the dataset reports the number of days absent along with the student's gender, program track (General, Academic, Vocational), and standardized math score \citep{UCLA_NBR_Attendance}. The second dataset is the Takeover Bids dataset (contained in the {\tt CountsEPPM} package) which tracks the number of takeover bids from United States firms that faced tender offers between 1978 and 1985. For each company, the dataset records the number of bids and includes the opening premium, whether managers used legal or real asset restrictions, whether a white knight appeared, how much stock institutions held as well as firm size \citep{CameronJohansson1997,Huang2017CMPmu}.
The third data example stems from a Cotton bolls agricultural experiment where the study counts the number of cotton bolls observed on plants in one of five plant stages (vegetative, bud, blossom, fig, and boll). For each record,  the study also reports the defoliation level \citep{Fung_mpcmp, Zeviani2014GammaCount}.

\subsection{Exploratory data analysis}
\label{sec:eda}

Before fitting the regression models, we first examine dispersion diagnostics associated with the respective datasets. For each dataset, we form covariate groups and compute the within-group sample mean and sample variance of the count response. We then summarize dispersion using the log ratio,
\[
D=\log\left\{\frac{\widehat{\Var}(Y\mid \text{group})}{\widehat{\E}(Y\mid \text{group})}\right\}.
\]
Under Poisson behavior, we expect $\Var(Y\mid \text{group}) \approx \E(Y\mid \text{group})$, so $D$ should be close to zero. Positive values of $D$ indicate over-dispersion and negative values indicate under-dispersion. The horizontal dashed line at zero serves as the reference line for equi-dispersion. 

We apply this diagnostic to the three data sets. For the Attendance data, we group observations by program type and gender/sex, and we compute $D$ across quantile bins of the math score. This shows how dispersion changes with academic track and performance level. For the Takeover Bids dataset, we group by the presence of a white knight and by real restriction status, and we compute $D$ across quartiles of firm size. For the Cotton Bolls data, we group by growth stage and compute $D$ across quantile bins of the defoliation measure. Across all three examples, the diagnostic values vary across groups and, in some cases, across zero, which suggests that dispersion can depend on covariates and may not be well captured by a single Poisson or NB2 specification. These patterns motivate our regression framework that models both the mean and the dispersion as functions of covariates and estimate parameters using moment conditions rather than full likelihood specification.

\begin{figure}
    \centering
    \includegraphics[width=1\linewidth]{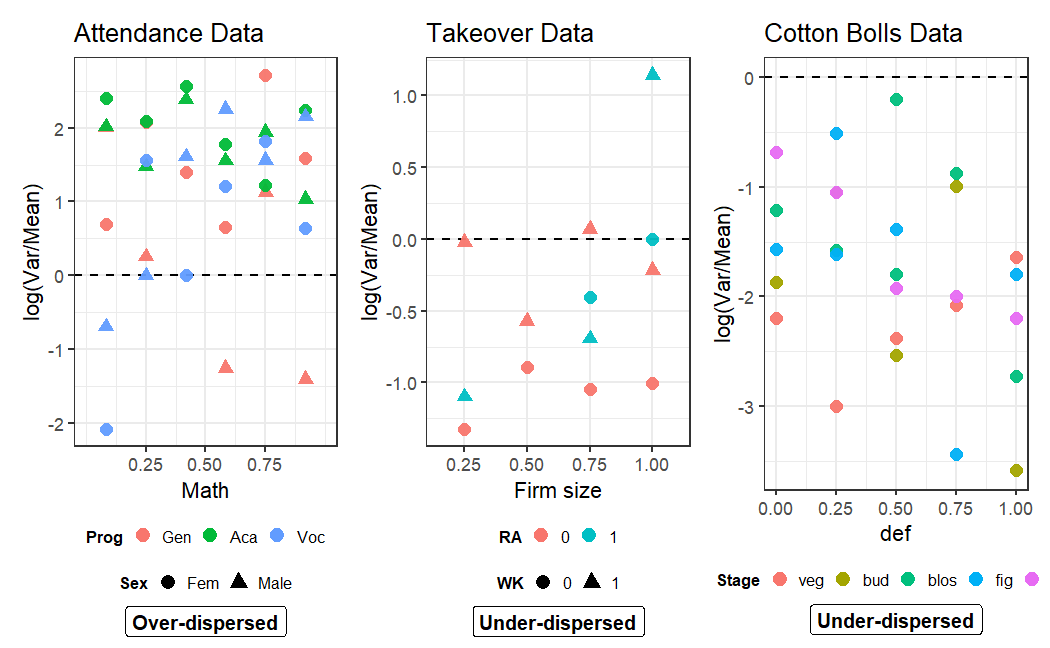}
    \caption{Dispersion diagnostic $\log\{\widehat{\Var}(Y)/\widehat{\E}(Y)\}$ across covariate groups for the Attendance, Takeover Bids, and Cotton Bolls data, where ``def'' denotes defoliation level quantile respectively. The dashed line at 0 marks equi-dispersion .}
    \label{EDA_For_ATC}
\end{figure}

From Figure~\ref{EDA_For_ATC},
each point is the value of $\log\{\widehat{\Var}(Y)/\widehat{\E}(Y)\}$ computed within one covariate group (the group is defined by the legend categories and the bin on the x-axis).
In the Attendance panel, most points are above the zero line across math levels, so the counts are mainly over-dispersed.  In the Takeover Bids panel, many groups fall below zero for small and medium firm size, which indicates under-dispersion in several groups. At the largest firm size, however, a few groups move toward zero and one group is above zero, suggesting that dispersion varies with firm size and the indicators. In the Cotton Bolls panel, almost all points are below zero across defoliation level (def), which implies strong under-dispersion for most stages. Differences across stages also suggest that dispersion is not constant and should be allowed to depend on covariates.

\subsection{Analyses and Results for the Respective Examples}
In our model, we link the mean count to covariates via a log link, and associate the dispersion index through the map \(1+\phi=\exp(z^\top\gamma)\). This allows \(\phi\) to be less than zero for under-dispersed data or greater than zero for over-dispersed data. We use the first two moments, and compare models where one considers a constant \(\phi\), and another where \(\phi\) changes with covariates. This design preserves the mean effects from the generalized linear models; this also lets the data show where the spread is constant and where the spread changes with observed features. Prior work on these datasets often use models with a single dispersion number. For our analysis, we check that assumption and report tests and diagnostics.

\subsubsection{Case Study I: Attendance Data}
\label{case1}

\begingroup
\renewcommand{\BigTblFont}{\footnotesize}
\renewcommand{\BigTblScale}{0.90}
\renewcommand{\BigTblWidth}{1.00\linewidth}
\renewcommand{\DistW}{16mm}

\begin{table}
\caption{Attendance: compact comparison of constant vs varying $\phi$ with named parameters (zamreg, GMM).}
\label{tab:attendance_compact_named}
\centering
{\BigTblFont
\renewcommand{\arraystretch}{1.12}
\setlength{\tabcolsep}{3pt}
\begin{adjustbox}{max width=\BigTblWidth,center}
\scalebox{\BigTblScale}{%
\begin{tabular}{|l|*{8}{P{\DistW}|}}
\hline
\multicolumn{1}{|c|}{\textbf{Param}} &
\multicolumn{4}{c|}{\textbf{Constant $\phi$}} &
\multicolumn{4}{c|}{\textbf{Varying $\phi$}} \\
\cline{2-9}
 & \textbf{Est} & \textbf{SE} & \textbf{t} & \textbf{p} & \textbf{Est} & \textbf{SE} & \textbf{t} & \textbf{p} \\
\hline
$\beta_0$ (Intercept)             & 2.7631 & 0.1557 & 17.7425 & 0.0000 & 2.7640 & 0.1556 & 17.7634 & 0.0000 \\
$\beta_{\text{male}}$             & -0.2412 & 0.1197 & -2.0155 & 0.0438 & -0.2424 & 0.1197 & -2.0244 & 0.0429 \\
$\beta_{\text{progAcademic}}$     & -0.4322 & 0.1389 & -3.1117 & 0.0019 & -0.4343 & 0.1387 & -3.1308 & 0.0017 \\
$\beta_{\text{progVocational}}$   & -1.2854 & 0.1799 & -7.1430 & 0.0000 & -1.2910 & 0.1801 & -7.1690 & 0.0000 \\
$\beta_{\text{math}}$             & -0.0069 & 0.0024 & -2.9337 & 0.0033 & -0.0069 & 0.0024 & -2.9173 & 0.0035 \\
\hline
$\gamma_0$ (Intercept)            & 0.5739 & 0.0465 & 12.3548 & 0.0000 & 0.3348 & 0.1053 &  3.1795 & 0.0015 \\
$\gamma_{\text{male}}$            &   --   &   --   &   --    &   --   & -0.0289 & 0.1049 & -0.2753 & 0.7831 \\
$\gamma_{\text{progAcademic}}$    &   --   &   --   &   --    &   --   & 0.3285 & 0.0942 &  3.4864 & 0.0005 \\
$\gamma_{\text{progVocational}}$  &   --   &   --   &   --    &   --   & 0.5947 & 0.1342 &  4.4325 & 0.0000 \\
$\gamma_{\text{math}}$            &   --   &   --   &   --    &   --   & 0.0001 & 0.0020 &  0.0610 & 0.9513 \\
\hline
\multicolumn{9}{|p{\dimexpr\linewidth-2\fboxsep\relax}|}{
\textbf{Notes:} Constant $\phi=0.7752$. \;
Varying $\phi(x)$: mean$=1.0569$, min$=0.3581$, max$=1.5635$. \;
Wald (dispersion slopes): $\chi^2(4)=24.6712$, $p=0.0001$. \;
\textbf{Runtime (proc.time):} constant (user 0.01, system 0.00, elapsed 0.01), varying (user 0.02, system 0.00, elapsed 0.03).
}\\
\hline
\end{tabular}%
}
\end{adjustbox}
} 
\end{table}
\endgroup


From Table~\ref{tab:attendance_compact_named}, the baseline student is a girl in the General track with a math score of zero. Her expected absences are about $e^{2.7631}\approx15.85$, or approximately $16$ days. Boys have a lower absentee rate at approximately $e^{-0.2412}\approx0.79$ times the girls' rate (i.e. approximately 21\% fewer days). The Academic track is associated with a rate of $e^{-0.4322}\approx0.65$ of the General track rate, which is about 35\% fewer days, and this effect is statistically significant in both models ($p=0.0019$ for constant $\phi$ and $p=0.0017$ for varying $\phi$). The Vocational track has the largest decrease, with $e^{-1.2854}\approx0.28$ of the General track rate, so about 72\% fewer days, and this effect is also strongly significant in both models ($p<0.0001$). Also, a 10 point increase in math score is associated with $e^{-0.0069\times10}\approx0.93$, about seven percent fewer days, which is significant ($p=0.0033$ for constant $\phi$ and $p=0.0035$ for varying $\phi$). These estimates are almost unchanged when dispersion is allowed to vary, implying that the mean effects are stable.

The spread tells us how noisy the counts are around the mean. If we keep one spread for everyone, we estimate $\phi=0.7752$. This means that the variance is $\mu+0.7752\mu^2$, which is more variable than a Poisson. When we let the spread depend on sex/gender, program and math, the baseline dispersion intercept is $\gamma_0=0.3348$, which corresponds to $\phi=e^{0.3348}-1\approx0.3977$ for a girl in the General track with a math score of zero. The slopes show clear differences by program: progAcademic has $\gamma_{\text{progAcademic}}=0.3285$ ($p=0.0005$) and progVocational has $\gamma_{\text{progVocational}}=0.5947$ ($p<0.0001$), while male and math are not important for the spread (both are not significant). The joint test of the dispersion slopes gives $\chi^2(4)=24.6712$ with $p=0.0001$, so the constant spread assumption is not supported here. In other words, the data are over-dispersed overall, and the amount of extra spread changes across programs.

This has direct meaning for schools and districts. The mean results say program design is a strong lever: Academic and Vocational tracks are linked to fewer absences relative to the General track. The dispersion results add a second message: even after accounting for mean differences, the Academic and Vocational groups have more uneven attendance patterns (more variability around the mean). Practically, this means stakeholders should not only look at average attendance, but also stability. A district can use this to target support: the mean effects suggest which programs are associated with better attendance on average, while the varying dispersion suggests where attendance is less predictable and may need extra monitoring, early warning systems, transportation or scheduling support, and flexible interventions. For an education board or government agency, this helps planning and budgeting, because higher dispersion implies more uncertainty in expected absences across students and classrooms, so resources for attendance teams and support services can be allocated more efficiently to the programs where variability is higher. Many applied examples fit a single dispersion number for these data, for instance the standard negative binomial treatment used in the UCLA guide \citep{UCLA_NBR_Attendance}. Our results show that a single number can miss real differences in spread across programs, even though the mean effects are stable.

\begin{figure}
  \centering
  \includegraphics[width=0.75\linewidth]{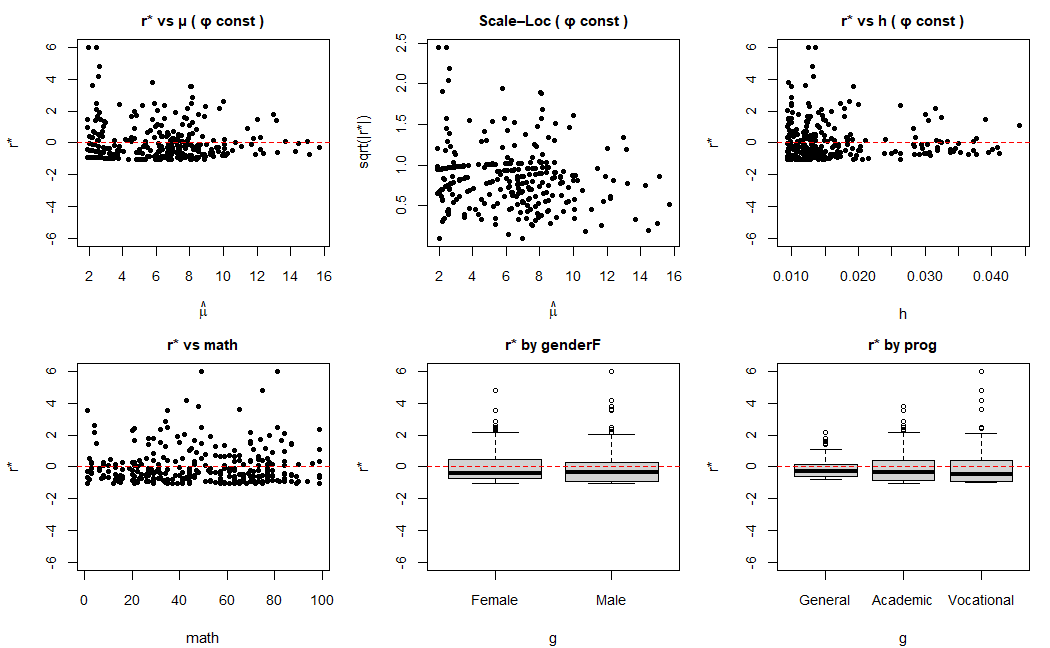}
  \caption{Attendance diagnostics under constant $\phi$.}
  \label{fig:att-diag-1}
\end{figure}

\begin{figure}
  \centering
  \includegraphics[width=0.75\linewidth]{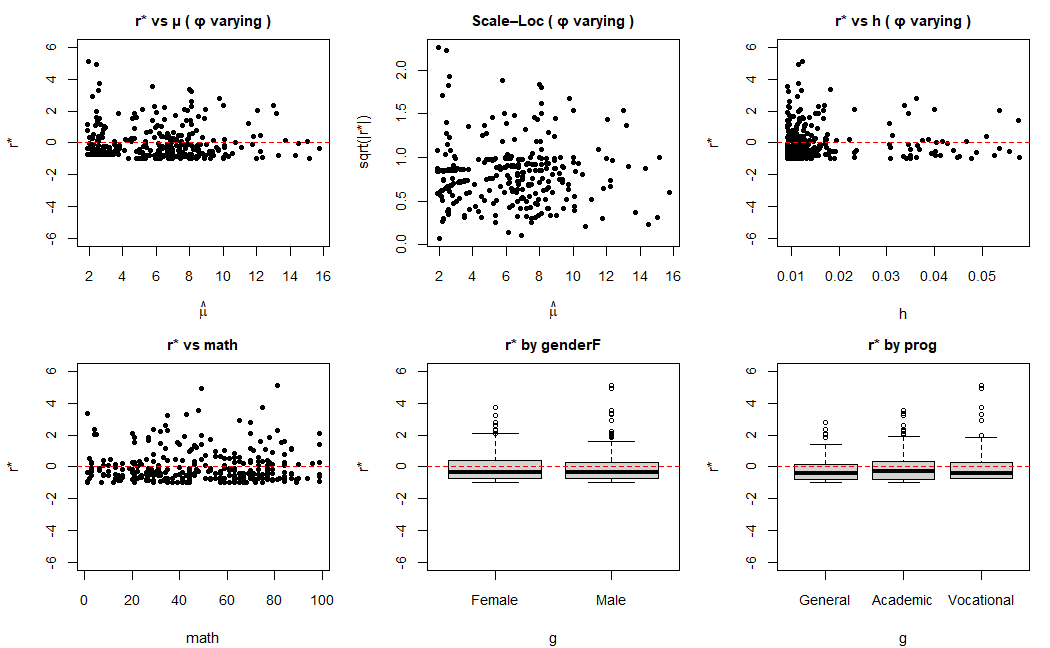}
  \caption{Attendance diagnostics under varying $\phi$.}
  \label{fig:att-diag-2}
\end{figure}

Figures \ref{fig:att-diag-1} and \ref{fig:att-diag-2} plot standardized Pearson residuals \(r^{*}\) for the constant \(\phi\) fit (left) and for the varying \(\phi\) fit (right). In both columns, the cloud of \(r^{*}\) values is centered close to zero and almost all points lie between \(-3\) and \(+3\). The plot of \(r^{*}\) against the fitted mean \(\hat\mu\) shows no clear trend and no strong funnel shape. Again we can see that the scale location panels are fairly flat, which indicates that both variance models describe much of the residual spread. The plot of \(r^{*}\) against leverage \(h\) also shows no notable pattern. The plot of \(r^{*}\) against math is nearly flat and the boxplots by gender and by program have broadly similar centers and spreads. Although the two sets of panels look similar for constant \(\phi\) and varying \(\phi\), the formal test in Table~\ref{tab:attendance_compact_named} shows that the dispersion slopes are jointly important \((\chi^{2}(4)=24.6712,\ p=0.0001)\). Thus, the visual differences are subtle, but the constant spread assumption is not supported, mainly because dispersion differs across programs.

\subsubsection{Case Study II: Takeover Bids Data}
\label{case2}

\begingroup
\renewcommand{\BigTblFont}{\footnotesize}
\renewcommand{\BigTblScale}{0.90}
\renewcommand{\BigTblWidth}{1.00\linewidth}
\renewcommand{\DistW}{16mm}

\begin{table}
\caption{Takeover Bids: compact comparison of constant vs varying $\phi$ with named parameters.}
\label{tab:takeover_compact_named}
\centering
{\BigTblFont
\renewcommand{\arraystretch}{1.12}
\setlength{\tabcolsep}{3pt}
\begin{adjustbox}{max width=\BigTblWidth,center}
\scalebox{\BigTblScale}{%
\begin{tabular}{|l|*{8}{P{\DistW}|}}
\hline
\multicolumn{1}{|c|}{\textbf{Param}} &
\multicolumn{4}{c|}{\textbf{Constant $\phi$}} &
\multicolumn{4}{c|}{\textbf{Varying $\phi$}} \\
\cline{2-9}
 & \textbf{Est} & \textbf{SE} & \textbf{t} & \textbf{p} & \textbf{Est} & \textbf{SE} & \textbf{t} & \textbf{p} \\
\hline
$\beta_0$ (Intercept)            & 1.0610 & 0.4175 &  2.5412 & 0.0110 & 1.0593 & 0.4177 &  2.5360 & 0.0112 \\
$\beta_{\text{BIDPREM}}$         & -0.6575 & 0.3034 & -2.1670 & 0.0302 & -0.6581 & 0.3035 & -2.1684 & 0.0301 \\
$\beta_{\text{INSTHOLD}}$        & -0.3725 & 0.3190 & -1.1677 & 0.2429 & -0.3716 & 0.3192 & -1.1640 & 0.2444 \\
$\beta_{\text{LEGLREST}}$        & 0.2607  & 0.1245 &  2.0941 & 0.0362 & 0.2588  & 0.1246 &  2.0770 & 0.0378 \\
$\beta_{\text{REALREST}}$        & -0.1121 & 0.1946 & -0.5762 & 0.5645 & -0.1059 & 0.1939 & -0.5461 & 0.5850 \\
$\beta_{\text{FINREST}}$         & 0.0810  & 0.2543 &  0.3184 & 0.7502 & 0.0838  & 0.2540 &  0.3299 & 0.7415 \\
$\beta_{\text{REGULATN}}$        & -0.0514 & 0.1504 & -0.3416 & 0.7327 & -0.0462 & 0.1503 & -0.3074 & 0.7585 \\
$\beta_{\text{WHTKNGHT1}}$       & 0.5103  & 0.1067 &  4.7824 & 0.0000 & 0.5106  & 0.1069 &  4.7779 & 0.0000 \\
$\beta_{\text{logSIZEz}}$        & 0.1640  & 0.0798 &  2.0554 & 0.0398 & 0.1624  & 0.0798 &  2.0342 & 0.0419 \\
\hline
$\gamma_0$ (Intercept)           & -0.0416 & 0.1068 & -0.3891 & 0.6972 & -0.2551 & 0.7262 & -0.3513 & 0.7254 \\
$\gamma_{\text{BIDPREM}}$        &   --    &   --   &   --    &   --   & -0.6226 & 0.4969 & -1.2529 & 0.2102 \\
$\gamma_{\text{INSTHOLD}}$       &   --    &   --   &   --    &   --   & -0.0118 & 0.5476 & -0.0215 & 0.9828 \\
$\gamma_{\text{LEGLREST}}$       &   --    &   --   &   --    &   --   & 0.1455  & 0.2572 &  0.5656 & 0.5716 \\
$\gamma_{\text{REALREST}}$       &   --    &   --   &   --    &   --   & 0.7521  & 0.2472 &  3.0428 & 0.0023 \\
$\gamma_{\text{FINREST}}$        &   --    &   --   &   --    &   --   & 0.2442  & 0.1944 &  1.2559 & 0.2091 \\
$\gamma_{\text{REGULATN}}$       &   --    &   --   &   --    &   --   & 0.1003  & 0.2622 &  0.3824 & 0.7022 \\
$\gamma_{\text{WHTKNGHT1}}$      &   --    &   --   &   --    &   --   & 0.6183  & 0.2173 &  2.8447 & 0.0044 \\
$\gamma_{\text{logSIZEz}}$       &   --    &   --   &   --    &   --   & 0.0672  & 0.0554 &  1.2124 & 0.2254 \\
\hline
\multicolumn{9}{|p{\dimexpr\linewidth-2\fboxsep\relax}|}{
\textbf{Notes:} Constant $\phi=-0.0407$. \;
Varying $\phi(x)$: mean$=-0.2722$, min$=-0.7493$, max$=1.4081$. \;
Wald (dispersion slopes): $\chi^2(8)=46.1476$, $p=0.0000$. \;
\textbf{Runtime (proc.time):} constant (user 0.01, system 0.00, elapsed 0.00), varying (user 0.06, system 0.04, elapsed 0.11).
}\\
\hline
\end{tabular}%
}
\end{adjustbox}
} 
\end{table}
\endgroup


Earlier studies often used one common number for the spread of the counts. \cite{Huang2017CMPmu} fits a mean-parametrized COM-Poisson model with a single dispersion term and \cite{SaezCastillo2013HyperPoisson} use a hyper-Poisson model that also treats dispersion as constant. Our results in Table~\ref{tab:takeover_compact_named} meanwhile add a new component. The mean results agree with past work. When a white knight appears, the expected number of bids is about 67\% higher \(\exp(0.5103)\approx1.67\) with \(p<0.0001\). When the first offer has a higher premium, the expected number of bids is about 52\% as large \(\exp(-0.6575)\approx0.52\), or approximately 48\% lower, with \(p=0.0302\). A legal restriction flag is linked to about 30\% more bids \(\exp(0.2607)\approx1.30\) with \(p=0.0362\). A one unit increase in standardized log firm size is associated with about 18\% more bids \(\exp(0.1640)\approx1.18\) with \(p=0.0398\).

The dispersion component provides additional information beyond the mean model. If we force one common spread, we get a small fixed under-dispersion \(\phi=-0.0407\). When we let the spread change with the same covariates, the average becomes \(\phi=-0.2722\) and it ranges from \(-0.7493\) to \(1.4081\). The dispersion slopes matter as a group with \(\chi^2(8)=46.1476\) and \(p<0.0001\). Most important for policy, the real asset restriction has a strong and precise effect on spread. The dispersion link coefficient estimate for real asset restriction is \(0.7521\) (SE \(=0.2472\), \(t=3.0428\), \(p=0.0023\)); this multiplies \(1+\phi\) by about \(e^{0.7521}\approx2.12\), thus the real asset restrictions make outcomes much more variable. The white knight also raises spread with an estimated coefficient of \(\hat{\gamma}_{\text{WHTKNGHT1}}=0.6183\) (\(p=0.0044\)), which multiplies \(1+\phi\) by about \(e^{0.6183}\approx1.86\).

These findings have direct implications for corporate boards and regulatory policy. A firm that attracts a white knight should anticipate both more bids and greater variability in the bidding process. A firm that uses real asset restrictions should anticipate greater variability, although the mean effect of real asset restrictions on the expected number of bids is not statistically significant (\(p=0.5645\) in the constant dispersion model and \(p=0.5850\) in the varying dispersion model). Forecasts should therefore allow for a wider range of possible outcomes. A first premium which is generous is associated with fewer bids. Using one fixed dispersion (as in \cite{Huang2017CMPmu} and \cite{SaezCastillo2013HyperPoisson}) can miss this risk. Allowing dispersion to vary produces predictions that better reflect the observed heterogeneity and can support more informed planning by managers and regulators.

\begin{figure}
    \centering
    \includegraphics[width=0.75\linewidth]{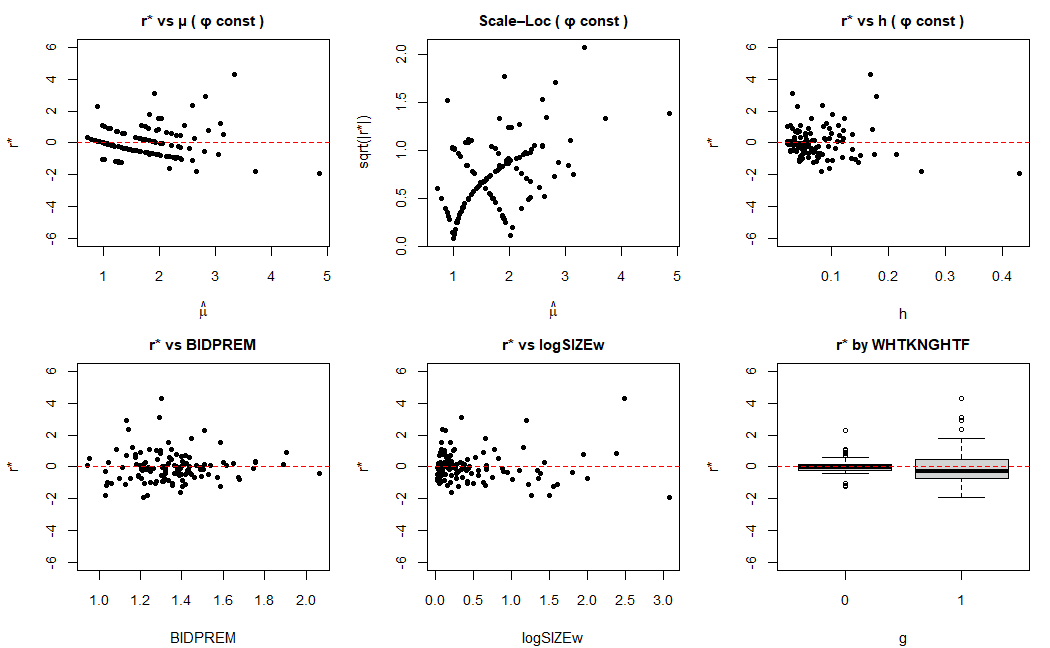}
    \caption{Takeover Bids diagnostics under constant $\phi$.}
    \label{fig:takeover_const}
\end{figure}

\begin{figure}
    \centering
    \includegraphics[width=0.75\linewidth]{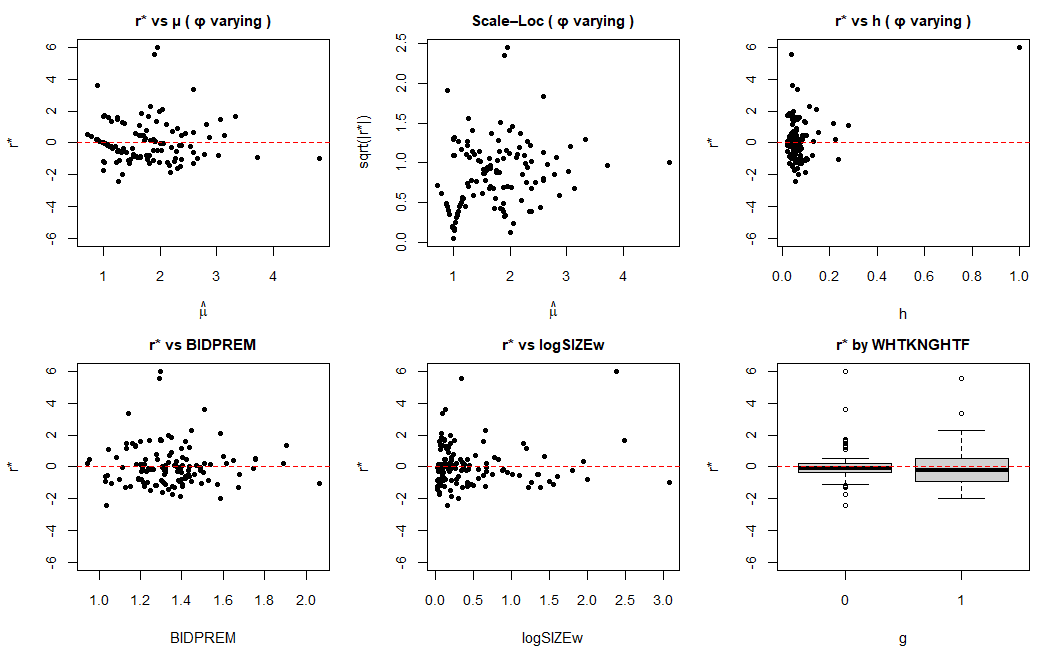}
    \caption{Takeover Bids diagnostics under varying $\phi$.}
    \label{fig:takeover_var}
\end{figure}

From Figures \ref{fig:takeover_const} and \ref{fig:takeover_var}, we learned that,
under the constant \(\phi\) fit, the top row shows signs of mismatch. The plot of \(r^{*}\) against the fitted mean \(\hat\mu\) has a light funnel shape, and the scale location panel bends upward. This says the variance grows with the mean for some firms and shrinks for others, although the model is forcing one spread for all. The plot of \(r^{*}\) against leverage \(h\) is quiet, so the lack of fit is not caused by a single point. When \(\phi\) is allowed to vary, the picture changes. The cloud of \(r^{*}\) against \(\hat\mu\) becomes even and the scale location panel is almost flat. The dispersion pattern is now absorbed by the model and the leverage plot remains calm which confirms that the improvement is global rather than driven by a few cases. The bottom row of images links residuals to key covariates. With a constant \(\phi\), there is extra spread for firms with a white knight and for firms with real asset actions, and there is a hint of widening with size. With a varying \(\phi\), these group differences largely disappear. The boxes by \texttt{WHITKNGHT} look much closer and there is no clear trend against \texttt{SIZE} or \texttt{BIDPREM}. We can therefore see that, once the model lets dispersion respond to firm actions, the residuals behave like noise. This visual check also agrees with the formal test in Table~\ref{tab:takeover_compact_named}. The dispersion slopes are jointly important with \(\chi^{2}(8)=46.1476\) and \(p<0.0001\), and the diagnostics explain why. A single spread misses the extra variability created by white knight moves and by real asset restrictions, while the varying \(\phi\) fit captures it and leaves no obvious structure in the residuals.

\subsubsection{Case Study III: Cotton Bolls Data}
For the Cotton Bolls analysis, defoliation is mean centered, with
$\texttt{defc}_i=\texttt{def}_i-\overline{\texttt{def}}$.
Thus, $\texttt{defc}=0$ corresponds to the average observed defoliation level.
The vegetative stage is used as the reference category.

\label{case3}
\begingroup
\renewcommand{\BigTblFont}{\footnotesize}
\renewcommand{\BigTblScale}{0.88}
\renewcommand{\BigTblWidth}{1.00\linewidth}
\renewcommand{\DistW}{16mm}

\begin{table}
\caption{Cotton Bolls: compact comparison of constant versus varying $\phi$ with named parameters }
\label{tab:cotton_compact_named}
\centering
{\BigTblFont
\renewcommand{\arraystretch}{1.10}
\setlength{\tabcolsep}{3pt}
\begin{adjustbox}{max width=\BigTblWidth,center}
\scalebox{\BigTblScale}{%
\begin{tabular}{|l|*{8}{P{\DistW}|}}
\hline
\multicolumn{1}{|c|}{\textbf{Param}} &
\multicolumn{4}{c|}{\textbf{Constant $\phi$}} &
\multicolumn{4}{c|}{\textbf{Varying $\phi$}} \\
\cline{2-9}
 & \textbf{Est} & \textbf{SE} & \textbf{t} & \textbf{p} & \textbf{Est} & \textbf{SE} & \textbf{t} & \textbf{p} \\
\hline
$\beta_{0}$ (Intercept) & 2.1059 & 0.0272 & 77.4084 & 0.0000 & 2.1107 & 0.0270 & 78.1174 & 0.0000 \\
$\beta_{\text{stagesbud}}$        & 0.0368 & 0.0406 &  0.9064 & 0.3647 & 0.0311 & 0.0405 &  0.7661 & 0.4436 \\
$\beta_{\text{stagesblossom}}$    & -0.2816 & 0.0518 & -5.4402 & 0.0000 & -0.2890 & 0.0518 & -5.5834 & 0.0000 \\
$\beta_{\text{stagesfig}}$        & -0.2305 & 0.0619 & -3.7252 & 0.0002 & -0.2362 & 0.0618 & -3.8190 & 0.0001 \\
$\beta_{\text{stagesboll}}$       & 0.1136 & 0.0407 &  2.7885 & 0.0053 & 0.1079 & 0.0406 &  2.6537 & 0.0080 \\
\hline
$\beta_{\text{defc}}$                 & -0.3679 & 0.0819 & -4.4900 & 0.0000 & -0.3838 & 0.0831 & -4.6180 & 0.0000 \\
$\beta_{\text{stagesbud:defc}}$       & 0.2271 & 0.1143 &  1.9866 & 0.0470 & 0.2448 & 0.1150 &  2.1283 & 0.0333 \\
$\beta_{\text{stagesblossom:defc}}$   & -0.2312 & 0.1439 & -1.6070 & 0.1080 & -0.2134 & 0.1453 & -1.4688 & 0.1419 \\
$\beta_{\text{stagesfig:defc}}$       & -0.3546 & 0.1587 & -2.2343 & 0.0255 & -0.3356 & 0.1593 & -2.1073 & 0.0351 \\
$\beta_{\text{stagesboll:defc}}$      & 0.2288 & 0.1193 &  1.9175 & 0.0552 & 0.2467 & 0.1202 &  2.0514 & 0.0402 \\
\hline
$\gamma_{0}$ (Intercept)                 & -0.0962 & 0.0048 & -19.9191 & 0.0000 & -0.0971 & 0.0079 & -12.2585 & 0.0000 \\
\hline
$\gamma_{\text{stagesbud}}$            & -- & -- & -- & -- & -0.0031 & 0.0120 & -0.2605 & 0.7945 \\
$\gamma_{\text{stagesblossom}}$        & -- & -- & -- & -- & -0.0396 & 0.0217 & -1.8271 & 0.0677 \\
$\gamma_{\text{stagesfig}}$            & -- & -- & -- & -- & 0.0345 & 0.0304 &  1.1371 & 0.2555 \\
$\gamma_{\text{stagesboll}}$           & -- & -- & -- & -- & 0.0058 & 0.0107 &  0.5437 & 0.5867 \\
\hline
$\gamma_{\text{defc}}$                  & -- & -- & -- & -- & 0.0506 & 0.0204 &  2.4823 & 0.0131 \\
$\gamma_{\text{stagesbud:defc}}$        & -- & -- & -- & -- & -0.0623 & 0.0280 & -2.2198 & 0.0264 \\
$\gamma_{\text{stagesblossom:defc}}$    & -- & -- & -- & -- & -0.2624 & 0.0911 & -2.8809 & 0.0040 \\
$\gamma_{\text{stagesfig:defc}}$        & -- & -- & -- & -- & 0.1358 & 0.0887 &  1.5297 & 0.1261 \\
$\gamma_{\text{stagesboll:defc}}$       & -- & -- & -- & -- & -0.1069 & 0.0257 & -4.1600 & 0.0000 \\
\hline
\multicolumn{9}{|p{\dimexpr\linewidth-2\fboxsep\relax}|}{
\textbf{Notes:} Constant $\phi=-0.0917$. \;
Varying $\phi(x)$: mean$=-0.0917$, min$=-0.2154$, max$=0.0311$. \;
Wald (dispersion slopes): $\chi^2(9)=32.0358$, $p=0.0002$. \;
\textbf{Runtime (proc.time):} constant (user 0.06, system 0.00, elapsed 0.06), varying (user 0.10, system 0.00, elapsed 0.11).
}\\
\hline
\end{tabular}%
}
\end{adjustbox}
} 
\end{table}
\endgroup

As we can see from Table~\ref{tab:cotton_compact_named}, at zero defoliation the expected counts by stage are high in vegetative and boll and lower in blossom and fig. The stage intercepts imply about $e^{2.1059}\!\approx\!8.2$ bolls for vegetative, $e^{2.1059+0.0368}\!\approx\!8.5$ for bud, $e^{2.1059-0.2816}\!\approx\!6.2$ for blossom, $e^{2.1059-0.2305}\!\approx\!6.5$ for fig and $e^{2.1059+0.1136}\!\approx\!9.2$ for boll. The first order defoliation slope in the baseline stage is negative with a very small $p$ value ($p<0.0001$), so more leaf removal means fewer bolls. The defoliation effect also differs by stage: bud has an additional positive defoliation interaction ($p=0.0470$ for constant $\phi$ and $p=0.0333$ for varying $\phi$), and fig has an additional negative interaction ($p=0.0255$ and $p=0.0351$). The boll interaction is borderline in the constant $\phi$ fit ($p=0.0552$) but becomes significant in the varying spread fit ($p=0.0402$). Blossom's interaction is not significant in either fit. Overall, the dominant message stays the same: defoliation is linked to fewer bolls, and the size of that drop depends on the plant stage.

An important point is that, if we impose one common dispersion level, we estimate a fixed under-dispersion of \(\hat{\phi}=-0.0917\) (from \(\gamma_0=-0.0962\)). When we let the spread change with the same covariates, the average stays near \(\hat{\phi}=-0.0917\), but the fitted values range from \(-0.2154\) to \(0.0311\), so the spread differs across stage and defoliation combinations. The dispersion slopes are important as a group \((\chi^2(9)=32.0358,\ p=0.0002)\). Some stage main effects in the dispersion are not important, for example stagesbud has \(p=0.7945\), stagesfig has \(p=0.2555\), and stagesboll has \(p=0.5867\), while stagesblossom is close but still not significant at 0.05 \((p=0.0677)\). The strongest evidence for changing spread comes from the baseline defoliation slope and the defoliation interaction terms in the dispersion part: \(\gamma_{\text{defc}}=0.0506\) \((p=0.0131)\), \(\gamma_{\text{stagesbud:defc}}=-0.0623\) \((p=0.0264)\), \(\gamma_{\text{stagesblossom:defc}}=-0.2624\) \((p=0.0040)\), and \(\gamma_{\text{stagesboll:defc}}=-0.1069\) \((p<0.0001)\), while \(\gamma_{\text{stagesfig:defc}}\) is not significant \((p=0.1261)\). This says the spread is driven mainly by defoliation and how defoliation interacts with stage, not just by stage alone.

In practical terms, the data are under-dispersed on average (more regular than Poisson) because the mean fitted \(\phi\) is negative. However, the varying dispersion fit reaches a maximum of \(\phi=0.0311\), so a small amount of over-dispersion is possible for some stage and defoliation combinations. This means yields are relatively stable around the mean on average, but the amount of that stability changes with stage and defoliation. For farm and extension stakeholders, this matters because it changes risk. If dispersion changes with stage and defoliation, then a single uncertainty band for all plots can be misleading. With the varying $\phi$ fit, you can give stage specific and defoliation level specific uncertainty, which helps planning field trials, deciding how many plants to sample at each stage, and deciding where you need more replication. For decision makers (extension teams, crop managers, and research stations), the mean effects tell how expected boll counts drop with defoliation, while the dispersion effects tell how predictable that drop is in each stage. This supports better guidance: not only ``defoliation reduces bolls'' but also ``where the outcome is more or less variable,'' which helps allocate monitoring and resources. Because earlier work often held dispersion fixed for these data \citep{Huang2017CMPmu}, our results add a useful layer that, even when the average under-dispersion looks small on the whole, the differences in spread across stage and defoliation are real and they matter.

\begin{figure}
    \centering
    \includegraphics[width=0.75\linewidth]{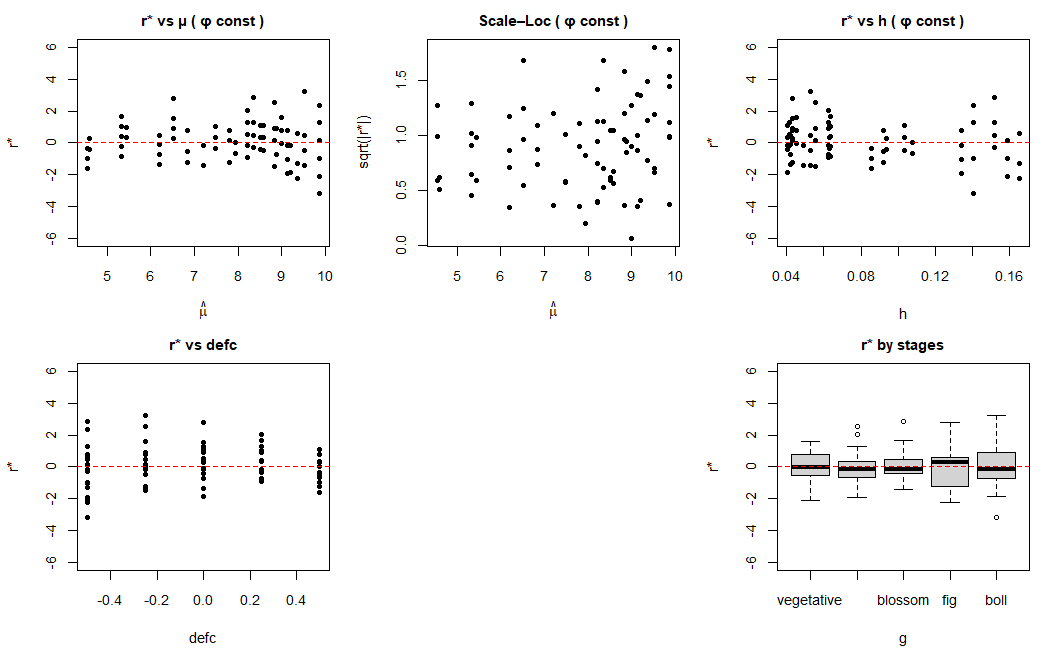}
    \caption{Cotton Bolls diagnostics under constant $\phi$.}
    \label{fig:cotton_const}
\end{figure}

\begin{figure}
    \centering
    \includegraphics[width=0.75\linewidth]{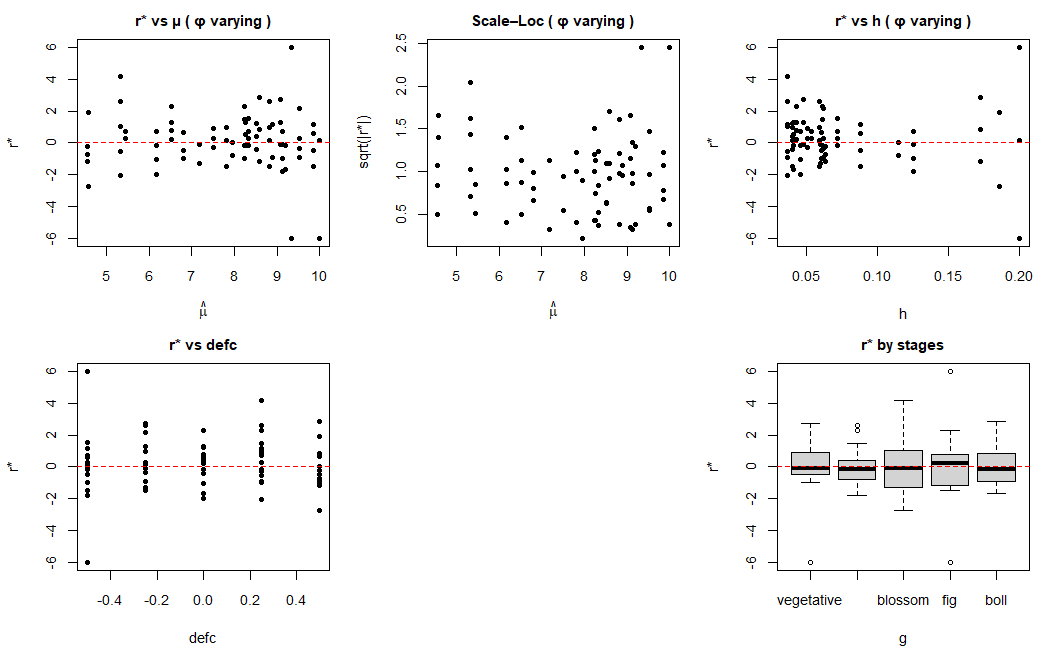}
    \caption{Cotton Bolls diagnostics under varying $\phi$.}
    \label{fig:cotton_var}
\end{figure}

When we observe Figures \ref{fig:cotton_const} and \ref{fig:cotton_var}, we can see that, with a constant \(\phi\), the top row shows small patterns. The scale–location panel bends and the cloud of \(r^{*}\) is slightly wider for some fitted means. The leverage plot is calm, so the pattern is not due to one or two points. In the bottom row, the boxplots by stage are not alike. Blossom and fig look wider and a bit off center. This hints that the spread changes across stage and defoliation combinations. When \(\phi\) is allowed to vary with the same covariates, the scale-location panel becomes flatter and the \(r^{*}\) versus fitted mean plot is more stable. The leverage plot stays calm. Most importantly, the boxplots by stage now line up around zero and have similar width. There is no clear remaining trend of \(r^{*}\) against defoliation in the varying \(\phi\) fit, which agrees with the idea that the model has absorbed much of the changing spread. This check agrees with the results in Table~\ref{tab:cotton_compact_named}. The dispersion slopes are strongly important as a group \((\chi^2(9)=32.0358,\ p=0.0002)\), and the fitted \(\phi\) values differ across stage and defoliation combinations. The fitted values range from \(-0.2154\) to \(0.0311\), so the model allows both under-dispersion and slight over-dispersion across the observed combinations, while the average fitted dispersion remains negative. A model that allows dispersion to vary captures this pattern more effectively, whereas a single constant \(\phi\) may miss important differences associated with stage and defoliation.

\section{Conclusion and Future Directions}
\label{conclu}

This paper develops a semiparametric regression framework for count data based on zero altered models and the generalized method of moments. The method allows both the conditional mean and the dispersion level to depend on covariates. It can therefore describe over-dispersion, equi-dispersion , and under-dispersion within one regression framework.

The proposed estimator uses the first two conditional factorial moments instead of a fully specified probability distribution. This reduces dependence on a particular parametric count distribution. The estimator remains consistent when the stated mean and dispersion moment conditions are correctly specified, even if other features of the full distribution are not correctly modeled. This form of robustness is an important advantage of moment based estimation \citep{Hansen1982,White1982}.

We established consistency and asymptotic normality under suitable regularity conditions. These results provide a basis for standard errors, confidence intervals, and hypothesis tests. The simulation studies showed small estimation bias and generally reasonable agreement between the Monte Carlo standard deviations and the estimated standard errors under the designs considered. The real data applications also showed that allowing dispersion to depend on covariates can reveal patterns that are not captured by a constant dispersion model.

Future research can extend the current framework in several directions. One direction is to compare the GMM objective with empirical likelihood and exponentially tilted empirical likelihood methods. These approaches can provide likelihood type inference while still using moment restrictions \citep{Owen2001,Schennach2007}. Another direction is to study improved finite sample inference for dispersion parameters, especially in settings where Wald confidence intervals have lower than expected coverage.

The framework can also be extended to higher dimensional regression settings. Penalized and Bayesian methods may help identify important predictors when the number of possible covariates is large. Possible approaches include Bayesian variable selection, the Bayesian lasso, and spike and slab methods \citep{GeorgeMcCulloch1993,ParkCasella2008,IshwaranRao2005}. Further work may also consider longitudinal data, clustered observations, and other forms of dependence.

The proposed method provides a flexible and practical way to analyze dispersed count data without requiring a complete parametric distribution. It offers a useful foundation for regression modeling when both the mean and the variability of the count outcome may change with observed covariates.

\bibliographystyle{plainnat}
\bibliography{references}



\clearpage
\appendix

\section{Derivation of the Moment Equations}
\label{app:quadratic_variance_derivation}
\label{moments_proofs}

By Lemma 1 of \citet{Ghosh2007}, if \(X\sim f_\delta\), then
\[
\mu=\mathbb E[X]=\omega\mu_0.
\]

Also,
\[
\sigma^2=\mathrm{Var}(X)
=
\omega\sigma_0^2+\frac{1-\omega}{\omega}\mu^2.
\]

Since the baseline distribution is equi-dispersed,
\(\sigma_0^2=\mu_0\). Hence,
\[
\sigma^2
=
\omega\mu_0+\frac{1-\omega}{\omega}\mu^2
=
\mu+\frac{1-\omega}{\omega}\mu^2.
\]

Therefore,
\[
\sigma^2-\mu
=
\frac{1-\omega}{\omega}\mu^2.
\]

Dividing by \(\mu^2\), define the dispersion parameter
\[
\phi
=
\frac{\sigma^2-\mu}{\mu^2}
=
\frac{1-\omega}{\omega}.
\]

Thus,
\[
\mathrm{Var}(X)=\mu+\phi\mu^2.
\]

Since \(\omega>0\), it follows that \(\phi>-1\). Also,
\(\phi<0\) corresponds to under-dispersion,
\(\phi=0\) corresponds to equi-dispersion, and
\(\phi>0\) corresponds to over-dispersion.

Using the identity
\[
\mathbb E[X(X-1)]
=
\mathrm{Var}(X)+\{\mathbb E(X)\}^2-\mathbb E(X),
\]
we obtain
\[
\begin{aligned}
\mathbb E[X(X-1)]
&=
\mu+\phi\mu^2+\mu^2-\mu\\
&=
(1+\phi)\mu^2.
\end{aligned}
\]

Hence, the first two factorial moments are
\[
\mathbb E[X]=\mu,
\qquad
\mathbb E[X(X-1)]=(1+\phi)\mu^2.
\]

For the regression setting, let
\[
\mu_i(\beta)=g(X_i^\top\beta),
\qquad
1+\phi_i(\gamma)=h(X_i^\top\gamma).
\]

Conditioning on \(X_i\),
\[
\mathbb E[Y_i\mid X_i]
=
\mu_i(\beta),
\]
and
\[
\mathbb E[Y_i(Y_i-1)\mid X_i]
=
\mu_i(\beta)^2\{1+\phi_i(\gamma)\}.
\]

Therefore, at the true parameter value
\(\theta_0=(\beta_0^\top,\gamma_0^\top)^\top\),
\[
\mathbb E\!\left[
Y_i-\mu_i(\beta_0)
\mid X_i
\right]
=0,
\]
and
\[
\mathbb E\!\left[
Y_i(Y_i-1)
-
\mu_i(\beta_0)^2
\{1+\phi_i(\gamma_0)\}
\mid X_i
\right]
=0.
\]

Using \(X_i\) as the instrument vector gives the GMM moment equations
\[
\mathbb E\!\left[
X_i\{Y_i-\mu_i(\beta_0)\}
\right]
=0,
\]
and
\[
\mathbb E\!\left[
X_i
\left\{
Y_i(Y_i-1)
-
\mu_i(\beta_0)^2
\{1+\phi_i(\gamma_0)\}
\right\}
\right]
=0.
\]

Thus, the moment vector used for GMM estimation is
\[
\psi(\mathcal U_i,\theta)
=
\begin{bmatrix}
X_i\{Y_i-\mu_i(\beta)\}\\[2mm]
X_i\left\{
Y_i(Y_i-1)
-
\mu_i(\beta)^2
\{1+\phi_i(\gamma)\}
\right\}
\end{bmatrix},
\]
with
\[
\mathbb E\{\psi(\mathcal U_i,\theta_0)\}=0.
\]
\clearpage

\section{Verification of the Moment and Envelope Conditions}
\label{app:moment_envelopes}
\label{moments_envelopes}

We explain why the moment and envelope assumptions are imposed and give sufficient conditions that guarantee them. The goal is to obtain integrable envelope bounds for the moment vector and its derivatives, uniformly over \(\theta\in\Theta\). Throughout this verification, \(\|\cdot\|_2\) denotes the Euclidean norm.

\paragraph*{Moment functions and basic bounds.}

Recall the links
\[
\mu(\beta)=g(X^\top\beta),
\qquad
1+\phi(\gamma)=h(X^\top\gamma),
\]
and the residuals
\[
r_1(\beta)=Y-\mu(\beta),
\qquad
r_2(\beta,\gamma)
=
Y(Y-1)-\mu(\beta)^2h(X^\top\gamma).
\]

Since the same covariate vector \(X\) is used as the instrument in both blocks, the moment vector is
\[
\psi(\mathcal U,\theta)
=
\begin{bmatrix}
X\,r_1(\beta)\\
X\,r_2(\beta,\gamma)
\end{bmatrix}.
\]

For any \(\beta\) and \(\gamma\),
\[
\begin{aligned}
|r_1(\beta)|
&\le |Y|+|\mu(\beta)|,\\
|r_2(\beta,\gamma)|
&\le Y^2+|Y|
+|\mu(\beta)|^2\,|h(X^\top\gamma)|,\\
\|X\,r_1(\beta)\|_2
&=
\|X\|_2|r_1(\beta)|\\
&\le
\|X\|_2\Big(|Y|+|\mu(\beta)|\Big),\\
\|X\,r_2(\beta,\gamma)\|_2
&=
\|X\|_2|r_2(\beta,\gamma)|\\
&\le
\|X\|_2
\Big(
Y^2+|Y|
+|\mu(\beta)|^2\,|h(X^\top\gamma)|
\Big).
\end{aligned}
\]

The equalities
\(
\|X\,r_1(\beta)\|_2=\|X\|_2|r_1(\beta)|
\)
and
\(
\|X\,r_2(\beta,\gamma)\|_2
=
\|X\|_2|r_2(\beta,\gamma)|
\)
hold because \(r_1(\beta)\) and \(r_2(\beta,\gamma)\) are scalars.

The envelope assumptions are chosen so that the right sides above have finite expectations uniformly over \(\theta\in\Theta\).

\paragraph*{A sufficient condition for the envelope bounds.}

\begin{lemma}[A sufficient condition for envelopes]\label{lem:env_suff}
Assume:
\begin{enumerate}
\item[(i)] \(\Theta\) is compact and there exist constants
\(B_\beta,B_\gamma<\infty\) such that
\[
\|\beta\|_2\le B_\beta,
\qquad
\|\gamma\|_2\le B_\gamma
\]
for all \(\theta\in\Theta\).

\item[(ii)] There exist constants \(C>0\) and \(B>0\) such that, for all \(t\in\mathbb R\),
\[
|g(t)|+|g'(t)|+|h(t)|+|h'(t)|
\le
C\exp(B|t|).
\]
\end{enumerate}

Then there exist nondecreasing functions
\(G_1,G_2:[0,\infty)\to[0,\infty)\) such that, for all
\(\theta\in\Theta\) and all \((X,Y)\),
\[
|\mu(\beta)|\le G_1(\|X\|_2),
\]
\[
|g(X^\top\beta)^2h(X^\top\gamma)|
\le
G_2(\|X\|_2),
\]
\[
2|g(X^\top\beta)g'(X^\top\beta)h(X^\top\gamma)|
\le
G_2(\|X\|_2),
\]
and
\[
|g(X^\top\beta)^2h'(X^\top\gamma)|
\le
G_2(\|X\|_2).
\]

One choice is
\[
G_1(u)=C\exp(BB_\beta u),
\qquad
G_2(u)=C_2\exp\big(B(2B_\beta+B_\gamma)u\big),
\]
where \(C_2<\infty\) is a constant depending on \(C\).
\end{lemma}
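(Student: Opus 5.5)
The argument reduces every bound to a single elementary estimate on the linear index, combined with assumption (ii). By Cauchy--Schwarz, for all $\theta\in\Theta$,
\[
|X^\top\beta|\le \|X\|_2\|\beta\|_2\le B_\beta\|X\|_2,
\qquad
|X^\top\gamma|\le B_\gamma\|X\|_2 .
\]
Assumption (ii) bounds each of $|g|,|g'|,|h|,|h'|$ separately by $C\exp(B|t|)$, since each is dominated by their sum. Because $u\mapsto C\exp(Bu)$ is nondecreasing, we obtain
\[
|g(X^\top\beta)|,\ |g'(X^\top\beta)|\le C\exp(BB_\beta\|X\|_2),
\qquad
|h(X^\top\gamma)|,\ |h'(X^\top\gamma)|\le C\exp(BB_\gamma\|X\|_2).
\]
The first inequality immediately gives $|\mu(\beta)|\le G_1(\|X\|_2)$ with $G_1(u)=C\exp(BB_\beta u)$. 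This function is nondecreasing and nonnegative, and it is independent of $\theta$ and $Y$.

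The three product bounds follow by multiplying these one-factor bounds. In each product, $g$ or $g'$ appears twice with argument $X^\top\beta$, and $h$ or $h'$ appears once with argument $X^\top\gamma$. This gives
\[
|g^2h|,\ |g^2h'|\le C^3\exp\big(B(2B_\beta+B_\gamma)\|X\|_2\big),
\qquad
2|gg'h|\le 2C^3\exp\big(B(2B_\beta+B_\gamma)\|X\|_2\big).
\]
All three bounds are therefore covered by $G_2(u)=C_2\exp\{B(2B_\beta+B_\gamma)u\}$ with $C_2=2C^3$, which depends only on $C$. This $G_2$ is nondecreasing on $[0,\infty)$ because the exponent's coefficient is nonnegative.

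There is no genuine obstacle. The only points requiring care are that the bounds hold uniformly in $\theta$, which is what the norm bounds $B_\beta,B_\gamma$ from compactness in (i) provide, and that monotonicity of $G_1$ and $G_2$ is checked explicitly. The constant $C_2$ is stated only as depending on $C$, so any choice of the form $C_2=2C^3$ suffices.
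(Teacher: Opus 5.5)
Your proposal is correct and follows essentially the same route as the paper. Both arguments bound the linear indices by $B_\beta\|X\|_2$ and $B_\gamma\|X\|_2$ via Cauchy--Schwarz, then apply the growth condition (ii) factor by factor to obtain $C^3$ and $2C^3$ multiples of $\exp\{B(2B_\beta+B_\gamma)\|X\|_2\}$; your explicit choice $C_2=2C^3$ and monotonicity check simply make precise the paper's ``choosing $C_2$ sufficiently large.''
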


\begin{proof}
Fix \(\theta\in\Theta\). By the Cauchy--Schwarz inequality for the Euclidean inner product,
\[
|X^\top\beta|
\le
\|X\|_2\|\beta\|_2
\le
B_\beta\|X\|_2,
\]
and
\[
|X^\top\gamma|
\le
\|X\|_2\|\gamma\|_2
\le
B_\gamma\|X\|_2.
\]

Using the growth bound in condition (ii),
\[
\begin{aligned}
|\mu(\beta)|
&=
|g(X^\top\beta)|\\
&\le
C\exp\big(B|X^\top\beta|\big)\\
&\le
C\exp(BB_\beta\|X\|_2)\\
&=
G_1(\|X\|_2).
\end{aligned}
\]

Thus, \(G_1\) is nondecreasing. Also,
\[
\begin{aligned}
|g(X^\top\beta)^2h(X^\top\gamma)|
&\le
\left\{C\exp(B|X^\top\beta|)\right\}^2
\left\{C\exp(B|X^\top\gamma|)\right\}\\
&\le
C^3
\exp\big(B(2B_\beta+B_\gamma)\|X\|_2\big).
\end{aligned}
\]

Similarly,
\[
\begin{aligned}
2|g(X^\top\beta)g'(X^\top\beta)h(X^\top\gamma)|
&\le
2C^3
\exp\big(B(2B_\beta+B_\gamma)\|X\|_2\big),
\end{aligned}
\]
and
\[
|g(X^\top\beta)^2h'(X^\top\gamma)|
\le
C^3
\exp\big(B(2B_\beta+B_\gamma)\|X\|_2\big).
\]

The stated bounds hold after choosing \(C_2\) sufficiently large.
\end{proof}

\paragraph*{Integrability conditions.}

\begin{lemma}[Sufficient conditions for the moment envelopes]\label{lem:ass2_suff}
Assume the conditions of Lemma~\ref{lem:env_suff}. Also assume:
\begin{enumerate}
\item[(a)] There exists
\(\delta_X>0\) such that
\[
\mathbb E\|X\|_2^{2+\delta_X}
<\infty.
\]

\item[(b)]
\[
\mathbb E(Y^4)<\infty.
\]

\item[(c)]
\[
\mathbb E\left[\|X\|_2G_1(\|X\|_2)\right]<\infty
\]
and
\[
\mathbb E\left[\|X\|_2G_2(\|X\|_2)\right]<\infty.
\]
\end{enumerate}

Then
\(
\mathbb E\left[
\|X\|_2
\left\{|Y|+G_1(\|X\|_2)\right\}
\right]
<\infty,
\)
and
\(
\mathbb E\left[
\|X\|_2
\left\{Y^2+Y+G_2(\|X\|_2)\right\}
\right]
<\infty.
\)
\end{lemma}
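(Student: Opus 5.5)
The plan is to reduce both conclusions to linearity of expectation, plus finiteness of each individual term. Each term will be controlled by Hölder's or the Cauchy--Schwarz inequality, using the integrability hypotheses (a)--(c).

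For the first bound, write
\[
\mathbb E\left[\|X\|_2\{|Y|+G_1(\|X\|_2)\}\right]
=\mathbb E\left[\|X\|_2|Y|\right]+\mathbb E\left[\|X\|_2G_1(\|X\|_2)\right].
\]
The second term is finite directly by (c). For the first term, apply Cauchy--Schwarz:
\[
\mathbb E[\|X\|_2|Y|]\le \{\mathbb E\|X\|_2^2\}^{1/2}\{\mathbb E Y^2\}^{1/2}.
\]
Here $\mathbb E\|X\|_2^2\le 1+\mathbb E\|X\|_2^{2+\delta_X}<\infty$ by (a), using $u^2\le 1+u^{2+\delta_X}$ for $u\ge0$. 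Likewise $\mathbb E Y^2\le 1+\mathbb E Y^4<\infty$ by (b).

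For the second bound, split
\[
\mathbb E\left[\|X\|_2\{Y^2+Y+G_2(\|X\|_2)\}\right]
\]
into three pieces. The piece $\mathbb E[\|X\|_2Y]$ is bounded in absolute value by $\mathbb E[\|X\|_2|Y|]$, which was already handled above. The piece $\mathbb E[\|X\|_2G_2(\|X\|_2)]$ is finite by (c). The remaining piece is $\mathbb E[\|X\|_2Y^2]$, and Cauchy--Schwarz bounds it by
\[
\{\mathbb E\|X\|_2^2\}^{1/2}\{\mathbb E Y^4\}^{1/2}<\infty,
\]
again by (a) and (b).

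The only step needing any care is this cross term $\mathbb E[\|X\|_2Y^2]$. The point is that the pairing must put the fourth moment of $Y$, guaranteed by (b), against only a second moment of $\|X\|_2$. Assumption (a) supplies slightly more than that second moment, so Cauchy--Schwarz closes the argument without any independence between $X$ and $Y$. Since $Y\ge0$ is a count, $Y$ and $|Y|$ coincide, so no sign issues arise. Finally, combining with Lemma~\ref{lem:env_suff} and the basic bounds displayed above, these expectations serve as integrable envelopes $M_0$ for $\|\psi\|_2$ uniformly in $\theta$.
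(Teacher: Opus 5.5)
Your argument is correct and matches the paper's proof. Both use linearity of expectation, bound $\mathbb E[\|X\|_2|Y|]$ and $\mathbb E[\|X\|_2Y^2]$ by Cauchy--Schwarz against $\mathbb E\|X\|_2^2$ from (a) and $\mathbb E Y^2$, $\mathbb E Y^4$ from (b), and take the $G_1,G_2$ terms directly from (c); your inequality $u^2\le 1+u^{2+\delta_X}$ simply makes explicit a step the paper states without justification.
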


\begin{proof}
For the first expectation, \(\mathbb E(Y^4)<\infty\) implies
\(\mathbb E(Y^2)<\infty\) and \(\mathbb E|Y|<\infty\). Hence, by the Cauchy--Schwarz inequality,
\[
\mathbb E\left[\|X\|_2|Y|\right]
\le
\left\{\mathbb E\|X\|_2^2\right\}^{1/2}
\left\{\mathbb E(Y^2)\right\}^{1/2}
<\infty,
\]
since
\(
\mathbb E\|X\|_2^2<\infty
\)
follows from
\(
\mathbb E\|X\|_2^{2+\delta_X}
<\infty.
\)

Also,
\(
\mathbb E\left[\|X\|_2G_1(\|X\|_2)\right]<\infty
\)
holds by condition (c). Therefore,
\[
\mathbb E\left[
\|X\|_2
\left\{|Y|+G_1(\|X\|_2)\right\}
\right]
<\infty.
\]

For the second expectation, \(\mathbb E(Y^4)<\infty\) implies
\(\mathbb E(Y^2)<\infty\) and
\(\mathbb E|Y|<\infty\). Hence,
\[
\mathbb E\left[\|X\|_2Y^2\right]
\le
\left\{\mathbb E\|X\|_2^2\right\}^{1/2}
\left\{\mathbb E(Y^4)\right\}^{1/2}
<\infty,
\]
and
\[
\mathbb E\left[
\|X\|_2|Y|
\right]
\le
\left\{\mathbb E\|X\|_2^2\right\}^{1/2}
\left\{\mathbb E(Y^2)\right\}^{1/2}
<\infty.
\]

Finally,
\[
\mathbb E\left[\|X\|_2G_2(\|X\|_2)\right]<\infty
\]
holds by condition (c). Combining these bounds gives
\[
\mathbb E\left[
\|X\|_2
\left\{Y^2+Y+G_2(\|X\|_2)\right\}
\right]
<\infty.
\]
\end{proof}

\paragraph*{The case of log link functions.}

If \(g(t)=\exp(t)\) and \(h(t)=\exp(t)\), then
\[
g'(t)=\exp(t)
\qquad\text{and}\qquad
h'(t)=\exp(t),
\]
so Lemma~\ref{lem:env_suff} holds with exponential envelopes. With compact \(\Theta\),
\[
G_1(u)=\exp(B_\beta u),
\qquad
G_2(u)=C_2\exp((2B_\beta+B_\gamma)u).
\]

Thus, for log links, the envelope condition is satisfied if
\[
\mathbb E\left[
\|X\|_2\exp(B_\beta\|X\|_2)
\right]
<\infty,
\]
and
\[
\mathbb E\left[
\|X\|_2
\exp((2B_\beta+B_\gamma)\|X\|_2)
\right]
<\infty,
\]
together with
\[
\mathbb E\|X\|_2^{2+\delta_X}
<\infty
\]
and
\[
\mathbb E(Y^4)<\infty.
\]

\clearpage

\section{Proofs of the Main Theoretical Results}
\label{app:proofs}

\subsection{Proof of the Jacobian Lemma}
\label{app:proof-jacobian}

\begin{proof}
Throughout this proof, derivatives of scalar residuals with respect to
\(\beta^\top\) and \(\gamma^\top\) are written as row vectors.

First note that \(\mu_i(\beta)=g(X_i^\top\beta)\), so by the chain rule,
\[
\frac{\partial \mu_i(\beta)}{\partial\beta^\top}
=
g'(X_i^\top\beta)\,X_i^\top.
\]
Thus,
\[
\frac{\partial r_{1i}(\beta)}{\partial\beta^\top}
=
-g'(X_i^\top\beta)\,X_i^\top,
\qquad
\frac{\partial r_{1i}(\beta)}{\partial\gamma^\top}=0.
\]
Hence,
\[
\frac{\partial}{\partial\beta^\top}\{X_i r_{1i}(\beta)\}
=
-X_i g'(X_i^\top\beta)X_i^\top,
\qquad
\frac{\partial}{\partial\gamma^\top}\{X_i r_{1i}(\beta)\}=0.
\]

Also,
\[
r_{2i}(\beta,\gamma)
=
Y_i(Y_i-1)
-
g(X_i^\top\beta)^2 h(X_i^\top\gamma).
\]
Therefore,
\[
\frac{\partial r_{2i}}{\partial\beta^\top}
=
-2 g(X_i^\top\beta)
g'(X_i^\top\beta)
h(X_i^\top\gamma)\,X_i^\top,
\]
and
\[
\frac{\partial r_{2i}}{\partial\gamma^\top}
=
-g(X_i^\top\beta)^2
h'(X_i^\top\gamma)\,X_i^\top.
\]
It follows that
\[
\frac{\partial}{\partial\beta^\top}
\{X_i r_{2i}(\beta,\gamma)\}
=
-2X_i
g(X_i^\top\beta)
g'(X_i^\top\beta)
h(X_i^\top\gamma)
X_i^\top,
\]
and
\[
\frac{\partial}{\partial\gamma^\top}
\{X_i r_{2i}(\beta,\gamma)\}
=
-X_i
g(X_i^\top\beta)^2
h'(X_i^\top\gamma)
X_i^\top.
\]

Taking expectations at \(\theta_0\) gives the block matrix \(G\)
stated in the main article.
\end{proof}

\subsection{Proof of the Identification Lemma}
\label{app:proof-identification}

\begin{proof}
Write \(\Psi(\theta)=\mathbb E[\psi(\mathcal U,\theta)]\) and partition \(\Psi=(\Psi_1^\top,\Psi_2^\top)^\top\), where
\[
\Psi_1(\beta)=\mathbb E[X\{Y-\mu(\beta)\}],
\qquad
\Psi_2(\beta,\gamma)=\mathbb E[X\{Y(Y-1)-\mu(\beta)^2 h(X^\top\gamma)\}].
\]
At the true parameter \(\theta_0=(\beta_0^\top,\gamma_0^\top)^\top\), the model implies
\[
\mathbb E[Y\mid X]=\mu(\beta_0),
\qquad
\mathbb E[Y(Y-1)\mid X]=\mu(\beta_0)^2 h(X^\top\gamma_0),
\]
so \(\Psi(\theta_0)=0\).

Now suppose \(\Psi(\theta)=0\). We first show that \(\beta=\beta_0\). From \(\Psi_1(\beta)=0\),
\[
0
=
\mathbb E[X\{\mu(\beta_0)-\mu(\beta)\}].
\]
Let \(A\) be an antiderivative of \(g\), so that \(A'=g\), and define
\[
L_1(\beta)
=
\mathbb E\big[A(X^\top\beta)-\mu(\beta_0)\,X^\top\beta\big].
\]
Then
\[
\nabla_\beta L_1(\beta)
=
\mathbb E[X\{\mu(\beta)-\mu(\beta_0)\}].
\]
Thus, \(\nabla_\beta L_1(\beta)=0\) is equivalent to \(\Psi_1(\beta)=0\). Also,
\[
\nabla_\beta^2 L_1(\beta)
=
\mathbb E[g'(X^\top\beta)XX^\top]\succ0.
\]
Therefore, \(L_1\) is strictly convex and has a unique stationary point. Since \(\beta_0\) solves the first order condition, it follows that \(\beta=\beta_0\).

Now fix \(\beta=\beta_0\). From \(\Psi_2(\beta_0,\gamma)=0\),
\[
0
=
\mathbb E[X\,\mu(\beta_0)^2\{h(X^\top\gamma_0)-h(X^\top\gamma)\}].
\]
Equivalently,
\[
\mathbb E[X\,\mu(\beta_0)^2\{h(X^\top\gamma)-h(X^\top\gamma_0)\}]=0.
\]
Let \(B\) be an antiderivative of \(h\), so that \(B'=h\), and define
\[
L_2(\gamma)
=
\mathbb E\big[\mu(\beta_0)^2\{B(X^\top\gamma)-h(X^\top\gamma_0)\,X^\top\gamma\}\big].
\]
Then
\[
\nabla_\gamma L_2(\gamma)
=
\mathbb E[X\,\mu(\beta_0)^2\{h(X^\top\gamma)-h(X^\top\gamma_0)\}],
\]
and
\[
\nabla_\gamma^2 L_2(\gamma)
=
\mathbb E[\mu(\beta_0)^2 h'(X^\top\gamma)XX^\top]\succ0.
\]
Therefore, \(L_2\) is strictly convex and has a unique stationary point. Since \(\gamma_0\) solves the first order condition, it follows that \(\gamma=\gamma_0\). Hence \(\theta=\theta_0\).
\end{proof}

\subsection{Proof of the Uniform Law Lemma}
\label{app:proof-ulln}

\begin{proof}
We prove
\[
\sup_{\theta\in\Theta}\|\bar\psi_n(\theta)-\Psi(\theta)\| \xrightarrow{p} 0.
\]
For each fixed \(\mathcal U\), the map \(\theta\mapsto \psi(\mathcal U,\theta)\) is continuous because \(g\) and \(h\) are continuous and \(\Theta\) is compact. Also,
\[
|r_1(\beta)|\le |Y|+|\mu(\beta)|,
\qquad
|r_2(\beta,\gamma)|\le Y^2+Y+\mu(\beta)^2h(X^\top\gamma).
\]
Using the envelope functions from the moment and envelope assumptions,
\[
\sup_{\theta\in\Theta}\|\psi(\mathcal U,\theta)\|
\le
\|X\|\{|Y|+G_1(\|X\|)\}
+
\|X\|\{Y^2+Y+G_2(\|X\|)\}.
\]
The expectation of the right side is finite. Since \(\Theta\) is compact, \(\theta\mapsto \psi(\mathcal U,\theta)\) is continuous, and the class is dominated by an integrable envelope, a standard uniform law of large numbers gives
\[
\sup_{\theta\in\Theta}\|\bar\psi_n(\theta)-\Psi(\theta)\| \xrightarrow{p} 0.
\]
\citep[Thm.~2.4]{NeweyMcFadden1994}.
\end{proof}

\subsection{Proof of the Consistency Theorem}
\label{app:proof-consistency}

\begin{proof}
Define
\[
Q_0(\theta)=\Psi(\theta)^\top W_0 \Psi(\theta).
\]
By the identification lemma in the main article, \(\Psi(\theta)=0\) implies \(\theta=\theta_0\), so \(Q_0(\theta)\ge 0\) with unique minimizer \(\theta_0\).

We show
\[
\sup_{\theta\in\Theta}|Q_n(\theta)-Q_0(\theta)| \xrightarrow{p} 0.
\]
Using \(W_n\xrightarrow{p}W_0\) and the uniform law lemma in the main article,
\[
\sup_{\theta\in\Theta}|Q_n(\theta)-Q_0(\theta)|
\xrightarrow{p}
0.
\]
The argmin theorem then gives
\[
\hat\theta_n\xrightarrow{p}\theta_0.
\]
\citep[Thm.~5.7]{vanDerVaart1998}.
\end{proof}

\subsection{Proof of the Linearization Lemma}
\label{app:proof-linearization}

\begin{proof}
Let
\[
D_n(\theta)=\frac{\partial \bar\psi_n(\theta)}{\partial\theta^\top}.
\]
Because \(\hat\theta_n\) minimizes
\[
Q_n(\theta)=\bar\psi_n(\theta)^\top W_n\bar\psi_n(\theta),
\]
a first order condition gives
\[
0
=
\nabla_\theta Q_n(\hat\theta_n)
=
2D_n(\hat\theta_n)^\top W_n \bar\psi_n(\hat\theta_n).
\]
By a mean value expansion of \(\bar\psi_n(\hat\theta_n)\) around \(\theta_0\), for some \(\tilde\theta_n\) on the line segment between \(\hat\theta_n\) and \(\theta_0\),
\[
\bar\psi_n(\hat\theta_n)
=
\bar\psi_n(\theta_0)+D_n(\tilde\theta_n)(\hat\theta_n-\theta_0).
\]
Substituting into the first order condition gives
\[
0
=
2D_n(\hat\theta_n)^\top W_n\bar\psi_n(\theta_0)
+
2D_n(\hat\theta_n)^\top W_n D_n(\tilde\theta_n)(\hat\theta_n-\theta_0).
\]
Solving for \(\sqrt n(\hat\theta_n-\theta_0)\) gives
\[
\sqrt n(\hat\theta_n-\theta_0)
=
-
\Big(D_n(\hat\theta_n)^\top W_n D_n(\tilde\theta_n)\Big)^{-1}
D_n(\hat\theta_n)^\top W_n \sqrt n\,\bar\psi_n(\theta_0).
\]
By consistency, derivative convergence, and \(W_n\xrightarrow{p}W_0\),
\[
D_n(\hat\theta_n)\xrightarrow{p}G,
\qquad
D_n(\tilde\theta_n)\xrightarrow{p}G,
\qquad
W_n\xrightarrow{p}W_0.
\]
By the full rank lemma in the main article, \(G^\top W_0G\) is invertible. Therefore,
\[
\sqrt n(\hat\theta_n-\theta_0)
=
-(G^\top W_0G)^{-1}G^\top W_0 \sqrt n\,\bar\psi_n(\theta_0)+o_p(1).
\]
\end{proof}

\subsection{Proof of the Full Rank Lemma}
\label{app:proof-fullrank}

\begin{proof}
Recall the block form
\[
G
=
-\begin{bmatrix}
G_{11} & 0\\
G_{21} & G_{22}
\end{bmatrix},
\]
where
\[
G_{11}=\mathbb E[X\,g'(X^\top\beta_0)X^\top],
\qquad
G_{22}=\mathbb E[X\,\mu(\beta_0)^2 h'(X^\top\gamma_0)X^\top].
\]
For any nonzero vector \(v\in\mathbb R^p\),
\[
v^\top G_{11} v
=
\mathbb E[g'(X^\top\beta_0)(v^\top X)^2]>0,
\]
because \(g'(X^\top\beta_0)>0\) almost surely and \(\mathbb E[XX^\top]\succ0\). Thus \(G_{11}\succ0\). Similarly, for any nonzero \(w\in\mathbb R^p\),
\[
w^\top G_{22} w
=
\mathbb E[\mu(\beta_0)^2 h'(X^\top\gamma_0)(w^\top X)^2]>0.
\]
Thus \(G_{22}\succ0\). Since both diagonal blocks are full rank and \(G\) is block lower triangular, \(G\) has full column rank \(2p\).
\end{proof}

\subsection{Proof of the Asymptotic Normality Theorem}
\label{app:proof-an}

\begin{proof}
By the linearization lemma in the main article,
\[
\sqrt n(\hat\theta_n-\theta_0)
=
-(G^\top W_0G)^{-1}G^\top W_0 \sqrt n\,\bar\psi_n(\theta_0)+o_p(1).
\]
It remains to show a multivariate central limit theorem for \(\sqrt n\,\bar\psi_n(\theta_0)\). Since \(\{\mathcal U_i\}\) are i.i.d., the random vectors \(\psi(\mathcal U_i,\theta_0)\) are i.i.d.\ with mean zero. The moment assumptions imply
\[
\mathbb E\|\psi(\mathcal U,\theta_0)\|^2<\infty.
\]
Therefore, by the multivariate central limit theorem,
\[
\sqrt n\,\bar\psi_n(\theta_0)\Rightarrow \mathcal N(0,S_0),
\qquad
S_0=\mathbb E[\psi(\mathcal U,\theta_0)\psi(\mathcal U,\theta_0)^\top].
\]
Applying Slutsky's theorem to the linearization gives
\[
\sqrt n(\hat\theta_n-\theta_0)\Rightarrow \mathcal N(0,V),
\]
where
\[
V=(G^\top W_0G)^{-1}G^\top W_0 S_0 W_0G(G^\top W_0G)^{-1}.
\]
If \(W_0=S_0^{-1}\), then
\[
V=(G^\top S_0^{-1}G)^{-1}.
\]
\end{proof}

\subsection{Proof of the Covariance Consistency Lemma}
\label{app:proof-shat}

\begin{proof}
Write
\[
\widehat S-S_0
=
\left(\widehat S-\frac1n\sum_{i=1}^n \psi(\mathcal U_i,\theta_0)\psi(\mathcal U_i,\theta_0)^\top\right)
+
\left(\frac1n\sum_{i=1}^n \psi(\mathcal U_i,\theta_0)\psi(\mathcal U_i,\theta_0)^\top-S_0\right).
\]
The second term goes to zero in probability by the law of large numbers. For the first term, continuity of \(\psi(\mathcal U,\theta)\) in \(\theta\), the envelope conditions, and \(\hat\theta_n\xrightarrow{p}\theta_0\) imply that the difference between evaluating at \(\hat\theta_n\) and \(\theta_0\) vanishes in probability. Hence
\[
\widehat S\xrightarrow{p}S_0.
\]
\end{proof}

\subsection{Proof of the Weighting Matrix Consistency Lemma}
\label{app:proof-what}

\begin{proof}
By the covariance consistency lemma in the main article, \(\widehat S\xrightarrow{p}S_0\). Since inversion is continuous on the set of positive definite matrices,
\[
\widehat S^{-1}\xrightarrow{p}S_0^{-1}.
\]
\end{proof}

\subsection{Proof of the Plug-in Covariance Consistency Proposition}
\label{app:proof-vhat}

\begin{proof}
By the derivative uniform law used in the linearization argument,
\[
\hat G\xrightarrow{p}G.
\]
Also, \(\widehat S\xrightarrow{p}S_0\) and \(\hat W\xrightarrow{p}W_0\). The result follows by the continuous mapping theorem. For the optimal two-step choice, \(\hat W=\widehat S^{-1}\xrightarrow{p}S_0^{-1}\).
\end{proof}

\subsection{Proof of the J Test Proposition}
\label{app:proof-j}

\begin{proof}
In a just identified GMM problem, the number of moment conditions equals the number of parameters. Hence the overidentification statistic has zero degrees of freedom. Therefore, \(J_n\) is asymptotically degenerate at zero and does not provide a nontrivial specification test.
\end{proof}

\subsection{Proof of the Wald Test Proposition}
\label{app:proof-wald}

\begin{proof}
Under \(H_0\), \(R\theta_0=r\), and by asymptotic normality,
\[
\sqrt n(R\hat\theta_n-r)
=
R\sqrt n(\hat\theta_n-\theta_0)
\Rightarrow
\mathcal N(0,RVR^\top).
\]
By the plug-in covariance consistency proposition in the main article,
\[
\widehat V\xrightarrow{p}V.
\]
Therefore,
\[
\mathcal W_n
=
(R\hat\theta_n-r)^\top
\left(
R\frac{\widehat V}{n}R^\top
\right)^{-1}
(R\hat\theta_n-r)
\xrightarrow{d}
\chi^2_{\operatorname{rank}(R)}.
\]
\end{proof}

\clearpage
\setcounter{section}{3}
\setcounter{subsection}{0}

\section{Notation, Symbols, and Abbreviations}
\label{app:notation}

\subsection{Dimensions and parameters}

\begin{description}

\item[$n$]
Sample size.

\item[$p$]
Number of regressors in the conditional mean design vector \(X_i\).

\item[$q$]
Number of regressors in the dispersion design vector \(Z_{\phi i}\).

\item[$X_i\in\mathbb R^p$]
The \(p\times1\) covariate vector for observation \(i\), used in the conditional mean model and as the instrument vector for the first moment block.

\item[$Z_{\phi i}\in\mathbb R^q$]
The \(q\times1\) covariate vector for observation \(i\), used in the dispersion model and as the instrument vector for the second moment block.

\item[$\beta\in\mathbb R^p$]
Regression parameter vector for the conditional mean model.

\item[$\gamma\in\mathbb R^q$]
Regression parameter vector for the dispersion model.

\item[$\theta=(\beta^\top,\gamma^\top)^\top\in\mathbb R^{p+q}$]
Stacked parameter vector.

\item[$\theta_0=(\beta_0^\top,\gamma_0^\top)^\top\in\mathbb R^{p+q}$]
True parameter value.

\item[$\Theta$]
Parameter space, assumed compact with
\(\theta_0\in\operatorname{int}(\Theta)\).

\item[$s$]
Number of moment conditions. In the present just identified setup,
\[
s=p+q.
\]

\item[$d$]
Number of parameters. In the present setup,
\[
d=p+q.
\]

\end{description}

For the constant dispersion model, \(q=1\), so
\[
s=d=p+1.
\]

For the varying dispersion model with \(Z_{\phi i}=X_i\), \(q=p\), so
\[
s=d=2p.
\]
\clearpage

\section{Two-Step Algorithm}
\label{app:two_step_algorithm}

\subsection*{Two-step GMM procedure}

Given \(\{\mathcal U_i\}_{i=1}^n\),
\(\psi(\mathcal U_i,\theta)\in\mathbb R^{p+q}\), and
\(\Theta\subset\mathbb R^{p+q}\), define
\[
\bar\psi_n(\theta)
=
\frac{1}{n}\sum_{i=1}^n
\psi(\mathcal U_i,\theta),
\qquad
Q_n(\theta,W)
=
\bar\psi_n(\theta)^\top
W
\bar\psi_n(\theta).
\]

\paragraph*{Step 1: Identity-weighted estimation.}
Set
\[
W^{(1)}=I_{p+q}
\]
and compute
\[
\hat\theta^{(1)}
=
\arg\min_{\theta\in\Theta}
Q_n(\theta,W^{(1)}).
\]

\paragraph*{Step 2: Optimal weight update.}
Set
\[
\bar\psi_n^{(1)}
=
\bar\psi_n(\hat\theta^{(1)}),
\]
and compute
\[
\widehat S^{(1)}
=
\frac{1}{n}
\sum_{i=1}^n
\left\{
\psi(\mathcal U_i,\hat\theta^{(1)})
-
\bar\psi_n^{(1)}
\right\}
\left\{
\psi(\mathcal U_i,\hat\theta^{(1)})
-
\bar\psi_n^{(1)}
\right\}^{\top}.
\]
Set
\[
W^{(2)}
=
\left(\widehat S^{(1)}\right)^{-1}.
\]

\paragraph*{Step 3: Second-step estimation.}
Compute
\[
\hat\theta^{(2)}
=
\arg\min_{\theta\in\Theta}
Q_n(\theta,W^{(2)}).
\]

\paragraph*{Step 4: Covariance estimation.}
Set
\[
\bar\psi_n^{(2)}
=
\bar\psi_n(\hat\theta^{(2)}),
\]
and compute
\[
\widehat S^{(2)}
=
\frac{1}{n}
\sum_{i=1}^n
\left\{
\psi(\mathcal U_i,\hat\theta^{(2)})
-
\bar\psi_n^{(2)}
\right\}
\left\{
\psi(\mathcal U_i,\hat\theta^{(2)})
-
\bar\psi_n^{(2)}
\right\}^{\top}.
\]

Compute the sample Jacobian
\[
\widehat G^{(2)}
=
\left.
\frac{\partial\bar\psi_n(\theta)}
{\partial\theta^\top}
\right|_{\theta=\hat\theta^{(2)}},
\]
and define
\[
\widehat A
=
\widehat G^{(2)\top}
W^{(2)}
\widehat G^{(2)}.
\]

Estimate
\[
\widehat{\mathrm{Var}}(\hat\theta^{(2)})
=
\frac{1}{n}
\widehat A^{-1}
\left\{
\widehat G^{(2)\top}
W^{(2)}
\widehat S^{(2)}
W^{(2)}
\widehat G^{(2)}
\right\}
\widehat A^{-1}.
\]

The procedure returns
\(\hat\theta^{(2)}\) and
\(\widehat{\mathrm{Var}}(\hat\theta^{(2)})\).

\clearpage

\section{Data generation for simulation study}
\label{app:simulation_dgps}

This appendix records the probability models used to generate the simulated
data discussed in Section 4 of the main article. Except where noted otherwise,
the covariate is generated as
\[
x_i\sim\mathrm{Uniform}(-1,1),
\]
and the target conditional mean and dispersion satisfy
\[
\mu_i=\exp(\beta_0+\beta_1x_i),
\qquad
1+\phi_i=\exp(\gamma_0+\gamma_1x_i).
\]
The parameters of each baseline count distribution are then combined with the
zero-alteration mechanism so that the generated responses follow the intended
first two conditional factorial moments. The zero-altered constructions use
the framework of \citet{Ghosh2007}.

\subsection{ZA Binomial}

The first design starts with a Binomial baseline having \(N=75\) trials and
success probability \(p_i\). Its probability mass function is
\[
f_0(y_i)
=
\binom{N}{y_i}
p_i^{y_i}(1-p_i)^{N-y_i},
\qquad
y_i=0,\ldots,N,
\]
with baseline zero probability
\[
\pi_{0i}=f_0(0)=(1-p_i)^N.
\]

For each observation, the quantities \(\omega_i\), \(\pi_{0i}\), and
\(\delta_i\) determine the corresponding zero-altered distribution. They are
chosen within the ZA construction to produce the required conditional mean
and dispersion. The regression coefficients for this design are
\[
(\beta_0,\beta_1,\gamma_0,\gamma_1)
=
(\log(1.5),0.10,-0.20,-0.10).
\]

\subsection{ZA COM--Poisson}

A second design uses the COM--Poisson distribution
\citep{ConwayMaxwell1962,Shmueli2005}, whose probability mass function is
\[
f(y_i\mid\lambda_i,\nu_i)
=
\frac{\lambda_i^{y_i}}
{(y_i!)^{\nu_i}Z(\lambda_i,\nu_i)},
\qquad
y_i=0,1,2,\ldots,
\]
where
\[
Z(\lambda_i,\nu_i)
=
\sum_{j=0}^{\infty}
\frac{\lambda_i^j}{(j!)^{\nu_i}}
\]
is the normalizing constant.

The baseline COM--Poisson moments are determined by
\((\lambda_i,\nu_i)\), after which the zero-alteration step is applied using
the construction in \citet{Ghosh2007}. The coefficient values considered in
this experiment are
\[
(\beta_0,\beta_1,\gamma_0,\gamma_1)
=
(1.20,0.50,0,0).
\]

\subsection{Zero-Inflated Poisson}

To represent a zero-inflated setting, let \(\rho_i\) denote the probability
of a structural zero. The response distribution is
\[
P(Y_i=0)
=
\rho_i+(1-\rho_i)e^{-\lambda_i},
\]
and, for \(y_i\geq1\),
\[
P(Y_i=y_i)
=
(1-\rho_i)
\frac{e^{-\lambda_i}\lambda_i^{y_i}}{y_i!}.
\]

The zero-inflation probability and Poisson mean are parameterized as
\[
\rho_i=\frac{\phi_i}{1+\phi_i},
\qquad
\lambda_i=\frac{\mu_i}{1-\rho_i},
\]
which gives the required conditional mean and second factorial moment. We set
\[
(\beta_0,\beta_1,\gamma_0,\gamma_1)
=
(\log(3),0.40,0.20,0.20).
\]

\subsection{ZA Negative Binomial 2}

The negative-binomial experiment is based on the NB2 parameterization
\citep{CameronTrivedi1998}. Let \(m_i\) be the baseline mean and
\(\alpha_i>0\) the NB2 dispersion parameter. The baseline probability mass
function is
\[
f_0(y_i)
=
\frac{\Gamma\left(y_i+\alpha_i^{-1}\right)}
{\Gamma\left(\alpha_i^{-1}\right)y_i!}
\left(
\frac{1}{1+\alpha_i m_i}
\right)^{\alpha_i^{-1}}
\left(
\frac{\alpha_i m_i}{1+\alpha_i m_i}
\right)^{y_i},
\qquad
y_i=0,1,2,\ldots.
\]

This parameterization has
\[
\E(Y_i)=m_i,
\qquad
\Var(Y_i)=m_i+\alpha_i m_i^2,
\]
and baseline zero probability
\[
\pi_{0i}
=
\left(1+\alpha_i m_i\right)^{-\alpha_i^{-1}}.
\]

The NB2 baseline is combined with the zero-alteration mechanism described in
the main article, using the baseline zero probability and the corresponding
ZA weights to obtain the desired moment structure. The coefficient vector is
\[
(\beta_0,\beta_1,\gamma_0,\gamma_1)
=
(1.20,0.50,0,0).
\]

\subsection{ZA Poisson--Inverse Gaussian}

The final design uses a Poisson--inverse Gaussian (PIG) baseline
\citep{Willmot1987}. Unlike the preceding designs, the covariate is generated
from a standard normal distribution,
\[
x_i\sim N(0,1),
\]
and the target regression functions are
\[
\mu_i=\exp(0.60+0.40x_i),
\qquad
1+\phi_i=\exp(0.15+0.05x_i).
\]

When \(\phi_i>0\), the PIG baseline is generated hierarchically as
\[
\Lambda_i \sim \operatorname{IG}
\left(\mu_i,\frac{\mu_i}{\phi_i}\right),
\qquad
Y_i\mid\Lambda_i\sim\operatorname{Poisson}(\Lambda_i),
\]
where the second argument of the inverse Gaussian distribution is the shape
parameter. Its baseline probability of zero is
\[
\pi_i
=
\exp\left\{
\frac{1}{\phi_i}
\left(
1-\sqrt{1+2\mu_i\phi_i}
\right)
\right\}.
\]

The zero-alteration quantities are
\[
\delta_i^2=\frac{\phi_i}{1+\phi_i},
\qquad
f_{0i}(0)=(1+\phi_i)\pi_i-\phi_i,
\]
which give
\[
P(Y_i=0)
=
\delta_i^2+
(1-\delta_i^2)f_{0i}(0).
\]
If the observation is positive, it is drawn from the PIG baseline conditional
on being greater than zero.

The regression coefficients are
\[
(\beta_0,\beta_1,\gamma_0,\gamma_1)
=
(0.60,0.40,0.15,0.05).
\]
For this experiment, each simulated data set contains \(n=1000\)
observations; \(200\) Monte Carlo replications are used, with random seed
\(2025\).

\end{document}